\def\longrightharpoonup{\relbar\joinrel\rightharpoonup}
\def\longleftharpoondown{\leftharpoondown\joinrel\relbar}
\def\longrightleftharpoons{
  \mathop{
    \vcenter{
       \hbox{
       \ooalign{
          \raise1pt\hbox{$\longrightharpoonup\joinrel$}\crcr
  	  \lower1pt\hbox{$\longleftharpoondown\joinrel$}
	}
      }
    }
  }
}
\newtheorem{theorem}{Theorem}
\newtheorem{corollary}{Corollary}
\newtheorem{remark}{Remark}
\newtheorem{lemma}{Lemma}
\let\@fnsymbol\@arabic
\begin{document}
\title{Jarzynski's equality, fluctuation theorems, and variance reduction:
Mathematical analysis and numerical algorithms}
\date{}
\author{
  Carsten Hartmann \textsuperscript{1} \and  Christof Sch\"utte
  \textsuperscript{2,3} \and Wei Zhang \textsuperscript{3}
\and 
}

\footnotetext[1]{Institut f\"ur Mathematik, Brandenburgische Technische Universit\"at
Cottbus-Senftenberg, D-03046 Cottbus, Germany; carsten.hartmann@b-tu.de}
\footnotetext[2]{Institut f\"ur Mathematik, Freie Universit\"at Berlin, D-14195 Berlin, Germany; christof.schuette@fu-berlin.de}
\footnotetext[3]{Zuse Institute Berlin, D-14195 Berlin, Germany; wei.zhang@fu-berlin.de}

\maketitle
\begin{abstract}
  In this paper, we study Jarzynski's equality and fluctuation theorems
  for diffusion processes. While some of the results considered in the current
  work are known in the (mainly physics) literature, we review and generalize these nonequilibrium
  theorems using mathematical arguments,
  therefore enabling further investigations in the mathematical community.
  On the numerical side, variance reduction approaches such as importance sampling
  method are studied in order to compute free energy differences based on Jarzynski's equality.
\end{abstract}
\begin{keywords}
  Jarzynski's equality, fluctuation theorem, nonequilibrium dynamics,
  free energy difference, variance reduction, reaction coordinate
\end{keywords}
\tableofcontents

\section{Introduction}
\label{sec-intro}
Nonequilibrium work relations concern the behavior of
dynamical systems which are out of equilibrium under nonequilibrium
driving forces. Different from linear response theory~\cite{kubo1966,MARCONI2008} where systems are required to be close to
equilibrium, nonequilibrium work relations refer to a set of equalities which hold for general systems far away from equilibrium. 
And the most remarkable ones include Jarzynski's
equality~\cite{jarzynski1997-master,jarzynski1997} and Crooks's fluctuation
theorem~\cite{crooks1999}. 
In particular, Jarzynski's equality relates free energy differences
to the work that is applied to the system in order to drive the system from one state
to another within a finite period of time. Since its first report in
$1997$~\cite{jarzynski1997-master,jarzynski1997}, considerable amount of
research work has been done both numerically and experimentally to study the
computation of free energy differences,
by driving the system out of equilibrium using nonequilibrium forces~\cite{bias-error-2003,optimum-bias-2008,optimal-estimator-minh-2009,path_sampling_zuckerman2004,compare_free_energy_methods_2006}.
In recent years, inspired by the work~\cite{generalized-jarzynski-feedback-2010}, 
there has also been growing research interest to generalize both Jarzynski's equality and 
fluctuation theorems to nonequilibrium systems under discrete feedback controls~\cite{fluct-thm-2012,generalized-fluct-thm-feedback,detailed-fluct-thm-repeated-discrete-feedback-2010,nonequ-feedback-control-sagawa-2012}. 

Although Jarzynski's equality ensures that free energy differences can be
calculated by pulling the system using any control forces (protocols)
and the transition can be done within any finite time, the efficiency 
of Monte Carlo estimators for free energy computation based on Jarzynski's
equality crucially depends on the control protocols and therefore careful design is
needed. Various techniques, such as importance sampling in trajectory
space~\cite{path_sampling_zuckerman2004,optimum-bias-2008}, the use of both
forward and reversed trajectories~\cite{crooks-path-ensemble-pre2000,compare_free_energy_methods_2006,optimal-estimator-minh-2009,escorted-simulation2011}, the interacting particle system techniques~\cite{Rousset-stoltz-ips-2006}, and the escorted free energy
simulation method~\cite{pre-escort-fe-simulation2008,escorted-simulation2011},
have been proposed in order to improve the efficiency of Monte Carlo estimators.
Meanwhile, we note that several recent works have considered optimal control protocols which
minimize either average work or average heat~\cite{optimal-protocol2008,
optimal-finite-time-seifert-2007,extracting-work-feedback-2011,optimal-protocols-transport-2011}.
However, it is important to point out that, although these protocols
are optimal in certain sense and are physically interesting, they do not
necessarily provide the optimal Monte Carlo estimators in the sense of smallest variance. Readers are referred to ~\cite{bias-error-2003,biased-sampling-dellago-2005,Jarzynskia2008,dellago-hummer-2014,
compare_free_energy_methods_2006} for detailed discussions on related issues.

In the aforementioned literature, the concept of free energy is often defined as a function of
physical parameters, e.g., temperature, volume or pressure, which characterize
the macroscopic status of physical system.
This is termed as the alchemical transition case
in~\cite{tony-free-energy-compuation}. Free energy also
plays an important role in the study
of model reduction of complex (molecular) systems along a given reaction coordinate or
collective variables. In this context, free energy is often defined as a function
of reaction coordinate which in turn depends on the state of the system.
And calculating free energy differences along a given reaction coordinate has
attracted considerable attentions in the study of molecular
systems~\cite{peptide_cmd,enhanced_sampling2014,eric_recent_techniques,basic_ingredients_free_energy,tony-free-energy-compuation}.
Similar to the alchemical transition case, Jarzynski-like equalities and their
applications in free energy calculation have
been considered in~\cite{LELIEVRE2007, Tony-constrained-langevin2012}.

Motivated by the development of nonequilibrium work relations and their
potential applications, the goal of the current work is to understand these results from a
mathematical point of view, and to study variance reduction approaches, such
as importance sampling, in Monte Carlo methods for free energy calculation based on Jarzynski's equality.
In the alchemical transition case, we provide mathematical proofs of both Jarzynski's equality and
fluctuation theorems in a general setting based on the theory of stochastic
differential equations, making them more accessible for readers in
mathematical community (we refer to the previous study~\cite{Ge2008} for a 
mathematical proof of Jarzynski's equality). It is worth emphasizing that 
the nonequilibrium diffusion processes in our setting are allowed to be
irreversible and can have multiplicative noise. Furthermore, the Jarzynski's
equality is generalized to allow noisy control protocols. 
This generalization may be useful to study systems in
experiments~\cite{Hummer-szabo-2001}, since the implementations of control protocols through physical devices are typically imprecise to some extent.
As an advantage of our mathematical approach, it allows us to elucidate the
connection between thermodynamic integration identity and Jarzynski's
equality, which were usually considered as two distinct identities involving free energy differences. 
Such a connection is indeed known in physics community~\cite{Crooks1998}, but we believe it is helpful to present its mathematical derivation.
In the reaction coordinate case, we prove a fluctuation theorem and
derive a Jarzynski-like equality based on the fluctuation theorem. These results complement the
previous mathematical studies in~\cite{LELIEVRE2007,Tony-constrained-langevin2012}.
In both the alchemical transition case and the reaction coordinate case, following our previous studies~\cite{ce_paper2014,Hartmann2017-ptrf,Hartmann2016-Nonlinearity}, we
investigate variance reduction approaches in order to compute free energy differences 
using Monte Carlo method based on Jarzynski's equality.  

The paper is organized as follows. In Section~\ref{sec-alchemical}, we
study the Jarzynski's equality and fluctuation theorem in the alchemical
transition case. In particular, the cases when the control protocols are noisy will be considered. Information-theoretic formulation of 
Jarzynski's equality, the importance sampling method, as well as the
cross-entropy method will be discussed in the context of free energy calculation.
In Section~\ref{sec-coordinate}, we study the Jarzynski-like equality and the fluctuation
theorem in the reaction coordinate case. 
Information-theoretic formulations and variance reduction approaches will be discussed following a similar reasoning as in
Section~\ref{sec-alchemical}.
Two simple numerical examples are studied in detail in Section~\ref{sec-examples} to illustrate the
numerical issues of Monte Carlo estimators for free energy calculation as well as the variance reduction ideas proposed in this work.  
In Appendix~\ref{app-1} two asymptotic regimes of nonequilibrium
processes(fast mixing and slow driving) and, in particular, connections between Jarzynski's equality and thermodynamic integration
identity will be discussed.
Appendix~\ref{app-2} records the thermodynamic integration identity in the
reaction coordinate case. 
Appendix~\ref{app-3} contains an alternative proof of the fluctuation theorem
(Theorem~\ref{thm-fluct-relation}) in the alchemical transition case.
The proof of the fluctuation theorem in the reaction coordinate case
(Theorem~\ref{thm-fluct-relation-coordinate}) is given in Appendix~\ref{app-4}.

\section{Jarzynski's equality and fluctuation theorem: alchemical transition case}
\label{sec-alchemical}
 In this section, we study the Jarzynski's equality and the fluctuation theorem in the
 alchemical transition case. In Subsection~\ref{sub-sec-setup}, we introduce
 the dynamical systems which will be studied in this section and fix notations.
Jarzynski's equality and fluctuation theorem will be studied
from Subsection~\ref{sub-sec-jarzynski} to Subsection~\ref{subsec-fluct-thm}.
Finally, Information-theoretic formulation of Jarzynski's equality, as well as the
cross-entropy method will be discussed in Subsection~\ref{subsec-is} and
Subsection~\ref{subsec-ce}, respectively. 

\subsection{Mathematical setup}
\label{sub-sec-setup}
Consider the stochastic process $x(s) \in \mathbb{R}^n$ which satisfies the stochastic
differential equation (SDE)
\begin{align}
  \begin{split}
    d x(s) & =  b(x(s), \lambda(s))\, ds + \sqrt{2\beta^{-1}} \sigma(x(s),
    \lambda(s)) \,dw^{(1)}(s)\,,\quad s \ge 0\,,
\end{split}
  \label{dynamics-1}
\end{align}
where $\beta>0$ is a constant, $w^{(1)}(s)$ is a $d_1$-dimensional Brownian
motion with $d_1 \ge n$. Both the drift vector $b : \mathbb{R}^n \times \mathbb{R}^m \rightarrow \mathbb{R}^n$ and the matrix $\sigma : \mathbb{R}^n \times \mathbb{R}^m \rightarrow \mathbb{R}^{n \times d_1}$ are
  smooth functions depending on the \textit{control protocol} $\lambda(s) \in
  \mathbb{R}^m$, which we assume is governed by 
\begin{align}
  d\lambda(s) = f(\lambda(s), s)\,ds + \sqrt{2\epsilon}\, \alpha(\lambda(s),
  s)\,dw^{(2)}(s)\,.  \label{lambda-dynamics}
\end{align}
In the above, $\epsilon \ge 0$ is related to the intensity of the noise,
$\lambda(0)\in\mathbb{R}^m$ is fixed, $f : \mathbb{R}^m \times \mathbb{R}^+ \rightarrow
\mathbb{R}^m$, $\alpha : \mathbb{R}^m \times \mathbb{R}^+ \rightarrow \mathbb{R}^{m \times d_2}$
are smooth functions, and $w^{(2)}(s)$ is a $d_2$-dimensional Brownian motion
independent of $w^{(1)}(s)$. Notice that
in equation (\ref{lambda-dynamics}), functions $f, \alpha$ are assumed to be independent of $x(s)$,
and therefore the control protocol $\lambda(s)$ is of feedback form with
respect to itself but does not depend on the system state $x(s)$. More generally, 
in Subsection~\ref{subsec-fluct-thm}, we will also consider the case 
when the control protocol is of feedback form with respect to both processes $x(s)$
and $\lambda(s)$, i.e.,
\begin{align}
  d\lambda(s) = f(x(s), \lambda(s), s)\, ds + \sqrt{2\epsilon}\, \alpha(x(s),
  \lambda(s), s)\, dw^{(2)}(s)\,.
  \label{lambda-dynamics-full}
\end{align}
In both cases (\ref{lambda-dynamics}) and (\ref{lambda-dynamics-full}), the
infinitesimal generator of the dynamics $\lambda(s)$ for fixed $x(s)$ is given by 
\begin{align}
  \mathcal{L}_2 = f \cdot \nabla_\lambda + \epsilon\,(\alpha\alpha^T):\nabla^2_\lambda\,,
\label{l-lambda}
\end{align}
where $\nabla_\lambda$ denotes the gradient operator with respect to the
variable $\lambda \in \mathbb{R}^m$ and 
\begin{align*}
  (\alpha\alpha^T):\nabla^2_\lambda \phi \coloneqq \sum_{1 \le i,j \le m}
  (\alpha\alpha^T)_{ij} \frac{\partial^2\phi}{\partial\lambda_i\partial\lambda_j} \,,
\end{align*}
for a smooth function $\phi$ of variable $\lambda \in \mathbb{R}^m$.

For fixed parameter $\lambda\in \mathbb{R}^m$, the dynamics (\ref{dynamics-1}) reads
\begin{align}
    d x(s) & =  b(x(s), \lambda)\, ds + \sqrt{2\beta^{-1}} \sigma(x(s),
  \lambda)\, dw^{(1)}(s)\,,\quad s \ge 0\,,
  \label{dynamics-1-fixed-lambda}
\end{align}
and its infinitesimal generator is 
\begin{align}
  \mathcal{L}_{1} = b(\cdot, \lambda) \cdot \nabla +
  \frac{1}{\beta} a(\cdot, \lambda) : \nabla^2\,, 
  \label{l-1}
\end{align}
where the matrix $a=\sigma\sigma^T$ and $\nabla$ denotes the gradient operator with
respect to $x \in \mathbb{R}^n$. Correspondingly, the infinitesimal generator of
the joint process $(x(s), \lambda(s))$ is 
\begin{align}
  \mathcal{L} = \mathcal{L}_1 + \mathcal{L}_2\,,
  \label{l-x-lambda}
\end{align}
since the two Brownian motions $w^{(1)}(s)$, $w^{(2)}(s)$ are independent. 
Throughout this article, we assume that the drift and noise coefficients
satisfy appropriate Lipschitz and growth conditions, such that equations
(\ref{dynamics-1})-(\ref{lambda-dynamics-full}) have unique strong solutions~\cite{oksendalSDE}.
For each fixed parameter $\lambda\in \mathbb{R}^m$, we further assume that the process
$x(s)$ in (\ref{dynamics-1-fixed-lambda}) is ergodic and has a unique
invariant measure $\mu_\lambda$ satisfying 
\begin{align}
  \mu_\lambda(dx) = \rho(x,\lambda) dx\,, \quad \int_{\mathbb{R}^n}
  \rho(x,\lambda) dx = 1\,.
  \label{invariant-mu}
\end{align}
Furthermore, we introduce the potential 
\begin{align}
  V(x,\lambda) = -\beta^{-1} \ln \rho(x,\lambda) + \mbox{constant}\,,
  \label{potential}
\end{align}
where the constant only depends on the parameter $\lambda$. Equivalently,  
we have $\rho(x,\lambda) = \frac{1}{Z(\lambda)} e^{-\beta V(x,\lambda)}$,
and the normalization constant $Z(\lambda)$ is given by 
\begin{align}
  Z(\lambda) = \int_{\mathbb{R}^n} e^{-\beta V(x,\lambda)} dx \,.
  \label{normal-const}
\end{align}
The free energy of the system (\ref{dynamics-1-fixed-lambda}) for a fixed parameter $\lambda \in \mathbb{R}^m$ is defined as 
\begin{align}
F(\lambda) = -\beta^{-1}\ln Z(\lambda)\,.
\label{free-energy}
\end{align}

To proceed, we follow the previous study~\cite{effective_dyn_2017} and introduce the quantity
\begin{align}
  J_i(x,\lambda) = b_i - \frac{1}{\beta \rho}
  \sum_{j=1}^n \frac{\partial (a_{ij} \rho)}{\partial x_j}
\,, \quad 1 \le i \le n\,.
  \label{flux-def}
\end{align}
Note that both here and in the following, $J_i$, $b_i$ denote the $i$th component
of the vectors $J,\,b$, respectively. 
Also, the dependence of the functions on the variables $x$ and $\lambda$ will be omitted when no ambiguities arise. 
Since the probability measure $\mu_\lambda$ in (\ref{invariant-mu}) is the
invariant measure of the dynamics (\ref{dynamics-1-fixed-lambda}), 
we can verify that 
\begin{align}
  \mbox{div} \Big(J(x, \lambda) e^{-\beta V(x, \lambda)}\Big) \equiv 0\,, \quad \rho-\mbox{a.e.}\hspace{0.2cm} x \in \mathbb{R}^n\,,
  \label{div-j-zero}
\end{align}
for every $\lambda \in \mathbb{R}^m$. Thus, (\ref{dynamics-1}) can be written as 
\begin{align}
  d x_i(s) & =  J_i ds 
  + \frac{1}{\beta \rho} 
\sum_{j=1}^n \frac{\partial(a_{ij} \rho)}{\partial x_j} ds
+ \sqrt{2\beta^{-1}} \sum_{j=1}^{d_1}\sigma_{ij}\,dw^{(1)}_j(s)\,, \quad 1 \le i \le n\,,
\label{dynamics-1-q}
\end{align}
or, in vector form,
\begin{align}
  d x(s) & =  \Big(J 
  - a \nabla V + \frac{1}{\beta} \nabla \cdot a\Big)\,ds + \sqrt{2\beta^{-1}}
  \sigma\,dw^{(1)}(s)\,,
\label{dynamics-1-q-vector}
\end{align}
where $\nabla \cdot a$ denotes the vector in $\mathbb{R}^n$ with components 
\begin{align}
  (\nabla \cdot a)_i = \sum_{j=1}^n \frac{\partial a_{ij}}{\partial x_j}\,, \quad 1 \le
  i \le n\,.
  \label{nabla-dot}
\end{align}

Finally, we introduce two physical quantities which are associated to the
trajectories of the stochastic processes $x(s),\lambda(s)$ and will be
relevant for our subsequent study. 
  For each trajectory $x(s)$, $\lambda(s)$ of the dynamics (\ref{dynamics-1}), (\ref{lambda-dynamics-full}) on the time
  interval $[t_1,t_2] \subseteq [0,T]$, 
the \textit{change of internal energy} and the
  \textit{work} done to the system are defined as
\begin{align}
  \begin{split}
    \Delta \mathcal{U}_{(t_1,\,t_2)} =& V\big(x(t_2), \lambda(t_2)\big) - V\big(x(t_1), \lambda(t_1)\big)\,\\
		 W_{(t_1,\,t_2)} =& \int_{t_1}^{t_2} \nabla_{\lambda} V(x(s), \lambda(s)) \circ d\lambda(s) \,,
\end{split}
\label{u-q-w}
\end{align}
respectively. Note that, in (\ref{u-q-w}), the notation `$\circ$'
indicates that Stratonovich integration has been used. 
Using the relation between Stratonovich integration and Ito
integration, we can verify the alternative expression
\begin{align}
  \begin{split}
    W_{(t_1,\,t_2)} = & \int_{t_1}^{t_2} \Big(\nabla_{\lambda} V \cdot f + \epsilon\,
    \alpha\alpha^T:\nabla^2_\lambda V\Big)\big(x(s), \lambda(s),s\big)\, ds \\
    & + \sqrt{2\epsilon} \int_{t_1}^{t_2} \big(\alpha^T \nabla_{\lambda}
    V\big) \big(x(s), \lambda(s),s\big)\cdot dw^{(2)}(s)\,,
  \end{split}
  \label{q-w-1}
\end{align}
where Ito integration has been used. 

In the following, we will omit the subscripts and adopt the notation
$W=W_{(t_1,t_2)}$ when we consider the time interval $[t_1,t_2] = [0,T]$. Similarly, $W(t)$ will be used to denote
the work $W_{(0,t)}$ for $t \in [0,T]$.
\subsection{Jarzynski's equality under noisy control protocol}
\label{sub-sec-jarzynski}
Jarzynski's equality can be derived using different approaches~\cite{Jarzynskia2008}.
In this subsection, we will provide a simple argument to obtain the
(generalized) Jarzynski's equality, where the nonequilibrium processes $x(s)$ can be
irreversible for fixed parameter $\lambda$, the diffusion coefficient $\sigma$
in the equation (\ref{dynamics-1}) of $x(s)$ can be
position dependent (multiplicative noise), and the control protocol
$\lambda(s)$ can be stochastic ($\epsilon > 0$).
The proof has some similarities with the one in~\cite{Hummer-szabo-2001} using
the Feynman-Kac formula. As an advantage of our method, it allows us to figure out
the connections between thermodynamic integration and Jarzynski's
equality by analyzing the related PDEs. See Remark~\ref{rmk-1} and
Appendix~\ref{app-1} for more details.

Before starting, we first introduce the quantity 
\begin{align}
  \begin{split}
  g(x, \lambda, t) 
    =& \mathbf{E}_{x,\lambda,t} \Big(\varphi(x(T), \lambda(T))\,e^{-\beta
    W_{(t,T)}} \Big)\\
    =& \mathbf{E}_{x,\lambda,t} \Big[\varphi(x(T), \lambda(T))\,e^{-\beta \int_t^T \nabla_{\lambda} V\big(x(u),
  \lambda(u)\big) \circ\, d\lambda(u)}\Big]\,,
\end{split}
\label{g-def}
\end{align}
for fixed $x \in \mathbb{R}^n$, $\lambda \in \mathbb{R}^m$ and $0 \le t \le T$,
where $\varphi : \mathbb{R}^n \times \mathbb{R}^m \rightarrow \mathbb{R}$ is a
bounded and continuous test
function, $\mathbf{E}_{x,\lambda,t}$ denotes the conditional expectation with
respect to the path ensemble of the dynamics
(\ref{dynamics-1}), (\ref{lambda-dynamics-full}) starting from $x(t) = x$ and 
$\lambda(t) = \lambda$ at time $t$.
The following lemma is a direct application of the Feynman-Kac
formula~\cite{oksendalSDE}, and we provide its proof for completeness. 
\begin{lemma}
  Consider the dynamics $x(s), \lambda(s)$ given in (\ref{dynamics-1}), (\ref{lambda-dynamics-full}).
  The function $g$ defined in (\ref{g-def}) satisfies the equation
\begin{align}
  \begin{split}
    &\partial_t g + \mathcal{L}_{1} g + \mathcal{L}_2 g
    -2\epsilon\beta \big(\alpha^T\nabla_\lambda V\big) \cdot \big(\alpha^T \nabla_\lambda
    g\big)
   + \Big(\epsilon \beta^2 |\alpha^T\nabla_{\lambda} V|^2 - \beta \mathcal{L}_2V \Big) g =
  0\,, \quad 0 \le t < T\,,\\
    &g(\cdot,\cdot, T) = \varphi \,,
\end{split}
\label{g-pde}
\end{align}
where $\mathcal{L}_1$ is the operator defined in (\ref{l-1}), which 
is the infinitesimal generator of the dynamics (\ref{dynamics-1}) for 
$x(s)$ when $\lambda \in \mathbb{R}^m$ is fixed, and 
$\mathcal{L}_2$ is the operator defined in (\ref{l-lambda}) for the process $\lambda(s)$ when $x \in \mathbb{R}^n$ is fixed.
\label{lemma-g}
\end{lemma}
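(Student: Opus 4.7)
The plan is to prove the lemma by the standard Feynman-Kac argument, namely: convert the Stratonovich expression for $W_{(t,T)}$ into Ito form, form the exponential martingale, and apply Ito's product rule to identify the generator appearing in (\ref{g-pde}).

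First I would rewrite, using (\ref{q-w-1}), the accumulated work from $t$ to $s$ as
\begin{align*}
  W_{(t,s)} = \int_t^s \mathcal{L}_2 V\,\big(x(u),\lambda(u),u\big)\,du + \sqrt{2\epsilon} \int_t^s \big(\alpha^T\nabla_\lambda V\big)\big(x(u),\lambda(u),u\big)\cdot dw^{(2)}(u),
\end{align*}
since $f\cdot\nabla_\lambda V + \epsilon\,\alpha\alpha^T:\nabla^2_\lambda V = \mathcal{L}_2 V$. Setting $M(s) = \exp(-\beta W_{(t,s)})$, Ito's formula applied to the exponential yields
\begin{align*}
  dM(s) = M(s)\bigl[\bigl(-\beta\,\mathcal{L}_2V + \epsilon\beta^2|\alpha^T\nabla_\lambda V|^2\bigr)\,ds - \beta\sqrt{2\epsilon}\,(\alpha^T\nabla_\lambda V)\cdot dw^{(2)}(s)\bigr],
\end{align*}
where the Ito correction produces the $\epsilon\beta^2|\alpha^T\nabla_\lambda V|^2$ term from the quadratic variation of the $dw^{(2)}$ martingale.

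Next, by the Markov property of the joint process $(x(s),\lambda(s))$ together with the tower property of conditional expectations, the process $N(s) := M(s)\,g(x(s),\lambda(s),s)$ satisfies $N(s) = \mathbf{E}_{x,\lambda,t}[\varphi(x(T),\lambda(T))e^{-\beta W_{(t,T)}}\mid \mathcal{F}_s]$ and is therefore a martingale on $[t,T]$. Assuming the standard regularity $g\in C^{1,1,2}$, I would apply Ito's formula to $g(x(s),\lambda(s),s)$, obtaining a drift $(\partial_t g + \mathcal{L}_1 g + \mathcal{L}_2 g)\,ds$ and martingale increments with $dw^{(1)}$ and $dw^{(2)}$ parts (the latter being $\sqrt{2\epsilon}(\alpha^T\nabla_\lambda g)\cdot dw^{(2)}$). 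Then the Ito product rule gives
\begin{align*}
  dN(s) = M\,dg + g\,dM + d\langle M, g\rangle_s,
\end{align*}
where the cross-variation contributes precisely $-2\epsilon\beta\, M\,(\alpha^T\nabla_\lambda V)\cdot(\alpha^T\nabla_\lambda g)\,ds$, since only the $dw^{(2)}$ components of $dM$ and $dg$ are correlated (recall $w^{(1)}\perp w^{(2)}$).

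Collecting the $ds$-terms and using that the drift of the martingale $N$ must vanish (after dividing by $M>0$) yields exactly the PDE (\ref{g-pde}), and the terminal condition $g(\cdot,\cdot,T)=\varphi$ is built into the definition (\ref{g-def}). The main obstacle is really just bookkeeping: keeping the Stratonovich-to-Ito conversion, the exponential correction from $|\alpha^T\nabla_\lambda V|^2$, and the cross-variation term all with the correct signs and factors of $\epsilon$ and $\beta$; once the Ito form of $W$ is in hand, the rest is a routine application of Feynman-Kac. Regularity of $g$ could in principle be addressed by viewing the identity in reverse (verify that the solution of (\ref{g-pde}) admits the stochastic representation via Ito's formula and localization), but under the Lipschitz/growth assumptions already imposed, this is standard and I would simply cite~\cite{oksendalSDE}.
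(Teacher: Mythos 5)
Your proposal is correct and follows essentially the same route as the paper's proof: both convert the Stratonovich work to Ito form, apply Ito's formula to the exponential factor and to $g$, use the tower property to identify $e^{-\beta W_{(t,s)}}g(x(s),\lambda(s),s)$ as a martingale, and read off the PDE from the vanishing drift (the paper phrases the last step as dividing by $s-t$ and letting $s\to t+$, which is the same thing). All signs and factors of $\epsilon$ and $\beta$, including the cross-variation term $-2\epsilon\beta(\alpha^T\nabla_\lambda V)\cdot(\alpha^T\nabla_\lambda g)$, check out.
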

\begin{proof}
  Using the tower property of the conditional expectation, we have 
  \begin{align}
    \begin{split}
     g(x,\lambda, t) 
      =\, &\mathbf{E}_{x,\lambda,t}\Big[\varphi(x(T), \lambda(T))\,e^{-\beta \int_t^T \nabla_{\lambda} V\big(x(u),
  \lambda(u)\big) \circ\, d\lambda(u)}\Big] \\
  =\, &\mathbf{E}_{x,\lambda,t}\Big[e^{-\beta \int_t^{s} \nabla_{\lambda} V\big(x(u),
  \lambda(u)\big) \circ\, d\lambda(u)} g(x(s), \lambda(s), s) \Big]\,, 
\end{split}
\label{lemma-g-1}
  \end{align}
  for all time $s \in [t,T]$.
  Let us define $Y(s) = e^{-\beta \int_t^{s} \nabla_{\lambda} V\big(x(u),
  \lambda(u)\big) \circ\, d\lambda(u)}$. Changing Stratonovich
  integration into Ito integration as in (\ref{q-w-1}) and 
  applying Ito's formula to the process $Y(s)$, we get
  \begin{align*}
    dY(s) =  Y(s) \Big[-\beta\mathcal{L}_2V\, ds +  \epsilon
    \beta^2\big|\alpha^T \nabla_\lambda V\big|^2\, ds -
    \sqrt{2\epsilon}\beta \big(\alpha^T \nabla_\lambda V\big) \cdot \,dw^{(2)}(s)\Big]\,.
  \end{align*}
  In a similar way, applying Ito's formula to $g(x(s), \lambda(s), s)$, gives
  \begin{align*}
    dg = \big(\partial_t g + \mathcal{L}_1 g + \mathcal{L}_2 g\big) ds +
    \sqrt{2\beta^{-1}}
    \big(\sigma^T\nabla g\big) \cdot dw^{(1)}(s) + \sqrt{2\epsilon}
    \big(\alpha^T\nabla_\lambda g\big)\cdot \,dw^{(2)}(s)\,.
  \end{align*}
  Note that, here and in the following, we drop the dependence of the
  functions on the states $x(s)$, $\lambda(s)$ and the time $s$ in order to simplify notation.
  Applying Ito's formula to the product $Y(s) g(x(s), \lambda(s), s)$, we obtain
  \begin{align}
    \begin{split}
    & e^{-\beta \int_t^{s} \nabla_{\lambda} V\big(x(u),
  \lambda(u)\big) \circ\, d\lambda(u)} g(x(s), \lambda(s), s) \\
  =\,& g(x,\lambda, t) + \int_t^{s} Y(u) 
  \Big(-\beta\mathcal{L}_2 V +  \epsilon \beta^2|\alpha^T \nabla_\lambda V|^2 \Big)
  g(x(u), \lambda(u), u)\, du \\
    & + \int_t^{s} Y(u) \big(\partial_t g + \mathcal{L}_1 g + \mathcal{L}_2 g\big)\,du  - 
  2\epsilon \beta \int_t^{s} Y(u) \big(\alpha^T\nabla_\lambda V\big)\cdot \big(\alpha^T\nabla_\lambda g\big)\,du + M(s)\,,
\end{split}
\label{lemma-yg}
  \end{align}
  where $M(s)$ is a (local) martingale. 
  Taking expectations in (\ref{lemma-yg}) and using (\ref{lemma-g-1}), we get
  \begin{align*}
    \mathbf{E}_{x,\lambda,t}\bigg[
    &-\beta\int_t^{s} Y(u) 
      (\mathcal{L}_2 V) g\, du + \epsilon\beta^2 \int_t^s Y(u) |\alpha^T\nabla_\lambda V|^2 g\, du \\
    & + \int_t^{s} Y(u) \big(\partial_t g + \mathcal{L}_1 g + \mathcal{L}_2
      g\big)\,du  - 2\epsilon \beta \int_t^{s} Y(u)
      \big(\alpha^T\nabla_\lambda V\big)\cdot \big(\alpha^T\nabla_\lambda g\big)\,du 
    \bigg] = 0 \,.
  \end{align*}
  Notice that $Y(t)=1$, $x(t) = x$ and $\lambda(t) = \lambda$ at time $t$.
  Dividing the last equation by $(s-t)$ and letting $s \rightarrow t+$, we obtain 
  (\ref{g-pde}) which concludes the proof. 
\end{proof}
Now we can prove the Jarzynski equality as stated below.
\begin{theorem}[Generalized Jarzynski equality]
  Let 
  $x(s)$ and $\lambda(s)$ be given by 
  (\ref{dynamics-1}) and (\ref{lambda-dynamics}), respectively. 
  Then, for any bounded smooth test function $\varphi : \mathbb{R}^n \times \mathbb{R}^m \rightarrow \mathbb{R}$, 
  we have 
\begin{align}
  \mathbf{E}_{\lambda(0),0}\Big[\varphi(x(t), \lambda(t))\,e^{-\beta
  W(t)}\,\Big]  = \mathbf{E}_{\lambda(0),0}\Big[ e^{-\beta \big(F(\lambda(t)) -
  F(\lambda(0))\big)} \mathbf{E}_{\mu_{\lambda(t)}} \varphi(\cdot, \lambda(t))\Big] \,,
  \label{generalized-jarzynski-varphi}
\end{align}
  where $F(\cdot)$ is the free energy in (\ref{free-energy}) and
  $W(t)=W_{(0,t)}$ is the work defined in (\ref{u-q-w}) on the time interval
  $[0,t]$. $\mathbf{E}_{\mu_{\lambda(t)}}$ denotes the expectation with
  respect to the probability measure $\mu_{\lambda(t)}$ on $\mathbb{R}^n$. And
  $\mathbf{E}_{\lambda(0), 0}$ denotes the conditional expectation over the
  realizations of $x(s)$ and $\lambda(s)$, starting from fixed $\lambda(0)\in \mathbb{R}^m$ and the initial distribution
$x(0) \sim \mu_{\lambda(0)}$.
In particular, choosing $\varphi\equiv 1$, we have 
  \begin{align}
    \mathbf{E}_{\lambda(0), 0}\Big[e^{-\beta W(t)}\Big] = \mathbf{E}_{\lambda(0),0}
    \Big[e^{-\beta \big(F(\lambda(t)) - F(\lambda(0))\big)}\Big]\,.
  \label{generalized-jarzynski}
  \end{align}
\label{thm-1}
\end{theorem}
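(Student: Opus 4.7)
The plan is to apply Lemma~\ref{lemma-g} with the terminal time $T$ replaced by $t$, so that
\begin{equation*}
g(x,\lambda,s) := \mathbf{E}_{x,\lambda,s}\!\left[\varphi(x(t),\lambda(t))\,e^{-\beta W_{(s,t)}}\right], \qquad 0 \le s \le t,
\end{equation*}
satisfies the backward PDE (\ref{g-pde}) with terminal data $\varphi$. Since the theorem assumes $x(0)\sim\mu_{\lambda(0)}$, the left-hand side of (\ref{generalized-jarzynski-varphi}) equals $\int g(x,\lambda(0),0)\,\rho(x,\lambda(0))\,dx$, which, using $\rho(x,\lambda) = e^{\beta F(\lambda)-\beta V(x,\lambda)}$, I would rewrite as $e^{\beta F(\lambda(0))}\,u(\lambda(0),0)$, where
\begin{equation*}
u(\lambda,s) := \int_{\mathbb{R}^n} g(x,\lambda,s)\,e^{-\beta V(x,\lambda)}\,dx.
\end{equation*}

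The key claim is that $u$ solves the backward equation $\partial_s u + \mathcal{L}_2 u = 0$ with terminal data
\begin{equation*}
u(\lambda,t) = \int \varphi(x,\lambda)\,e^{-\beta V(x,\lambda)}\,dx = e^{-\beta F(\lambda)}\,\mathbf{E}_{\mu_\lambda}\varphi(\cdot,\lambda).
\end{equation*}
Granted this, Ito's formula applied to $u(\lambda(s),s)$ together with (\ref{lambda-dynamics}) shows that $u(\lambda(s),s)$ is a (local) martingale, hence $u(\lambda(0),0) = \mathbf{E}_{\lambda(0),0}[u(\lambda(t),t)]$. Substituting the terminal identity and multiplying by $e^{\beta F(\lambda(0))}$ yields (\ref{generalized-jarzynski-varphi}) at once, and the special case (\ref{generalized-jarzynski}) follows by choosing $\varphi\equiv 1$.

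To establish $\partial_s u + \mathcal{L}_2 u = 0$, I would differentiate under the integral and substitute the PDE (\ref{g-pde}) for $\partial_s g$. The $\mathcal{L}_1 g$ contribution vanishes after integration by parts in $x$: the invariance (\ref{invariant-mu}) of $\mu_\lambda$ for the $\lambda$-frozen dynamics is equivalent to $\mathcal{L}_1^*\,e^{-\beta V(\cdot,\lambda)} = 0$. For the remaining terms one passes the $\lambda$-derivatives in $\mathcal{L}_2$ through the $x$-integral using $\nabla_\lambda e^{-\beta V} = -\beta\nabla_\lambda V\,e^{-\beta V}$; the first- and second-order corrections generated this way produce exactly the three compensating quantities $-\beta(\mathcal{L}_2 V)g$, $\epsilon\beta^2|\alpha^T\nabla_\lambda V|^2 g$, and $-2\epsilon\beta(\alpha^T\nabla_\lambda V)\cdot(\alpha^T\nabla_\lambda g)$ already present in (\ref{g-pde}), so everything cancels term by term. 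The main obstacle is precisely this bookkeeping: the quadratic $|\alpha^T\nabla_\lambda V|^2$ piece appears with different numerical coefficients from $\partial_s g$ and from $\mathcal{L}_2 u$, and the cancellation succeeds only because the Stratonovich-to-Ito conversion (\ref{q-w-1}) contributes the correct prefactors. The integration by parts itself is routine under the growth and decay hypotheses tacitly imposed in Subsection~\ref{sub-sec-setup}.
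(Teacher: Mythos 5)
Your proposal is correct and follows essentially the same route as the paper: the quantity $u(\lambda,s)=\int g(x,\lambda,s)e^{-\beta V(x,\lambda)}dx$ is exactly the integral the paper tracks along the path $\lambda(s)$, and your two-step argument (first deriving $\partial_s u+\mathcal{L}_2u=0$ via Lemma~\ref{lemma-g} and $\mathcal{L}_1^*e^{-\beta V}=0$, then applying It\^o's formula to $u(\lambda(s),s)$) is just a repackaging of the paper's direct It\^o computation on $\int e^{-\beta V(x,\lambda(s))}g(x,\lambda(s),s)\,dx$, with the same term-by-term cancellation and the same essential use of the independence of $\lambda(s)$ from $x(s)$.
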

\begin{proof}
  It suffices to prove the equality (\ref{generalized-jarzynski-varphi}) for $t=T$. From the definitions of the function $g$ 
  in (\ref{g-def}) and the function $Z(\lambda)$ in (\ref{normal-const}), 
  it is easy to see that (\ref{generalized-jarzynski-varphi}) is equivalent to 
\begin{align}
\int_{\mathbb{R}^n} g(x, \lambda(0), 0) e^{-\beta V(x, \lambda(0))} dx = 
  \mathbf{E}_{\lambda(0),0}\bigg[\int_{\mathbb{R}^n} g(x,\lambda(T), T)\,
  e^{-\beta V(x, \lambda(T))} \, dx\bigg]\,.
\label{thm-1-eqn-1}
\end{align}
Noticing that the process $\lambda(s)$ in (\ref{lambda-dynamics}) is independent
of $x(s)$ and motivated by the form of (\ref{thm-1-eqn-1}), we consider the quantity 
$\int_{\mathbb{R}^n} e^{-\beta V(x,\lambda(s))} g(x,\lambda(s), s) dx$ as a
function of time $s$.
Applying Ito's formula, we compute 
\begin{align}
  \begin{split}
    & d\bigg[\int_{\mathbb{R}^n} e^{-\beta V(x,\lambda(s))} g(x,\lambda(s), s) dx \bigg] \\
  = & \bigg[\int_{\mathbb{R}^n} e^{-\beta V(x,\lambda(s))}
    \Big( \partial_t g\,+ \mathcal{L}_2 g + \big(
    \epsilon \beta^2 |\alpha^T\nabla_{\lambda} V|^2 - \beta \mathcal{L}_2V  \big) g
  - 2\epsilon\beta \big(\alpha^T\nabla_\lambda V\big) \cdot
\big(\alpha^T\nabla_\lambda g\big) \Big)\, dx\bigg] ds \\
&  + \sqrt{2\epsilon} \bigg[\int_{\mathbb{R}^n} e^{-\beta V(x,\lambda(s))}
 \alpha^T\big(\nabla_\lambda g - \beta \nabla_\lambda V\, g\big) dx\,\bigg]\cdot \,dw^{(2)}(s)\,,
 \end{split}
 \end{align}
where the functions under the integral above are evaluated at $(x,\lambda(s), s)$.
Since the function $g$ satisfies the equation (\ref{g-pde}) in Lemma~\ref{lemma-g}, we find
 \begin{align}
   \begin{split}
    & d\bigg[\int_{\mathbb{R}^n} e^{-\beta V(x,\lambda(s))} g(x,\lambda(s), s) dx \bigg] \\
 =& - \bigg[\int_{\mathbb{R}^n}e^{-\beta V(x,\lambda(s))} \mathcal{L}_1 g \,
   dx\bigg]\,ds
 + \sqrt{2\epsilon} \bigg[\int_{\mathbb{R}^n} e^{-\beta V(x,\lambda(s))}
 \alpha^T\big(\nabla_\lambda g - \beta \nabla_\lambda V\, g\big) dx\,\bigg]\cdot \,dw^{(2)}(s)\,.
 \end{split}
  \label{int-identity}
\end{align}
Recalling that $\mu_{\lambda}$ in (\ref{invariant-mu}) and $\mathcal{L}_1$ are the invariant measure
and the infinitesimal generator of dynamics (\ref{dynamics-1-fixed-lambda}),
we have $\mathcal{L}^*_{1} \big(e^{-\beta V(x,\lambda)}\big) = 0$, where
$\mathcal{L}^*_{1}$ is the formal $L^2$ adjoint of $\mathcal{L}_{1}$. Integrating by parts, we
conclude that the first term on the right hand side of equation (\ref{int-identity})
vanishes and therefore 
\begin{align*}
  d\bigg[\int_{\mathbb{R}^n} e^{-\beta V(x,\lambda(s))} g(x,\lambda(s), s) dx\bigg]
 = \sqrt{2\epsilon} \bigg[\int_{\mathbb{R}^n} e^{-\beta V(x,\lambda(s))}
 \alpha^T\big(\nabla_\lambda g - \beta \nabla_\lambda V\, g\big) dx\,\bigg]\cdot \,dw^{(2)}(s)\,.
\end{align*}
Taking expectation and noticing that $g(\cdot, \cdot, T) \equiv \varphi$,
we obtain (\ref{thm-1-eqn-1}) and the equality (\ref{generalized-jarzynski-varphi}) readily follows. 
\end{proof}
\begin{remark}
  \begin{enumerate}
    \item
  While Lemma~\ref{lemma-g} holds in both cases when the control protocol $\lambda(s)$
  satisfies either dynamics (\ref{lambda-dynamics}) or dynamics (\ref{lambda-dynamics-full}), 
  a close examination reveals that the proof of Theorem~\ref{thm-1} above is valid
  only when the process $\lambda(s)$ is independent of the process $x(s)$,
  i.e., when $\lambda(s)$ satisfies dynamics (\ref{lambda-dynamics}).
\item
 When $\epsilon = 0$, the control protocol is deterministic and the work becomes
\begin{align}
  W(t) = \int_0^{t} \nabla_{\lambda} V\big(x(s), \lambda(s)\big)\cdot \dot{\lambda}(s)\,ds 
    =  \int_0^{t} \nabla_{\lambda} V\big(x(s), \lambda(s)\big)\cdot f(\lambda(s),s)\, ds \,.
    \label{w-eps-0}
\end{align}
In this case, we recover the standard Jarzynski equality~\cite{jarzynski1997-master,jarzynski1997,Jarzynskia2008}, 
since (\ref{generalized-jarzynski}) becomes 
\begin{align}
  \mathbf{E}_{\lambda(0), 0}\Big[e^{-\beta W(t)}\Big] = e^{-\beta \Delta F(t)} \,,
  \label{jarzynski-1}
\end{align}
where 
      \begin{align}
	\Delta F(t) = F\big(\lambda(t)\big) - F\big(\lambda(0)\big)
	\label{delta-f}
      \end{align}
      is the free
energy difference and the conditional expectation is taken
with respect to dynamics (\ref{dynamics-1}), starting from the equilibrium
distribution $\mu_{\lambda(0)}$.
\item
  Besides the Jarzynski's equality, the thermodynamic integration identity is another
      well known representation of the free energy that can be used to calculate free energy differences~\cite{frenkel2001understanding}.
      Based on the argument in this subsection, in Appendix~\ref{app-1} we
      will derive the thermodynamic integration identity from Jarzynski's equality, and therefore 
      provide connections of these two methods. 
\end{enumerate}
\label{rmk-1}
\end{remark}
In~\cite{pre-escort-fe-simulation2008}, the authors proposed the escorted free
energy calculation method based on an identity for dynamics involving an extra force term. 
In the following, we briefly discuss this identity and provide a proof of it using the same argument of Theorem~\ref{thm-1}.
Let us consider the dynamics 
\begin{align}
  \begin{split}
    d\bar{x}(s) & =  b(\bar{x}(s), \lambda(s))\, ds + u(\bar{x}(s), \lambda(s))\,ds
    +\sqrt{2\beta^{-1}} \sigma(\bar{x}(s), \lambda(s)) \,dw^{(1)}(s)\,,\quad s \ge 0\,,
\end{split}
  \label{dynamics-1-escorted}
\end{align}
where $u: \mathbb{R}^{n}\times \mathbb{R}^{m} \rightarrow \mathbb{R}^{n}$ is a smooth vector field with compact support and $\lambda(s)$ satisfies (\ref{lambda-dynamics}).
We define the modified work 
\begin{align}
  \overline{W}_{(t_1, t_2)} = \int_{t_1}^{t_2} \nabla_\lambda V(\bar{x}(s), \lambda(s)) \circ d\lambda(s) +
  \int_{t_1}^{t_2} \Big(u\cdot \nabla V - \frac{1}{\beta} \nabla\cdot
  u\Big)(\bar{x}(s),\lambda(s))\,ds \,,
  \label{work-w-escorted}
\end{align}
for $0 \le t_1 \le t_2 \le T$.
\begin{corollary}
  Let 
  $\bar{x}(s)$ and $\lambda(s)$ be given by (\ref{dynamics-1-escorted}) and (\ref{lambda-dynamics}), respectively. 
  Then, for any bounded smooth test function $\varphi : \mathbb{R}^n \times \mathbb{R}^m \rightarrow \mathbb{R}$, 
  we have 
\begin{align}
  \overline{\mathbf{E}}_{\lambda(0),0}\Big[\varphi(\bar{x}(t), \lambda(t))\,e^{-\beta
  \overline{W}(t)}\,\Big]  = \overline{\mathbf{E}}_{\lambda(0),0}\Big[ e^{-\beta \big(F(\lambda(t)) -
  F(\lambda(0))\big)} \mathbf{E}_{\mu_{\lambda(t)}} \varphi(\cdot, \lambda(t))\Big] \,,
  \label{jarzynski-varphi-escorted}
\end{align}
$\forall~0 \le t \le T$, where $F(\cdot)$ is the free energy in (\ref{free-energy}) and
  $\overline{W}(t)=\overline{W}_{(0,t)}$ is the modified work in (\ref{work-w-escorted}). $\mathbf{E}_{\mu_{\lambda(t)}}$ denotes the expectation with
  respect to the probability measure $\mu_{\lambda(t)}$ on $\mathbb{R}^n$, 
  while $\overline{\mathbf{E}}_{\lambda(0), 0}$ denotes the conditional expectation over the
  realizations of $\bar{x}(s)$ and $\lambda(s)$, starting from fixed
  $\lambda(0)\in \mathbb{R}^m$ and the initial distribution $\bar{x}(0) \sim \mu_{\lambda(0)}$.
In particular, choosing $\varphi\equiv 1$, we have 
  \begin{align}
    \overline{\mathbf{E}}_{\lambda(0), 0}\Big[e^{-\beta \overline{W}(t)}\Big]
    = \overline{\mathbf{E}}_{\lambda(0),0}
    \Big[e^{-\beta \big(F(\lambda(t)) - F(\lambda(0))\big)}\Big]\,.
  \label{jarzynski-escorted}
  \end{align}
\label{corollary-1-escorted}
\end{corollary}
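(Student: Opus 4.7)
The strategy is to mimic the proof of Theorem~\ref{thm-1} line by line, keeping track of the modifications introduced by the extra drift $u$ and the extra contribution $u\cdot \nabla V - \beta^{-1}\nabla\cdot u$ in the work $\overline{W}$. As the first step, I would define the analog of the auxiliary function $g$ in (\ref{g-def}), namely
\begin{align*}
\bar{g}(x,\lambda,t) = \overline{\mathbf{E}}_{x,\lambda,t}\Big[\varphi(\bar{x}(T),\lambda(T))\,e^{-\beta\overline{W}_{(t,T)}}\Big],
\end{align*}
and derive an analog of the Feynman-Kac equation (\ref{g-pde}) for $\bar{g}$. The generator of $\bar{x}(s)$ for fixed $\lambda$ is $\bar{\mathcal{L}}_1 = \mathcal{L}_1 + u\cdot\nabla$, and by the same Stratonovich-to-Ito conversion as in (\ref{q-w-1}) the exponential $Y(s)=e^{-\beta\overline{W}_{(t,s)}}$ acquires an additional bounded-variation contribution $-\beta(u\cdot\nabla V-\beta^{-1}\nabla\cdot u)\,ds$ relative to the calculation in Lemma~\ref{lemma-g}. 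Repeating the product-rule argument for $Y(s)\bar{g}(\bar{x}(s),\lambda(s),s)$ verbatim then yields the PDE
\begin{align*}
&\partial_t\bar{g}+\mathcal{L}_1\bar{g}+u\cdot\nabla\bar{g}+\mathcal{L}_2\bar{g}-2\epsilon\beta\bigl(\alpha^T\nabla_\lambda V\bigr)\cdot\bigl(\alpha^T\nabla_\lambda\bar{g}\bigr)\\
&\quad+\Bigl(\epsilon\beta^2|\alpha^T\nabla_\lambda V|^2-\beta\mathcal{L}_2 V-\beta\bigl(u\cdot\nabla V-\beta^{-1}\nabla\cdot u\bigr)\Bigr)\bar{g}=0,
\end{align*}
with terminal condition $\bar{g}(\cdot,\cdot,T)=\varphi$.

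Second, I would mimic the computation in (\ref{int-identity}) by applying Ito's formula to the map $s\mapsto\int_{\mathbb{R}^n}e^{-\beta V(x,\lambda(s))}\bar{g}(x,\lambda(s),s)\,dx$, using that $\lambda(s)$ still satisfies (\ref{lambda-dynamics}) and is hence independent of $\bar{x}(s)$. Substituting the PDE for $\bar{g}$ cancels every $\mathcal{L}_2$-term and every $\alpha$-term exactly as before, leaving a drift coefficient of the form $-\int e^{-\beta V}\bigl(\mathcal{L}_1\bar{g}+u\cdot\nabla\bar{g}\bigr)\,dx+\beta\int e^{-\beta V}\bigl(u\cdot\nabla V-\beta^{-1}\nabla\cdot u\bigr)\bar{g}\,dx$ together with the same $dw^{(2)}$-term as in (\ref{int-identity}).

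The decisive (though routine) step is then the integration-by-parts cancellation. The identity $\mathcal{L}_1^*e^{-\beta V}=0$ eliminates the $\mathcal{L}_1\bar{g}$ contribution as in the proof of Theorem~\ref{thm-1}, while the new piece involving $u\cdot\nabla\bar{g}$ yields, after an integration by parts,
\begin{align*}
-\int_{\mathbb{R}^n}e^{-\beta V}(u\cdot\nabla\bar{g})\,dx=-\beta\int_{\mathbb{R}^n}\bigl(u\cdot\nabla V-\beta^{-1}\nabla\cdot u\bigr)e^{-\beta V}\bar{g}\,dx,
\end{align*}
which is exactly the negative of the extra drift produced by the modified work. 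The two contributions cancel, so the process is again a local martingale. Taking expectations and using the terminal condition $\bar{g}(\cdot,\cdot,T)=\varphi$ together with the initial distribution $\bar{x}(0)\sim\mu_{\lambda(0)}$ reproduces the analog of (\ref{thm-1-eqn-1}) and hence (\ref{jarzynski-varphi-escorted}); setting $\varphi\equiv 1$ gives (\ref{jarzynski-escorted}).

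The only point that requires care is precisely this cancellation, which explains why the work (\ref{work-w-escorted}) is defined with the particular correction $u\cdot\nabla V-\beta^{-1}\nabla\cdot u$: it is chosen so that $\beta\bigl(u\cdot\nabla V-\beta^{-1}\nabla\cdot u\bigr)e^{-\beta V}=(u\cdot\nabla)^*e^{-\beta V}$, making the escorted drift compatible with the invariant density. Aside from imposing the usual regularity on $u$ (which is ensured by the compact-support assumption) so that Ito's formula and the integration by parts are justified, no new ideas beyond those of Theorem~\ref{thm-1} are needed.
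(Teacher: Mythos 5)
Your proposal is correct and follows essentially the same route as the paper: the paper likewise derives the modified Feynman--Kac PDE (\ref{g-pde-escorted}) for the escorted $g$, applies Ito's formula to $\int_{\mathbb{R}^n} e^{-\beta V(x,\lambda(s))} g\,dx$, and kills the drift term by integration by parts using the compact support of $u$. Your explicit verification that $-\int e^{-\beta V}(u\cdot\nabla \bar g)\,dx$ cancels the $\beta(u\cdot\nabla V-\beta^{-1}\nabla\cdot u)$ contribution is exactly the cancellation the paper invokes (somewhat more tersely) in (\ref{int-identity-escorted}).
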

\begin{proof}
  We only sketch the proof since it is similar to the proof of Theorem~\ref{thm-1}.
  Similar to (\ref{g-def}), we introduce the function 
\begin{align}
  \begin{split}
  g(x, \lambda, t) 
    =& \overline{\mathbf{E}}_{x,\lambda,t} \Big(\varphi(\bar{x}(T), \lambda(T))\,e^{-\beta
    \overline{W}_{(t,T)}} \Big)\,,
\end{split}
\label{g-def-escorted}
\end{align}
  where $\bar{x}(t) = x\in \mathbb{R}^n$, $\lambda(t) = \lambda\in \mathbb{R}^m$ and $t\in[0,T]$.
  Using the same argument of Lemma~\ref{lemma-g}, we can verify that $g$
  satisfies the PDE 
\begin{align}
  \begin{split}
    &\partial_t g + \mathcal{L}_{1} g + \mathcal{L}_2 g
    + u\cdot\nabla g
    -2\epsilon\beta \big(\alpha^T\nabla_\lambda V\big) \cdot \big(\alpha^T
    \nabla_\lambda g\big)\\
    &+ \Big(\epsilon \beta^2 |\alpha^T\nabla_{\lambda} V|^2 - \beta
    \mathcal{L}_2V -\beta u\cdot \nabla V + \nabla\cdot u\Big) g = 0\,, \quad 0 \le t < T\,,\\
\end{split}
\label{g-pde-escorted}
\end{align}
with the terminal condition $g(\cdot,\cdot, T) = \varphi$.
  Applying Ito's formula as we did in Theorem~\ref{thm-1}, we can get
 \begin{align}
   \begin{split}
    & d\bigg[\int_{\mathbb{R}^n} e^{-\beta V(x,\lambda(s))} g(x,\lambda(s),
     s)\,dx \bigg] \\
     =& - \bigg[\int_{\mathbb{R}^n}e^{-\beta V(x,\lambda(s))}
     \Big(\mathcal{L}_1 g + u\cdot \nabla g - (\beta u\cdot\nabla V)g +
     (\nabla\cdot u) g\Big) \, dx\bigg]\,ds\\
     & + \sqrt{2\epsilon} \bigg[\int_{\mathbb{R}^n} e^{-\beta V(x,\lambda(s))}
 \alpha^T\big(\nabla_\lambda g - \beta \nabla_\lambda V\, g\big) dx\,\bigg]\cdot \,dw^{(2)}(s)\,.
 \end{split}
  \label{int-identity-escorted}
\end{align}
Since $u$ is smooth and has compact support, the first term on the right hand
  side above vanishes using integration by parts formula. 
  (\ref{jarzynski-varphi-escorted}) is obtained following the same argument in the proof of Theorem~\ref{thm-1}.
\end{proof}

\subsection{Fluctuation theorem}
\label{subsec-fluct-thm}
In this subsection we study the fluctuation theorem in the alchemical
transition case. Note that the main result below (Theorem~\ref{thm-fluct-relation}) has been obtained in~\cite{Chetrite2008}, where
comprehensive analysis as well as several concrete examples have been
presented. The main purpose of this subsection is to provide a both concise and
mathematical derivation which directly leads to
Theorem~\ref{thm-fluct-relation}. A different proof which is similar (but
shorter) to the argument in~\cite{Chetrite2008} can be found in Appendix~\ref{app-3}.

First of all, we introduce the ``reversed'' dynamics, which is closely related to the dynamics $x(s)$ in (\ref{dynamics-1}), or its vector form (\ref{dynamics-1-q-vector}). 
Notice that different reversals of stochastic dynamics have been studied in the literature
in both mathematics and physics communities. We refer to~\cite{haussmann1986,Chetrite2008} and the references therein.
In our case, we consider the dynamics $x^R(s)$ on the time interval $s \in [0,T]$, which is governed by 
\begin{align}
  d x^R(s) & = \Big(-J-a\nabla V + \frac{1}{\beta}\nabla \cdot a\Big)\big(x^R(s),
  \lambda^R(s)\big) ds + \sqrt{2\beta^{-1}} \sigma\big(x^R(s),
  \lambda^R(s)\big)\,dw^{(1)}(s)\,,
\label{dynamics-1-reversed}
\end{align}
where $\lambda^R(s)$ is the control protocol satisfying the SDE
\begin{align}
  \begin{split}
  d\lambda^R(s) =& -f\big(x^R(s), \lambda^R(s), T-s\big)\, ds +
    2\epsilon \big(\nabla_\lambda \cdot (\alpha\alpha^T)\big) \big(x^R(s),
  \lambda^R(s), T-s\big)\,ds \\
  &+ \sqrt{2\epsilon}\,\alpha\big(x^R(s), \lambda^R(s),T-s\big)\, dw^{(2)}(s)\,.
\end{split}
  \label{lambda-dynamics-inverse-full}
\end{align}
Comparing to dynamics (\ref{lambda-dynamics-full}),
we note that there is an extra term $\nabla_\lambda \cdot (\alpha\alpha^T)$ in
(\ref{lambda-dynamics-inverse-full}). 
The infinitesimal generator of the system (\ref{dynamics-1-reversed}) and (\ref{lambda-dynamics-inverse-full}) is given by 
\begin{align}
  \begin{split}
  \mathcal{L}^R = & \Big(-J-a\nabla V + \frac{1}{\beta} \nabla\cdot a\Big) \cdot \nabla
    + \frac{1}{\beta} a : \nabla^2 + \Big(2\epsilon \nabla_\lambda \cdot
    (\alpha\alpha^T) - f\Big) \cdot \nabla_\lambda + \epsilon\,\alpha\alpha^T :\nabla^2_\lambda\,\\
  =& \mathcal{L}^R_1 + \mathcal{L}^R_2 \,,
\end{split}
  \label{l-reversed}
\end{align}
where $\mathcal{L}^R_1$ is the infinitesimal generator of the dynamics
(\ref{dynamics-1-reversed}) when $\lambda^R(s)$ is fixed, and
similarly $\mathcal{L}^R_2$ is the infinitesimal generator of the dynamics
(\ref{lambda-dynamics-inverse-full}) when $x^R(s)$ is fixed.
We will also use the notation $\mathcal{L}^R_{(x, \lambda,T-t)}$ to emphasize
that functions in the operator (\ref{l-reversed}) are evaluated at $(x,
\lambda,T-t)$.

The following fluctuation result
 concerns the relation between dynamics
(\ref{dynamics-1-q-vector}), (\ref{lambda-dynamics-full}) and the reversed ones (\ref{dynamics-1-reversed}), (\ref{lambda-dynamics-inverse-full}).
\begin{theorem}
  Let $0 \le t' < t \le T$, $x,x' \in \mathbb{R}^n$ and $\lambda, \lambda' \in
  \mathbb{R}^m$. For any continuous function $\eta \in C\big(\mathbb{R}^n \times
  \mathbb{R}^m \times [0,T]\big)$ with compact support, we have 
\begin{align}
  \begin{split}
  &e^{-\beta V(x',\lambda')}\,
    \mathbf{E}^R_{x',\lambda',t'}\bigg[\exp\bigg(\int_{t'}^t \eta\big(x^R(s),
      \lambda^R(s), T-s\big)
      ds\bigg) \delta\big(x^R(t)-x\big)\,\delta\big(\lambda^{R}(t)-\lambda\big)\bigg]\\
    =&e^{-\beta V(x,\lambda)}\,\mathbf{E}_{x,\lambda,T-t}\bigg[e^{-\beta
    \mathcal{W}} \exp\bigg(\int_{T-t}^{T-t'} \eta\big(x(s), \lambda(s), s\big) ds\bigg) 
\delta\big(x(T-t')-x'\big)\delta\big(\lambda(T-t')-\lambda'\big) \bigg]\,,
  \end{split}
  \label{fluct-relation}
\end{align}
where 
\begin{align}
  \mathcal{W} = \int_{T-t}^{T-t'} \nabla_\lambda V\big(x(s), \lambda(s)\big) \circ
  d\lambda(s) - \frac{1}{\beta} \int_{T-t}^{T-t'} \Big[\mbox{div}_\lambda \big(f -
\epsilon \nabla_\lambda\cdot(\alpha\alpha^T)\big)\Big] \big(x(s),
\lambda(s),s\big) ds\,,
  \label{w-div-f}
\end{align}
$x^R(\cdot), \lambda^R(\cdot)$ satisfy the dynamics (\ref{dynamics-1-reversed}),
(\ref{lambda-dynamics-inverse-full}), and $x(\cdot), \lambda(\cdot)$ satisfy
the dynamics (\ref{dynamics-1-q-vector}), (\ref{lambda-dynamics-full}), respectively.
  Here, $\delta(\cdot)$ denotes the Dirac delta function (see
  Remark~\ref{rmk-delta} below) and $\mbox{div}_\lambda$ denotes the divergence operator with respect to
$\lambda \in \mathbb{R}^m$.  $\mathbf{E}^R_{x',\lambda',t'}$ is the conditional expectation with respect to the path
ensemble of the dynamics (\ref{dynamics-1-reversed}), 
(\ref{lambda-dynamics-inverse-full}) starting from $x^R(t') = x'$ and
$\lambda^R(t')=\lambda'$ at time $t'$, 
while $\mathbf{E}_{x,\lambda,T-t}$ is the conditional expectation with respect to the dynamics
(\ref{dynamics-1-q-vector}) and (\ref{lambda-dynamics-full}).
\label{thm-fluct-relation}
\end{theorem}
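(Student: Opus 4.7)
My plan is to exhibit both sides of (\ref{fluct-relation}) as (weak) solutions of the \emph{same} linear parabolic PDE in the endpoint variable $(x,\lambda,t)$ with $(x',\lambda',t')$ held fixed, sharing the same Dirac initial datum at $t=t'$, and then conclude by uniqueness. Define the weighted reversed transition density
\[
p^R(x,\lambda,t) := \mathbf{E}^R_{x',\lambda',t'}\!\left[e^{\int_{t'}^t \eta(x^R(s),\lambda^R(s),T-s)\,ds}\,\delta(x^R(t)-x)\,\delta(\lambda^R(t)-\lambda)\right]
\]
and the forward Feynman-Kac functional
\[
G(x,\lambda,\tau) := \mathbf{E}_{x,\lambda,\tau}\!\left[e^{-\beta\mathcal{W}_{(\tau,T-t')}}\,e^{\int_\tau^{T-t'}\eta(x(s),\lambda(s),s)\,ds}\,\delta(x(T-t')-x')\,\delta(\lambda(T-t')-\lambda')\right],
\]
so that the LHS of (\ref{fluct-relation}) equals $e^{-\beta V(x',\lambda')}\,p^R(x,\lambda,t)$ and the RHS equals $e^{-\beta V(x,\lambda)}\,G(x,\lambda,T-t)$. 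The whole argument is a comparison of the evolution equations satisfied by these two objects.

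\textbf{The two PDEs.} As the weighted transition density of the reversed process, $p^R$ satisfies the Kolmogorov forward equation
\[
\partial_t p^R=(\mathcal{L}^R_{(x,\lambda,T-t)})^*\,p^R+\eta(x,\lambda,T-t)\,p^R,\qquad p^R|_{t=t'}=\delta(x-x')\delta(\lambda-\lambda'),
\]
the adjoint being taken in $(x,\lambda)$. For $G$ I would replicate the argument of Lemma~\ref{lemma-g}: convert the Stratonovich integral in $\mathcal{W}$ to Ito form using (\ref{w-div-f}), apply Ito's formula to $e^{-\beta\mathcal{W}}\,G(x(s),\lambda(s),s)$, and require the resulting drift to vanish. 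This yields a backward PDE
\[
-\partial_\tau G=(\mathcal{L}+\mathcal{R}+\eta)\,G,\qquad G|_{\tau=T-t'}=\delta(x-x')\delta(\lambda-\lambda'),
\]
where $\mathcal{R}$ is the first-order operator $-2\epsilon\beta(\alpha\alpha^T\nabla_\lambda V)\cdot\nabla_\lambda$ produced by the Ito cross term on $e^{-\beta\mathcal{W}}G$, plus the zeroth-order potential $-\beta\mathcal{L}_2V+\mbox{div}_\lambda(f-\epsilon\nabla_\lambda\cdot(\alpha\alpha^T))+\epsilon\beta^2|\alpha^T\nabla_\lambda V|^2$ collected from the Stratonovich-to-Ito divergence correction in (\ref{w-div-f}) and the quadratic variation of the $w^{(2)}$-martingale part of $-\beta\mathcal{W}$.

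\textbf{Detailed-balance identity.} The core of the proof is the pointwise identity
\[
(\mathcal{L}^R_{(x,\lambda,T-t)})^*\bigl(e^{-\beta V}\phi\bigr)=e^{-\beta V}\bigl[(\mathcal{L}+\mathcal{R})\,\phi\bigr]
\]
for every smooth $\phi$. In the $x$-component this is the statement that the $L^2(e^{-\beta V}dx)$-adjoint of $\mathcal{L}_1$ is $\mathcal{L}_1^R$, the operator obtained by flipping the sign of the probability current $J$; this follows directly from (\ref{div-j-zero}), which already gives $\mathcal{L}_1^*e^{-\beta V}=0$. In the $\lambda$-component, a direct computation of $(\mathcal{L}_2^R)^*$ (integrating by parts, and expanding $\nabla^2_\lambda:(\alpha\alpha^T\cdot)$ by the product rule) shows that, after conjugation by $e^{-\beta V}$, it produces exactly $\mathcal{L}_2+\mathcal{R}$: the extra drift $2\epsilon\nabla_\lambda\cdot(\alpha\alpha^T)$ added in the reversed protocol (\ref{lambda-dynamics-inverse-full}) is precisely what cancels the spurious first-order terms generated by the product rule, and the divergence correction in (\ref{w-div-f}) supplies the matching zeroth-order term $\mbox{div}_\lambda(f-\epsilon\nabla_\lambda\cdot(\alpha\alpha^T))$ inside $\mathcal{R}$. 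Granted the identity, $e^{-\beta V(x,\lambda)}G(x,\lambda,T-t)$ solves the same forward Cauchy problem in $(x,\lambda,t)$ as $p^R$, and (\ref{fluct-relation}) follows from uniqueness for linear parabolic equations (with the delta initial data handled in the standard distributional sense).

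\textbf{Main obstacle.} The delicate step is the $\lambda$-component of the detailed-balance identity: one must track, term by term, how the Ito correction on $e^{-\beta\mathcal{W}}G$, the Stratonovich divergence correction in (\ref{w-div-f}), the extra drift $2\epsilon\nabla_\lambda\cdot(\alpha\alpha^T)$ in the reversed $\lambda$-dynamics, and the derivatives of $e^{-\beta V}$ that fall onto $\alpha\alpha^T$ upon integration by parts fit together so that no spurious drift or potential term survives. Everything else — the Kolmogorov forward equation for $p^R$, the stationarity identity $\mathcal{L}_1^*e^{-\beta V}=0$ from (\ref{div-j-zero}), and uniqueness of the linear parabolic Cauchy problem — is standard.
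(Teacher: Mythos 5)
Your proposal is correct and follows essentially the same route as the paper's own proof: both sides are identified as solutions of the same forward parabolic Cauchy problem with Dirac initial data at $t=t'$, the key step being the conjugation identity $(\mathcal{L}^R)^*(e^{-\beta V}\phi)=e^{-\beta V}(\mathcal{L}+\mathcal{R})\phi$, which the paper carries out term by term in (\ref{e-v-l1-l2}) for the $x$- and $\lambda$-components exactly as you describe. The operators you assemble for $\mathcal{R}$ (the first-order Ito cross term and the zeroth-order potential collecting $-\beta\mathcal{L}_2V$, $\epsilon\beta^2|\alpha^T\nabla_\lambda V|^2$, and $\mbox{div}_\lambda(f-\epsilon\nabla_\lambda\cdot(\alpha\alpha^T))$) match the paper's $\overline{\mathcal{L}}$ in (\ref{bar-l}).
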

\begin{proof}
  We consider the quantities on both sides of the equality (\ref{fluct-relation}).
  For the left hand side of (\ref{fluct-relation}), 
let us fix the values $(x',\lambda',t') \in  \mathbb{R}^n \times \mathbb{R}^m \times [0,T]$ and define the function $u$ by 
\begin{align}
u\big(x,\lambda,t\,;x',\lambda',t'\big) = 
\mathbf{E}_{x',\lambda',t'}^R\bigg[
\exp\bigg(\int_{t'}^t \eta\big(x^R(s), \lambda^R(s), T-s\big) ds\bigg) 
\delta\big(x^R(t)-x\big)\delta\big(\lambda^{R}(t)-\lambda\big)\bigg]\,,
\label{u-exp-form}
\end{align}
for $(x,\lambda,t) \in  \mathbb{R}^n \times \mathbb{R}^m  \times [0,T]$.
It is known that $u$ satisfies the PDE 
\begin{align}
  \begin{split}
  &\frac{\partial u}{\partial t} = \big(\mathcal{L}^R_{(x, \lambda,T-t)})^* u
  + \eta(x,\lambda,T-t) \,u \,, \quad \forall\, (x, \lambda,t) \in  \mathbb{R}^n\times \mathbb{R}^m \times (t',T] \,,\\
  & u(x, \lambda,t\,;x',\lambda',t')=\delta(x-x')\,\delta(\lambda-\lambda')\,,
\quad \mbox{if}~~t=t'\,,
  \end{split}
  \label{pde-u-forward}
\end{align}
where the operator $\mathcal{L}_{(x, \lambda,T-t)}^R$ is defined in (\ref{l-reversed}) and 
$\big(\mathcal{L}^R_{(x, \lambda,T-t)}\big)^*$ denotes its formal $L^2$ adjoint.
  Direct calculation shows that, after some cancellation, we have  
\begin{align}
  \begin{split}
  \big(\mathcal{L}^R_{(x, \lambda,T-t)}\big)^*\phi = & \Big[\mbox{div}(J + a\nabla V) + \mbox{div}_\lambda
  \Big(f -\epsilon \nabla_\lambda\cdot (\alpha\alpha^T)\Big)\Big]\phi + \Big(J + a\nabla V +
  \frac{1}{\beta} \nabla\cdot a\Big) \cdot \nabla \phi \\
  & + \frac{1}{\beta} a :
  \nabla^2 \phi + f \cdot \nabla_\lambda \phi + \epsilon\,\alpha\alpha^T : \nabla^2_\lambda \phi\,,
\end{split}
  \label{l-r-trans}
\end{align}
  for a smooth function $\phi$.  

For the right hand side of (\ref{fluct-relation}), we define the function $g$
for fixed $(x', \lambda',t')$ as
\begin{align*}
  g(x,\lambda,t) = 
  \mathbf{E}_{x,\lambda,T-t}\bigg[&e^{-\beta \mathcal{W}}
\exp\bigg(\int_{T-t}^{T-t'} \eta\big(x(s), \lambda(s), s\big) ds\bigg) \\
& \times \delta\big(x(T-t')-x'\big)\delta\big(\lambda(T-t')-\lambda'\big)
\bigg]\,,
\end{align*}
where $\mathcal{W}$ is defined in (\ref{w-div-f}), and the dynamics $x(\cdot), \lambda(\cdot)$ satisfies SDEs (\ref{dynamics-1-q-vector}),
(\ref{lambda-dynamics-full}). 
Using the same argument as in Lemma~\ref{lemma-g}, we can verify that the function
$g$ satisfies the PDE
\begin{align}
  \begin{split}
    &\frac{\partial g}{\partial t} = \overline{\mathcal{L}}_{( x,\lambda,T-t)}\, g\,,\qquad
  \forall\, (x,\lambda,t) \in  \mathbb{R}^n \times \mathbb{R}^m \times (t',T] \,,\\
  & g(x,\lambda,t) = \delta(x-x')\delta(\lambda-\lambda') \,, \qquad
  \mbox{if}~~t=t'\,,
\end{split}
\label{g-delta}
\end{align}
where the operator $\overline{\mathcal{L}}_{(x,\lambda,T-t)}$ is defined as 
\begin{align}
  \begin{split}
  \overline{\mathcal{L}}_{(x,\lambda,T-t)}\,\phi =& 
\Big[\epsilon \beta^2 |\alpha^T\nabla_{\lambda} V|^2 - \beta \mathcal{L}_2V +
   \mbox{div}_\lambda \Big(f - \epsilon \nabla_\lambda\cdot
   (\alpha\alpha^T)\Big) + \eta \Big] \phi \\
   & + \mathcal{L}_{1} \phi + \mathcal{L}_2 \phi 
  -2\epsilon\beta \big(\alpha^T\nabla_\lambda V\big) \cdot \big(\alpha^T\nabla_\lambda \phi\big)  
\end{split}
    \label{bar-l}
\end{align}
for a smooth function $\phi$, and the functions in (\ref{bar-l}) are evaluated
at $(x,\lambda,T-t)$. Motivated by the right hand side of
(\ref{fluct-relation}), now a key step is to consider
the function
$\omega(x,\lambda,t) = e^{-\beta V(x, \lambda)} g(x,\lambda,t)$.
Recalling the relation (\ref{div-j-zero}), a direct calculation shows that 
\begin{align}
  \begin{split}
  e^{-\beta V} \mathcal{L}_1 g = &
    e^{-\beta V} \Big(J - a\nabla V + \frac{1}{\beta} \nabla \cdot a\Big) \cdot \nabla
    \big(e^{\beta V}\omega\big)  + \frac{e^{-\beta V}}{\beta} a:\nabla^2 \big(e^{\beta V}\omega\big)\\
    =& 
    \Big(J - a\nabla V + \frac{1}{\beta} \nabla \cdot a\Big) \cdot \nabla \omega 
    + \beta \Big[\big(J - a\nabla V + \frac{1}{\beta} \nabla \cdot a\big) \cdot \nabla V
    \Big] \omega \\
    & + \frac{1}{\beta} a:\nabla^2 \omega + 2 (a\nabla V)\cdot  \nabla \omega 
    + \frac{e^{-\beta V}\omega}{\beta} a:\nabla^2\big(e^{\beta V}\big)\\
    =& 
\Big[\mbox{div}(J + a\nabla V)\Big]
    \omega  + 
    \Big(J + a\nabla V + \frac{1}{\beta} \nabla \cdot a\Big) \cdot \nabla \omega 
    + \frac{1}{\beta} a:\nabla^2 \omega \,,\\
    e^{-\beta V} \mathcal{L}_2 g = &
    e^{-\beta V} \Big[f\cdot \nabla_{\lambda} (e^{\beta V} \omega) +
    \epsilon\,\alpha\alpha^T:\nabla^2_{\lambda} (e^{\beta V} \omega) \Big] \\
    =& 
    \mathcal{L}_2 \omega + \beta (\mathcal{L}_2 V)\omega + 2\epsilon\beta
    \big(\alpha^T\nabla_\lambda V\big) \cdot \big(\alpha^T\nabla_\lambda
    \omega\big) + \epsilon\beta^2|\alpha^T\nabla_\lambda V|^2\, \omega\,, \\
    e^{-\beta V} \nabla_\lambda g = &
    e^{-\beta V} \nabla_\lambda \big(e^{\beta V} \omega\big) = \beta \big(\nabla_\lambda
    V\big) \omega + \nabla_\lambda \omega\,.
  \end{split}
  \label{e-v-l1-l2}
\end{align}
Combining (\ref{l-reversed}), (\ref{g-delta}), (\ref{bar-l}),
(\ref{e-v-l1-l2}), we can conclude that $\omega$ satisfies PDE
\begin{align*}
  &\frac{\partial \omega}{\partial t} = e^{-\beta V}
\overline{\mathcal{L}}_{(x,\lambda,T-t)}\,g =
\big(\mathcal{L}^R_{(x,\lambda,T-t)}\big)^*
  \,\omega + \eta(x,\lambda, T-t)\,\omega \,,\quad \forall\,( x, \lambda,t) \in  \mathbb{R}^n\times \mathbb{R}^m
\times (t',T]  \,,\\
  &\omega(x,\lambda,t) = e^{-\beta V(x',\lambda')}
  \delta(x-x')\delta(\lambda-\lambda')\,,\quad \mbox{if}~~ t=t'\,.
\end{align*}
Comparing the latter with (\ref{pde-u-forward}),
we obtain that $e^{-\beta V(x',\lambda')} u(x,\lambda,t\,;x',\lambda',t') =
\omega(x,\lambda,t)$, which is equivalent to the equality (\ref{fluct-relation}). 
\end{proof}
\begin{remark}
  We have adopted the Dirac delta function both in Theorem~\ref{thm-fluct-relation} and in its proof above, 
  in order to simplify the derivations. Precisely, (\ref{fluct-relation}) should be understood in the sense of
  distributions, or equivalently,  
\begin{align}
  \begin{split}
    &\int_{\mathbb{R}^n} \int_{\mathbb{R}^m} e^{-\beta V(x',\lambda')}\,
    \mathbf{E}^R_{x',\lambda',t'}\bigg[\exp\bigg(\int_{t'}^t \eta\big(x^R(s),
      \lambda^R(s), T-s\big)
      ds\bigg) \varphi\big(x^R(t), \lambda^{R}(t), x', \lambda'\big) \bigg]
      dx' d\lambda' \\
    =&\int_{\mathbb{R}^n} \int_{\mathbb{R}^m} e^{-\beta V(x,\lambda)}\,\mathbf{E}_{x,\lambda,T-t}\bigg[e^{-\beta
    \mathcal{W}} \exp\bigg(\int_{T-t}^{T-t'} \eta\big(x(s), \lambda(s), s\big) ds\bigg) 
    \varphi\big(x, \lambda, x(T-t'), \lambda(T-t')\big)\bigg] dx\,d\lambda\,,
  \end{split}
  \label{fluct-relation-test-function}
\end{align}
  for all test functions $\varphi(x,\lambda, x', \lambda')$ which are smooth enough with compact support.
  We emphasize that the above proof can be reformulated more rigorously, by introducing test functions and applying integration by parts.
  \label{rmk-delta}
\end{remark}
\textbf{From fluctuation theorems to Jarzynski's equality}.
It is well known that Jarzynski's equality can be obtained from the fluctuation theorem~\cite{Chetrite2008}. 
In the remaining part of this subsection, we consider the case when
the control protocol $\lambda(s)$ satisfies the dynamics (\ref{lambda-dynamics}) and
show that Theorem~\ref{thm-1} is a consequence of Theorem~\ref{thm-fluct-relation}.
In this case, (\ref{lambda-dynamics-inverse-full}) governing the reversed
protocol $\lambda^R(\cdot)$ simplifies to 
\begin{align}
  \begin{split}
  d\lambda^R(s) =& -f\big(\lambda^R(s), T-s\big)\, ds 
    +  2\epsilon \big(\nabla_\lambda \cdot (\alpha\alpha^T)\big)
    \big(\lambda^R(s),T-s\big)\,ds \\
				  & + \sqrt{2\epsilon}\,
    \alpha\big(\lambda^R(s),T-s\big)\,dw^{(2)}(s)\,,
  \end{split}
  \label{lambda-dynamics-inverse}
\end{align}
and therefore is independent of the process $x^R(\cdot)$ in (\ref{dynamics-1-reversed}). 
For simplicity, we only prove the equality (\ref{generalized-jarzynski-varphi}) for $t=T$.

In order to derive the equality (\ref{generalized-jarzynski-varphi}) in Theorem~\ref{thm-1}, 
we set $t'=0, t=T$ and $\eta
=-\mbox{div}_\lambda \big(f-\epsilon
\nabla_\lambda\cdot(\alpha\alpha^T)\big)$, which is a function independent of $x\in
\mathbb{R}^n$. Multiplying $\varphi(x',\lambda')$ on both sides
of the equality (\ref{fluct-relation}), integrating with respect to $x, x', \lambda'$, 
and recalling the definition (\ref{u-q-w}) of the work $W$, we obtain
\begin{align}
  \begin{split}
      &\int_{\mathbb{R}^n} e^{-\beta V(x,\lambda)}\,\mathbf{E}_{x,\lambda,0}\Big(\varphi(x(T), \lambda(T))\,
      e^{-\beta W}\Big)\,dx \\
    =&\int_{\mathbb{R}^n} \int_{\mathbb{R}^m} \varphi(x',\lambda')\,e^{-\beta V(x',\lambda')}\,
  \mathbf{E}^R_{x',\lambda',0}\bigg[\exp\bigg(\int_{0}^T \eta\big(\lambda^R(s),
  T-s\big) ds\bigg) \delta(\lambda^{R}(T)-\lambda)\bigg] dx' d\lambda' \,.
\end{split}
\label{thm-2-identity-0}
\end{align}
Notice that the conditional expectation on the right hand side of 
(\ref{thm-2-identity-0}) is actually independent of $x'$ (This is only true
when the control protocol doesn't depend on the dynamics. See Remark~\ref{rmk-1}.). We have 
\begin{align}
\begin{split}
      &\int_{\mathbb{R}^n} e^{-\beta
      V(x,\lambda)}\,\mathbf{E}_{x,\lambda,0}\Big(\varphi(x(T),
      \lambda(T))\,e^{-\beta W}\Big)\, dx \\
      =& \int_{\mathbb{R}^m} \big[\mathbf{E}_{\mu_{\lambda'}}
  \varphi(\cdot,\lambda')\big]\,Z(\lambda') \mathbf{E}^R_{\lambda',0}\bigg[\exp\bigg(\int_{0}^T 
  \eta \big(\lambda^R(s), T-s\big) ds\bigg) \delta(\lambda^{R}(T)-\lambda)\bigg]\,
  d\lambda'\,,
\end{split}
\label{thm-2-identity-1}
\end{align}
where $Z(\cdot)$ is the normalization constant in (\ref{normal-const}).

More generally, let us define the function
\begin{align*}
  \psi(\lambda,t) 
  =& \int_{\mathbb{R}^m} \big[\mathbf{E}_{\mu_{\lambda'}}
  \varphi(\cdot,\lambda')\big]\,Z(\lambda')
  \mathbf{E}^R_{\lambda',0}\bigg[\exp\bigg(\int_{0}^{T-t} 
    \eta(\lambda^R(s), T-s) ds\bigg) \delta(\lambda^{R}(T-t)-\lambda)\bigg]\, d\lambda'\,.
\end{align*}
Similarly to the function $u$ in (\ref{u-exp-form}) which satisfies the PDE
(\ref{pde-u-forward}), we know that $\psi$ satisfies 
\begin{align}
  \begin{split}
  & \frac{\partial \psi}{\partial t} + \big(\mathcal{L}^R_2\big)^* \psi -
\Big[\mbox{div}_\lambda \Big(f-\epsilon \nabla_\lambda \cdot
(\alpha\alpha^T)\Big)\Big]\psi = 0 \,,\quad \forall~ (\lambda,t) \in 
  \mathbb{R}^m \times [0, T) \,,\\ 
  & \psi(\lambda,T) = Z(\lambda)\mathbf{E}_{\mu_{\lambda}}
  \varphi(\cdot,\lambda)\,\,,
\end{split}
\label{psi-pde}
\end{align}
where $\mathcal{L}^R_2 = \big(2\epsilon \nabla_\lambda\cdot (\alpha\alpha^T)
-f\big) \cdot \nabla_\lambda + \epsilon\,\alpha\alpha^T:\nabla^2_\lambda$,
and the functions in (\ref{psi-pde}) are evaluated at $(\lambda, t)$.
Calculating $(\mathcal{L}^R_2)^*$, one can conclude that (\ref{psi-pde}) is equivalent to 
\begin{align}
  \begin{split}
  & \frac{\partial \psi}{\partial t} + \mathcal{L}_2 \psi = 0 \,,\quad
    \forall~ (\lambda,t) \in 
  \mathbb{R}^m \times [0, T) \,,\\ 
  & \psi(\lambda,T) = Z(\lambda)\mathbf{E}_{\mu_{\lambda}} \varphi(\cdot,\lambda)\,,
\end{split}
\label{psi-pde-1}
\end{align}
where $\mathcal{L}_2$ is the infinitesimal generator defined in (\ref{l-lambda})
for the dynamics (\ref{lambda-dynamics}), and 
therefore the Feynman-Kac formula implies that 
$$\psi(\lambda,t) =
\mathbf{E}_{\lambda,t}\Big[Z\big(\lambda(T)\big)\mathbf{E}_{\mu_{\lambda(T)}} \varphi(\cdot,\lambda(T))\Big].$$
Combining this with the identity in (\ref{thm-2-identity-1}), we conclude that
\begin{align*}
    &\int_{\mathbb{R}^n} e^{-\beta
  V(x,\lambda)}\,\mathbf{E}_{x,\lambda,0}\Big(\varphi(x(T), \lambda(T))\,
  e^{-\beta W}\Big) dx = \psi(\lambda,0) = 
\mathbf{E}_{\lambda,0}\big[Z\big(\lambda(T)\big)\mathbf{E}_{\mu_{\lambda(T)}} \varphi(\cdot,\lambda(T))\big]\,,
\end{align*}
which is equivalent to the equality (\ref{generalized-jarzynski-varphi}) in
Theorem~\ref{thm-1} for $t=T$. $\hfill\square$
\vspace{0.2cm}

  In the above analysis, we have assumed that the control protocol $\lambda(s)$ is perturbed by noise.
Let us now consider the case when $\lambda(s)$ is deterministic, i.e., when $\epsilon = 0$ in dynamics
  (\ref{lambda-dynamics}). In this case, we have 
\begin{align}
  \dot{\lambda}(s) = f(\lambda(s), s)\,, \quad 0 \le s \le T\,,
  \label{ode-control}
\end{align}
  and $\lambda^R(s) = \lambda(T-s)$.
  It is well known that Crooks's
  relations~\cite{crooks-path-ensemble-pre2000} can be derived from the
  fluctuation relation~\cite{Chetrite2008,escorted-simulation2011}. In the
  following remark, for simplicity we will only state Crooks's relations for the escorted dynamics
  (\ref{dynamics-1-escorted}). Results corresponding to the original dynamics (\ref{dynamics-1}) can be recovered by choosing $u\equiv 0$.

\begin{remark}[Crooks's relations for the escorted dynamics]
  Consider the reversed version of the escorted dynamics
  (\ref{dynamics-1-escorted}), which satisfies 
\begin{align}
  \begin{split}
    d \bar{x}^{R}(s)  =& \Big(-J-a\nabla V + \frac{1}{\beta}\nabla \cdot
    a\Big)\big(\bar{x}^R(s), \lambda^R(s)\big)\,ds 
    - u(\bar{x}^{R}(s), \lambda^R(s))\,ds \\
    & +\sqrt{2\beta^{-1}} \sigma(\bar{x}^{R}(s), \lambda^R(s)) \,dw^{(1)}(s)\,,
    \quad s \ge 0\,.
\end{split}
  \label{dynamics-1-escorted-reversed}
\end{align}
  By slightly modifying the proof of Theorem~\ref{thm-fluct-relation}, we can prove
\begin{align}
  \begin{split}
    &e^{-\beta V(x',\lambda(T))}\,
    \overline{\mathbf{E}}^R_{x',0}\bigg[\exp\bigg(\int_{0}^T
    \eta\big(\bar{x}^{R}(s), T-s\big) ds\bigg) \delta\big(\bar{x}^{R}(T)-x\big)\bigg]\\
    =&e^{-\beta V(x,\lambda(0))}\,\overline{\mathbf{E}}_{x,0}\bigg[e^{-\beta \overline{W}}
    \exp\bigg(\int_{0}^{T} \eta\big(\bar{x}(s), s\big)\,ds\bigg)
    \delta\big(\bar{x}(T)-x'\big) \bigg]\,,
    \quad \forall~x,x' \in \mathbb{R}^n\,,
  \end{split}
  \label{fluct-relation-eps0}
\end{align}
  where $\overline{W}=\overline{W}_{(0,T)}$ is the modified work in (\ref{work-w-escorted}) and $\eta \in C\big(\mathbb{R}^n \times [0,T]\big)$ is continuous with compact support. 
The notations $\overline{\mathbf{E}}_{x,0}$ and
  $\overline{\mathbf{E}}^R_{x',0}$ denote the ensemble averages with respect
  to the escorted dynamics $\bar{x}(\cdot)$ in (\ref{dynamics-1-escorted}) and
  its reversed counterpart $\bar{x}^{R}(\cdot)$ in (\ref{dynamics-1-escorted-reversed}) starting from fixed state at time $s=0$, respectively.

  Since any (bounded) continuous function $\mathcal{G}$ on the path space can be approximated by linear combinations of functions
  which are of the form $\exp\big(\int_0^T \eta(\bar{x}(s),s)\,ds\big)$ (for instance, by
  discretizing $[0,T]$ into subintervals), integrating
  (\ref{fluct-relation-eps0}) gives
  \begin{align}
    \frac{\overline{\mathbf{E}}_{\lambda(0), 0} \big(e^{-\beta
    \overline{W}}\mathcal{G}\big)}{\overline{\mathbf{E}}^{R}_{\lambda(T), 0}\,(\mathcal{G}^{R})}
    = e^{-\beta \Delta F(T)}\,,
    \label{crooks-eps0-1}
  \end{align}
  where $\mathcal{G}^R\big(x(\cdot)\big)=\mathcal{G}\big(x(T-\cdot)\big)$ for all path
  $x(\cdot) \in C\big([0,T], \mathbb{R}^n\big)$, and $\Delta F(T)$ is the free energy difference in (\ref{delta-f}). 
  The notation $\overline{\mathbf{E}}_{\lambda(0), 0}$ is the path ensemble average of the
  forward dynamics $\bar{x}(s)$ starting from $\bar{x}(0)\sim \mu_{\lambda(0)}$, and
  $\overline{\mathbf{E}}^R_{\lambda(T), 0}$ is defined similarly for the reversed
  dynamics $\bar{x}^{R}(s)$. If we formally write
  $\overline{\mathcal{P}}[\bar{x}(\cdot)\,|\,\bar{x}(0)]$,
  $\overline{\mathcal{P}}^R[\bar{x}^{R}(\cdot)\,|\,\bar{x}^{R}(0)]$ as the probability densities on the
  path space for the dynamics $\bar{x}(s)$, $\bar{x}^{R}(s)$ starting
  from $\bar{x}(0)$ and $\bar{x}^{R}(0)$ respectively, we obtain from (\ref{crooks-eps0-1}) that 
  \begin{align}
    \frac{\overline{\mathcal{P}}[x(\cdot)\,|\,x(0)]}
    {\overline{\mathcal{P}}^R[x(T-\cdot)\,|\,x(T)]}
    = e^{-\beta (\Delta \mathcal{U}(T)- \overline{W})}\,,\quad \forall~x(\cdot) \in C\big([0,T], \mathbb{R}^n\big)\,,
    \label{micro-reversibility}
  \end{align}
  where $\Delta \mathcal{U}(T)$ is the change of internal energy in (\ref{u-q-w}).

  Furthermore, notice that for the work function $\mathcal{G}\big(x(\cdot)) = \overline{W}$ in (\ref{work-w-escorted}), 
  we have 
  \begin{align*}
    \mathcal{G}^R\big(x(\cdot)) = &
     \mathcal{G}\big(x(T-\cdot))\\
    =& \int_0^T \Big(\nabla_\lambda V\cdot f + u\cdot \nabla V - \frac{1}{\beta}
    \nabla \cdot u\Big)(x(s),\lambda(T-s), T-s)\, ds \\
    =& -\int_0^T \Big(\nabla_\lambda V\cdot \dot{\lambda}^R + (-u)\cdot \nabla V - \frac{1}{\beta}
    \nabla \cdot (-u)\Big)(x(s),\lambda^R(s), s)\, ds \\
    =& -\overline{W}^R,
  \end{align*} 
    where $\overline{W}^R$ is the modified work of the reversed dynamics
    (\ref{dynamics-1-escorted-reversed}).
   Therefore, (\ref{crooks-eps0-1}) implies 
  \begin{align}
    \frac{\overline{\mathbf{E}}_{\lambda(0), 0} \big(e^{-\beta
    \overline{W}}\phi(\overline{W})\big)}{\overline{\mathbf{E}}^{R}_{\lambda(T),
    0}\,(\phi(-\overline{W}^R))}
    = e^{-\beta \Delta F(T)}\,,\quad \forall~ \phi \in C_b(\mathbb{R})\,.
    \label{crooks-eps0-2}
  \end{align}

  Readers can recognize that the identities (\ref{micro-reversibility}),
  (\ref{crooks-eps0-1}) and (\ref{crooks-eps0-2}) are the counterparts of the microscopic reversibility
  and Crooks's relations in~\cite{crooks-path-ensemble-pre2000,escorted-simulation2011} for 
  (escorted) continuous-time Markovian processes.
  It was already pointed out in~\cite{crooks-path-ensemble-pre2000} that these
  relations (in particular the microscopic reversibility) hold for general
  Markov chains out of equilibrium without reversibility assumption.
  The derivations above show that this is also true for the continuous-time
  process $\bar{x}(s)$ in (\ref{dynamics-1-escorted}) with the control protocol in (\ref{ode-control}). 
  \label{rmk-crook-jarzynski-eps0}
\end{remark}
\subsection{Change of measure and information-theoretic formulation}
\label{subsec-is}
In this subsection, we explore the idea of importance
sampling~\cite{ce_paper2014,Hartmann2016-Nonlinearity} to study the
Jarzynski's equality. We focus on the case when the control protocol $\lambda(s)$
is deterministic and satisfies the ODE (\ref{ode-control}), i.e.  $\epsilon = 0$ in dynamics (\ref{lambda-dynamics}). 
For simplicity, we also assume that the coefficient matrix $\sigma$ in
dynamics (\ref{dynamics-1}) is an invertible $n\times n$ matrix.
Denote $\mathbf{P}$, $\mathbf{E}$ as the probability measure 
and the mathematical expectation on
path space $C\big([0,T], \mathbb{R}^n\big)$ 
with respect to paths of the process (\ref{dynamics-1-q-vector})  starting from
$x(0) \sim \mu_{\lambda(0)}$, where $\lambda(s)$ satisfies (\ref{ode-control})
with fixed $\lambda(0) \in \mathbb{R}^m$.
Then the Jarzynski's equality (\ref{generalized-jarzynski}) reads 
\begin{align}
  \mathbf{E}\Big[ e^{-\beta W}\Big] = e^{-\beta \Delta F}\,,
  \label{jarzynski-repeat}
\end{align}
where $\Delta F = F\big(\lambda(T)\big)-F\big(\lambda(0)\big)$, with 
\begin{align}
  W=\int_0^T \nabla_\lambda V\big(x(s), \lambda(s)\big) \cdot f\big(\lambda(s), s\big) ds\,.
  \label{w-repeat}
\end{align}
See Remark~\ref{rmk-1} for related discussions.

Let $\overline{\mathbf{P}}$ be another probability measure on
the space $C\big([0,T], \mathbb{R}^n\big)$ which is equivalent to
$\mathbf{P}$ and let $\overline{\mathbf{E}}$ be the corresponding expectation.
Applying a change of measure in (\ref{jarzynski-repeat}), together with Jensen's inequality, we can deduce 
\begin{align}
  \begin{split}
  \Delta F =&\, -\beta^{-1} \ln \overline{\mathbf{E}} \Big(e^{-\beta W}
  \frac{d\mathbf{P}}{d\overline{\mathbf{P}}}\Big) \\
  \le&\, \overline{\mathbf{E}} \Big(W + \beta^{-1} \ln
  \frac{d\overline{\mathbf{P}}}{d\mathbf{P}}\Big) \\
  =&\, \overline{\mathbf{E}}(W) + \beta^{-1}
  D_{KL}\big(\overline{\mathbf{P}}\,\|\,\mathbf{P}\big)\,,
\end{split}
\label{df-w-ineq}
\end{align}
where $D_{KL}\big(\cdot\,\|\,\cdot\big)$ denotes the Kullback-Leibler
divergence of two probability measures~\cite{MacKay2002-inf-theory,Bishop2006-pattern-recognition}.  
Notice that the inequality (\ref{df-w-ineq}) can be interpreted as a
generalization of the second law of thermodynamics~\cite{callen1985thermodynamics}.
In particular, under certain conditions on the work $W$, the equality in
(\ref{df-w-ineq}) can be attained by the optimal probability measure
$\mathbf{P}^*$, which is determined by 
\begin{align}
  \frac{d\mathbf{P}^*}{d\mathbf{P}} = e^{-\beta(W-\Delta F)}\,,\qquad
  \mathbf{P}^*-a.s.
  \label{opt-p}
\end{align}
In other words, the optimal change of measure
tilts the original path probabilities exponentially according to the
differences between the work $W$ and the free energy difference $\Delta F$. 
In particular, the probability of paths with smaller work $W$ (compared to
$\Delta F$) increases under the optimal measure.

Meanwhile, the importance sampling Monte Carlo estimator for the free energy
difference $\Delta F$ based on the identity
\begin{align}
\Delta F =-\beta^{-1} \ln\mathbf{E}^* \Big(e^{-\beta W}
\frac{d\mathbf{P}}{\,d\mathbf{P}^*}\Big)
\label{free-energy-optimal-estimator}
\end{align}
will achieve zero variance. 
More generally, inspired by the last line in (\ref{df-w-ineq}), we
define 
\begin{align}
  \Phi(\overline{\mathbf{P}}) := \overline{\mathbf{E}}\big(W) + \beta^{-1}
  D_{KL}\big(\overline{\mathbf{P}}\,\|\,\mathbf{P}\big) \,,
  \label{phi-cost}
\end{align}
for a general probability measure $\overline{\mathbf{P}}$ which is equivalent
to $\mathbf{P}$. Then the above discussions imply the following variational principle 
\begin{align}
  \begin{split}
  \Delta F =& \inf_{\overline{\mathbf{P}}\sim \mathbf{P}}  
    \Big[\overline{\mathbf{E}}\big(W) + \beta^{-1}
      D_{KL}\big(\overline{\mathbf{P}}\,\|\, \mathbf{P}\big)\Big] \\
 =& \inf_{\overline{\mathbf{P}}\sim \mathbf{P}} \Phi(\overline{\mathbf{P}})  =
    \Phi(\mathbf{P}^*)\,,
\end{split}
\label{variation-form}
\end{align}
where `$\sim$' denotes the equivalence relation between two probability measures.
In other words, the optimal probability measure $\mathbf{P}^*$ in (\ref{opt-p}) can be characterized as the minimizer of the
minimization problem (\ref{variation-form}) and the corresponding minimum
equals to $\Delta F$. Furthermore, using (\ref{opt-p})
and (\ref{phi-cost}), we can verify the following simple relation 
\begin{align}
  \begin{split}
  \Phi(\overline{\mathbf{P}}) =& \overline{\mathbf{E}} \bigg(W + \beta^{-1} \ln
  \frac{d\overline{\mathbf{P}}}{d\mathbf{P}}\bigg) \\
  = & \mathbf{E}^*\bigg[\bigg(W + \beta^{-1}
  \ln\frac{d\overline{\mathbf{P}}}{d\mathbf{P}}\bigg)
\frac{d\overline{\mathbf{P}}}{\,d\mathbf{P}^*}\bigg] \\
=&  \mathbf{E}^*\bigg[\bigg(\Delta F +  \beta^{-1} \ln
\frac{d\mathbf{P}}{\,d\mathbf{P}^*} + \beta^{-1} \ln
\frac{d\overline{\mathbf{P}}}{d\mathbf{P}}\bigg)
\frac{d\overline{\mathbf{P}}}{\,d\mathbf{P}^*}\bigg] \\
=& \Delta F + 
\beta^{-1} \mathbf{E}^*\bigg[\bigg( \ln
\frac{d\overline{\mathbf{P}}}{\,d\mathbf{P}^*}\bigg)
\frac{d\overline{\mathbf{P}}}{\,d\mathbf{P}^*}\bigg] \\
=& \Delta F + \beta^{-1} D_{KL}\big(\overline{\mathbf{P}}\,\|\, \mathbf{P}^*\big)\,,
\end{split}
\label{entropy-exp}
\end{align}
for a general probability measure $\overline{\mathbf{P}}$ such that
$\overline{\mathbf{P}} \sim \mathbf{P}$.
It becomes apparent from the last expression in (\ref{entropy-exp}) that $\Delta F$ is
the global minimum of the function $\Phi$ and is attained by the (unique) probability measure
$\mathbf{P}^*$, since $D_{KL}\big(\overline{\mathbf{P}}\,\|\,
\mathbf{P}^*\big)\ge 0$ and the equality is achieved if and only if 
$\overline{\mathbf{P}}=\mathbf{P}^*$. Furthermore, minimizing the function
$\Phi$ is equivalent to minimizing the Kullback-Leibler divergence $D_{KL}\big(\cdot\,\|\, \mathbf{P}^*\big)$. 

In the following, we show that the optimal change of measure $\mathbf{P}^*$ can
be characterized more transparently. To this end, let $\mathbf{P}_{x,t}$,
$\mathbf{E}_{x,t}$ denote the path measure and the conditional expectation of the process
(\ref{dynamics-1-q-vector}) starting from a fixed state $x \in \mathbb{R}^n$ at time $t$. 
Notice that, by the disintegration
theorem~\cite[Theorem~$5.3.1$]{ambrosio2005gradient}, we can write the path
measure $\mathbf{P}$ as 
$$\mathbf{P}=\int_{\mathbb{R}^n} \mathbf{P}_{x,0}\, d\mu_{\lambda(0)}(x).$$
Defining the function 
\begin{align}
  g(x,t) = \mathbf{E}_{x,t} \big(e^{-\beta W_{(t,T)}}\big)\,,
  \label{g-fun-repeat}
\end{align}
analogously to (\ref{g-def}), Jarzynski's equality (\ref{jarzynski-repeat}) implies 
that 
\begin{align}
  \Delta F = -\beta^{-1} \ln \big(\mathbf{E}_{\mu_{\lambda(0)}} g(\cdot,
  0)\big)\,.
  \label{jarzynski-g}
\end{align}
Sampling an expectation value whose form is similar to (\ref{g-fun-repeat}) 
using importance sampling Monte Carlo method has been studied in previous
work~\cite{ip-dupuis-multiscale,ip-kostas1,ip-eric,ce_paper2014,Hartmann2017-ptrf,Hartmann2016-Nonlinearity}.
In particular, we know from the Feynman-Kac formula that $g$ solves the PDE 
\begin{align}
  &\partial_t g + \mathcal{L}_1 g -\beta (f\cdot \nabla_\lambda V)g =
    0\,,\quad  g(\cdot, T) = 1\,,
    \label{g-pde-repeat}
\end{align}
where $\mathcal{L}_1$ is the infinitesimal generator in (\ref{l-1}) with
$\lambda=\lambda(\cdot)$ being dependent on time $t$.
Introducing $U=-\beta^{-1}\ln g$, it follows from (\ref{g-pde-repeat}) that
$U$ satisfies a Hamilton-Jacobi-Bellman equation
\begin{align}
  \begin{split}
  &\partial_t U + \min_{c\in \mathbb{R}^n}\Big\{\mathcal{L}_1 U + \sigma c\cdot \nabla U + \frac{|c|^2}{4} +
  (f \cdot \nabla_\lambda V)\Big\} = 0\,,\\
  & U(\cdot, T) = 0\,,
  \end{split}
\end{align}
and one can show~\cite{fleming2006} that $U$ is the value function of the optimal control problem 
\begin{align}
   U(x,t) = \inf_{u_s} \mathbf{E}^u_{x,t} \bigg[\int_t^T \Big(
   \nabla_\lambda V\big(x^u(s), \lambda(s)\big)
 \cdot f(\lambda(s),s) + \frac{|u_s|^2}{4}\Big) ds\bigg]\,, 
\end{align}
where $u_s \in \mathbb{R}^n$ is the control policy, $x^u(s)$ is the controlled process given by
  \begin{align}
    d x^u(s)  =  b(x^u(s), \lambda(s)) ds + \sigma(x^u(s), \lambda(s))u_s\,ds + \sqrt{2\beta^{-1}} \sigma(x^u(s), \lambda(s)) \,dw^{(1)}(s)\,,
  \label{dynamics-1-u}
\end{align}
and $\mathbf{E}^u_{x,t}$ denotes the corresponding conditional expectation
starting from $x^u(t) = x$ at time $t$. 

In particular, it is well known that the feedback control policy 
\begin{align}
  u^*_s(x) = - 2 \sigma^T(x,\lambda(s)) \nabla U(x,s)
  = 2 \beta^{-1}
  \frac{\sigma^T(x,\lambda(s))\nabla
  g(x,s)}{g(x,s)} \,, \quad (x, s) \in \mathbb{R}^n \times [0, T]
  \label{opt-u}
\end{align}
leads to the zero-variance importance sampling Monte Carlo estimator for the
path ensemble average in (\ref{g-fun-repeat})~\cite{entropy-var-free-energy2017}.
Based on these facts and the equality (\ref{jarzynski-g}), it is not difficult to conclude that the optimal
probability measure to sample the free energy $\Delta F$ in (\ref{free-energy-optimal-estimator})
is given by the disintegration expression
\begin{align}
  \mathbf{P}^*= \int_{\mathbb{R}^n} \mathbf{P}_{x,0}^*\, d\mu_0^*(x) \,,
\label{opt-p-decomp}
\end{align}
where $\mu_0^*$ is the probability measure on $\mathbb{R}^n$ such that 
\begin{align}
  \frac{d\mu^*_0}{dx} \propto e^{-\beta V(x,\lambda(0))} g(x,0) \,,
  \label{opt-mu0}
\end{align}
and $\mathbf{P}^*_{x,0}$ is the probability measure corresponding to the
controlled dynamics (\ref{dynamics-1-u}) starting from $x^u(0) = x$, with $u^*_s
= u^*_s(x^u(s))$ which is defined in (\ref{opt-u}) for $s \in [0,T]$. In other words, 
the importance sampling estimator (\ref{free-energy-optimal-estimator}) for the free energy $\Delta F$ will
achieve zero-variance, if we generate trajectories from
dynamics (\ref{dynamics-1-u}) with the control $u^*_s$ starting from the initial
distribution $x^u(0) \sim \mu_0^*$. 
\begin{remark}
  In the following, we make a comparison with other relevant directions in the literature.
  \begin{enumerate}
    \item
      (Optimal control protocol) In the importance sampling approach above, where the main purpose is to
      improve the numerical efficiency of free energy calculation,
      we assumed that the control protocol $\lambda(s)$ is
      fixed and the dynamics of the original nonequilibrium process
      is modified by adding an extra (additive) control force. 
      In contrast to this, the problem of minimizing either the average work
      or the average heat
      by varying the control protocols has been considered in several recent
      works in the study of thermodynamics for small systems~\cite{optimal-protocol2008, optimal-finite-time-seifert-2007,extracting-work-feedback-2011,optimal-protocols-transport-2011}.
      Motivated by these studies, it may be also interesting to optimize the control protocols in order to minimize the variance of 
      the Monte Carlo estimators. This problem is beyond the scope of the
      current paper but we would like to consider it in the future.
    \item
      (Escorted free energy simulation) The idea of further adding an extra control force to the nonequilibrium processes in
  order to improve the efficiency of free energy calculation has also been
  explored in the escorted free energy simulation
  method~\cite{pre-escort-fe-simulation2008,escorted-simulation2011}.
  In this method~\cite{pre-escort-fe-simulation2008}, the authors derived the identity (\ref{jarzynski-escorted}) for the modified
  dynamics (\ref{dynamics-1-escorted}), and suggested to apply it to compute the free energy
  difference $\Delta F$ by choosing the vector field $u$ in
  (\ref{dynamics-1-escorted}) properly (such that the ``lag'' is reduced).
  There also exists an optimal vector field, at least formally, such
  that the Monte Carlo estimator in the escorted simulation method achieves zero variance.
  Despite of these similarities, we emphasize that the importance sampling
  method in this subsection and the escorted free energy simulation method rely on different identities (of the nonequilibrium processes with extra control).
  In other words, the change of measure identity in the first line of
  (\ref{df-w-ineq}) and the identity (\ref{jarzynski-escorted}) can not be
  derived from one to the other straightforwardly.
  Furthermore, unlike the escorted free energy simulation method where the initial
  distribution is fixed, in importance sampling one has the freedom to change the initial distribution as well. 
  In particular, this is the case for the optimal change of measure, since
  $\mu_0^*$ in (\ref{opt-mu0}) is typically different from the equilibrium distribution $\mu_{\lambda(0)}$.
\item
  (Bidirectional sampling, Bennett's acceptance ratio method)
It is known in the literature~\cite{crooks-path-ensemble-pre2000,compare_free_energy_methods_2006,optimal-estimator-minh-2009,escorted-simulation2011}
      that free energy estimators based on Crooks's relation
      (\ref{crooks-eps0-2}), using trajectories of both the forward and backward processes, perform much better than estimators based on the
      Jarzynski's equality (\ref{jarzynski-repeat}), which only use trajectories of the forward process. The optimal choice of the function $\phi$ in
      (\ref{crooks-eps0-2}) is known~\cite{BENNETT1976}, given the numbers of both forward and backward trajectories.
      It is interesting to consider how one can apply the importance sampling
      idea to further improve the efficiency of estimators which use trajectories of both forward and backward processes.
      We leave this question in future study.
  \end{enumerate}
  \label{rmk-ip-and-escorted}
\end{remark}
\subsection{Cross-entropy method}
\label{subsec-ce}
From the previous subsection, we know that the probability measure
$\mathbf{P}^*$ in (\ref{opt-p}), or equivalently in (\ref{opt-p-decomp}), is optimal in the sense that the importance sampling estimator
(\ref{free-energy-optimal-estimator}) has zero-variance. However, in
practice it is often difficult to compute $\mathbf{P}^*$ or $u_s^*$. 
In this subsection, we briefly outline a numerical approach to sample the
free energy difference $\Delta F$ using the importance sampling Monte Carlo
method~\cite{ce_paper2014,ce_book}. 
The main idea is to approximate the optimal measure $\mathbf{P}^*$ within a family of parameterized probability measures
$\big\{\mathbf{P}_{\boldsymbol{\omega}}\,|\,\boldsymbol{\omega} \in \mathbb{R}^k\big\}$,
with the hope that the closer $\mathbf{P}_{\boldsymbol{\omega}}$ is to
$\mathbf{P}^*$, the more efficient the importance sampling estimator will be (in the sense that variance is small). Different from the importance sampling method studied in~\cite{path_sampling_zuckerman2004,
optimum-bias-2008}
which requires Monte Carlo sampling in path space with an acceptance-rejection
procedure, the method proposed below can be implemented at the SDE level. 

We recall that the probability measure $\mathbf{P}$ corresponds
to the trajectories of processes (\ref{dynamics-1}) and (\ref{ode-control}).
Now let $\bar{\mu}_0$ be the probability measure on $\mathbb{R}^n$, 
possibly different from $\mu_{\lambda(0)}$. Given a parameter $\boldsymbol{\omega}=(\omega_1, \omega_2, \cdots,
\omega_k)^T \in \mathbb{R}^k$, we define $\mathbf{P}_{\boldsymbol{\omega}}$
as the probability measure corresponding to the trajectories of the process 
\begin{align}
  \begin{split}
    d x(s) & =  b\big(x(s), \lambda(s)\big) ds + 
    \sigma\big(x(s), \lambda(s)\big) \Big(\sum_{l=1}^k \omega_l \phi^{(l)}\big(x(s),
    \lambda(s), s\big)\Big)\,ds +
    \sqrt{2\beta^{-1}} \sigma\big(x(s), \lambda(s)\big) \,dw(s)\,,
\end{split}
  \label{dynamics-ce-1}
\end{align}
and the control protocol (\ref{ode-control}), 
starting from $x(0) \sim \bar{\mu}_{0}$, where $\phi^{(l)} : \mathbb{R}^n
\times \mathbb{R}^m \times \mathbb{R}^+ \rightarrow
\mathbb{R}^n$, $1 \le l \le k$, are $k$ ansatz functions. Clearly, we have
$\mathbf{P}_{\boldsymbol{\omega}}=\mathbf{P}$ when
$\boldsymbol{\omega}=\boldsymbol{0}\in \mathbb{R}^k$ and
$\bar{\mu}_0=\mu_{\lambda(0)}$. As a special choice of
ansatz functions, we can take
$\phi^{(l)} =-\sigma^T\nabla V^{(l)}$, where $V^{(l)}: \mathbb{R}^n\times \mathbb{R}^m\rightarrow
\mathbb{R}$, $1 \le l \le k$, are $k$ potential functions. In this case, recalling that dynamics (\ref{dynamics-1})
can be written equivalently as (\ref{dynamics-1-q-vector}), 
we see that dynamics (\ref{dynamics-ce-1}) becomes 
\begin{align*}
  \begin{split}
    d x(s) & =  \bigg[J - a\nabla\Big(V + \sum_{l=1}^k \omega_lV^{(l)}\Big)+
    \frac{1}{\beta} \nabla\cdot a\bigg](x(s), \lambda(s))\, ds + 
     \sqrt{2\beta^{-1}} \sigma\big(x(s), \lambda(s)\big) \,dw(s)\,,
\end{split}
\end{align*}
i.e., probability measure $\mathbf{P}_{\bm{\omega}}$ corresponds to the
dynamics under the modified potential $V + \sum\limits_{l=1}^k \omega_lV^{(l)}$.

The optimal approximation of
the probability measure $\mathbf{P}^*$ within 
the set $\big\{\mathbf{P}_{\boldsymbol{\omega}}\,|\,\boldsymbol{\omega} \in \mathbb{R}^k\big\}$
is defined as the minimizer of the minimization problem 
\begin{align}
\min_{\boldsymbol{\omega} \in \mathbb{R}^k} D_{KL}\big(\mathbf{P}^*\,\|\, \mathbf{P}_{\boldsymbol{\omega}}\big)\,.
  \label{mini-omega-problem}
\end{align}
Note that, comparing to the minimization of the function $\Phi$ in (\ref{phi-cost}), which is
equivalent to minimizing $D_{KL}(\cdot\,\|\,\mathbf{P}^*)$ by (\ref{entropy-exp}), 
approximations have been introduced in (\ref{mini-omega-problem}), i.e.,
we have first
switched the order of the two arguments in $D_{KL}(\cdot\,\|\,\cdot)$ and then 
confined ourselves on a parameterized subset of probability measures with fixed
starting distribution $\bar{\mu}_0$. Using
(\ref{opt-p}), we can write the objective function in
(\ref{mini-omega-problem}) more explicitly as
\begin{align}
  D_{KL}\big(\mathbf{P}^*\,\|\, \mathbf{P}_{\boldsymbol{\omega}}\big) = 
  D_{KL}\big(\mathbf{P}^*\,\|\, \mathbf{P}\big) - e^{\beta\Delta F} \mathbf{E}\Big(e^{-\beta W}
  \ln\frac{d\mathbf{P}_{\boldsymbol{\omega}}}{d\mathbf{P}} \Big) \,,
  \label{omega-problem-objective-explicit}
\end{align}
where the parameter $\boldsymbol{\omega}$ only appears in the second term on the right hand side of the above equality. 
Applying Girsanov's theorem~\cite{oksendalSDE}, we have 
\begin{align}
  \frac{d\mathbf{P}_{\boldsymbol{\omega}}}{d\mathbf{P}} = 
\frac{d\bar{\mu}_0}{d\mu_{\lambda(0)}}\big(x(0)\big) \times 
  \exp\bigg[\frac{\beta}{2} \int_0^T \Big(\sum_{l=1}^k
    \omega_l\phi^{(l)}\Big)\cdot \sigma^{-1} \big(dx(s) - b\, ds\big) - \frac{\beta}{4} \int_0^T 
  \Big|\sum_{l=1}^k \omega_l\phi^{(l)}\Big|^2\, ds\bigg]\,,
  \label{girsanov-ce}
\end{align}
where the dependence of the functions $b, \sigma, \phi^{(l)}$ on $x(s),
\lambda(s), s$ is omitted for simplicity.
Substituting (\ref{girsanov-ce}) into equality (\ref{omega-problem-objective-explicit}), 
we can observe that the objective function in (\ref{mini-omega-problem}) is
in fact quadratic with respect to the parameter $\boldsymbol{\omega} \in \mathbb{R}^k$.
Taking derivatives, we conclude that the minimizer of (\ref{mini-omega-problem}) is determined by the linear equation
$A\boldsymbol{\omega}^* = R$, where 
\begin{align}
  \begin{split}
    A_{ll'} = \mathbf{E} \bigg[e^{-\beta W} \int_0^T \phi^{(l)}\cdot
      \phi^{(l')}\,
  ds\bigg]\,, 
  \quad R_{l} = \mathbf{E} \bigg[e^{-\beta W} \int_0^T \phi^{(l)} \cdot
\sigma^{-1} \big(dx(s) - b\, ds\big) \bigg]\,,
\end{split}
\label{a-r-coeff}
\end{align}
for $1 \le l, l' \le k$. 

In practice, we can estimate entries of $A$ and $R$ in
(\ref{a-r-coeff}) by simulating a relatively small number of
trajectories, and compute $\bm{\omega}^*$ by solving the linear equation
$A\bm{\omega}^*=R$.
After this, the free energy difference $\Delta F$ can be estimated using importance sampling
by simulating a large number of trajectories corresponding to
$\mathbf{P}_{\bm{\omega}^*}$. Also notice that, instead of computing $A$ and $R$
using the original dynamics and solving $\bm{\omega}^*$ directly, it is
helpful to solve $\bm{\omega}^*$ in an iterative manner starting from a higher
temperature (small $\beta$) or running a different dynamics (importance sampling). We refer readers to the previous
studies~\cite{ce_book,ce_paper2014} for more
algorithmic details.
\begin{remark}
  More generally, instead of keeping the starting distribution
  $\bar{\mu}_0$ fixed, we could also optimize $\bar{\mu}_0$ within a parameterized set
  of probability measures on $\mathbb{R}^n$ by solving an optimization
  problem which is similar to (\ref{mini-omega-problem}). 
  In this case, while the optimal parameter $\bm{\omega}^*$ can still be obtained
  from the same linear equation $A\bm{\omega}^*=R$, a nonlinear equation needs
  to be solved in order to get the optimal $\bar{\mu}_0$.
  We expect to develop algorithms which adaptively optimize $\bm{\omega}^*$ and $\bar{\mu}_0$ in an alternative manner.
  This will be considered in future work.
  \label{rmk-3}
\end{remark}

\textbf{Choices of ansatz functions}.
  Clearly, the efficiency of the importance sampling Monte Carlo method crucially depends on the
  choices of ansatz functions used in the cross-entropy method.
  From Jarzynski's equality (\ref{jarzynski-repeat}) and the optimal change of measure
  (\ref{opt-p}), we can expect that an importance sampling estimator will have 
  better performance if paths with smaller work $W$ (comparing to
  $\Delta F$) are sampled more frequently. Accordingly, the ansatz functions
  used in the cross-entropy method should be chosen such that the work $W$ can be
  decreased by the control forces.
  A similar idea has been used in the previous work~\cite{Hartmann2016-Nonlinearity}, where several ways of choosing ansatz functions have been proposed.  
  
  In the current situation where the work $W$ is given in (\ref{w-repeat}), 
  we can see that $W$ will be large if the potential increases along the
  movement of the parameter $\lambda$.
  Actually, this already explains the reason why a standard Monte Carlo
  simulation of fast-switching dynamics based on Jarzynski's
  equality is likely to have poor efficiency.
To elucidate this point more clearly, we consider a special situation when the expression of the
  work $W$ becomes simpler and allows us to have some insights on how to
  choose ansatz functions.
  Specifically, let $\lambda \in [0,1]$ and suppose that we are interested in the
  free energy differences corresponding to potentials $V(x,0)$ and
  $V(x,1)$, $x \in \mathbb{R}^n$. Then a simple way is to consider the linear
  interpolation~\cite{path_sampling_zuckerman2004}
  \begin{align}
    V(x,\lambda) = (1-\lambda)V(x,0) + \lambda V(x,1)\,,\quad \lambda \in
    [0,1]\,,
  \end{align}
  and the control protocol $\lambda(s) = s$ on the time interval $s \in [0,1]$.
  In this case, the expression of work in (\ref{w-repeat}) as a path
  functional becomes as simple as 
  \begin{align}
    W = \int_0^1 \Big(V(x(s), 1) - V(x(s), 0)\Big) ds\,.
    \label{work-w-special}
  \end{align}
   It is not difficult to see that paths simulated by a standard Monte Carlo method will typically have large work due to the fact that,
  starting from the Boltzmann distribution of the potential $V(x,0)$ and on
  the finite time interval $[0,1]$,
  the nonequilibrium process $x(s)$ is likely to stay within the
  region where potential $V(x,1)$ is large, in particular when the low potential regions of
  $V(x,0)$ and $V(x,1)$ do not overlap (see \cite{jarzynski-rare2006} for more detailed discussions). 
  Accordingly, the importance sampling can improve the efficiency of the
  standard Monte Carlo estimator if we place ansatz functions in a way such that, after
  optimization using the cross-entropy method, 
  transitions of the controlled dynamics (\ref{dynamics-ce-1})
  from low energy regions of $V(x,0)$ to low energy region of
  $V(x,1)$ within time $[0,1]$ become easier. Similar idea (i.e., to reduce the ``lag'') has been used to guide the choice of the vector field 
  in the escorted free energy simulation method~\cite{pre-escort-fe-simulation2008,escorted-simulation2011}.
  Readers are referred to Subsection~\ref{subsec-ex1} for numerical study of the ideas discussed above. 

\section{Jarzynski-like equality and fluctuation theorem : reaction coordinate case}
\label{sec-coordinate}
Different from the situation in Section~\ref{sec-alchemical} where the free energy
in (\ref{free-energy}) is defined as a function of the parameter $\lambda$
through the invariant measure $\mu_\lambda$ on $\mathbb{R}^n$, in this
section we assume a function $\xi : \mathbb{R}^n \rightarrow \mathbb{R}^d$ is
given and the free energy is defined as a function of $z \in \mathbb{R}^d$
through the invariant measure $\mu_z$ on the level set $\xi^{-1}(z)$. 
In the literature, such a function $\xi$ is often termed as \textit{reaction
coordinate function} or \textit{collective variable}~\cite{givon2004emd,mimick_sde,effective_dynamics,blue-moon,tony-free-energy-compuation,Maragliano2006}. 

In this context, we point out that a Jarzynski-like equality has been obtained
in the previous work~\cite{LELIEVRE2007}, and a Jarzynski-Crooks fluctuation
identity has been derived for the constrained Langevin dynamics in~\cite{Tony-constrained-langevin2012}.
In this section, following the analysis in Section~\ref{sec-alchemical}, we will 
prove a fluctuation theorem (Theorem~\ref{thm-fluct-relation-coordinate}) which is similar to Theorem~\ref{thm-fluct-relation}, and then we obtain
the Jarzynski-like equality (Theorem~\ref{thm-jarzynski-coordinate}) by
applying the fluctuation theorem. Importance sampling and
variance reduction issues will be discussed in Subsection~\ref{subsec-info-the-ce-coordinate}.
\subsection{Mathematical setup}
\label{sec-coordinate-setup}
First of all, we recall some notations as well as some results from the 
work~\cite{effective_dyn_2017,zhang2017} in order to introduce the problem under investigation. 

Let $ \xi : \mathbb{R}^n \rightarrow \mathbb{R}^d$ be a $C^2$ function with components $\xi
= (\xi_1, \xi_2, \cdots, \xi_d)^T \in \mathbb{R}^d$, where $1 \le d < n$. Given $z \in
\mbox{Im}\,\xi \subseteq \mathbb{R}^d$, which is a regular value of the map $\xi$,
we define the level set 
\begin{align}
  \Sigma_{z} =\xi^{-1}(z)=\Big\{ y \in \mathbb{R}^n \,\Big|\, \xi(y) = z \in \mathbb{R}^d\Big\}\,.
\label{submanifold-n}
\end{align}
It is known from the regular value theorem~\cite{banyaga2004lectures} that $\Sigma_{z}$ is a
smooth $(n-d)$-dimensional submanifold of $\mathbb{R}^n$.
Let $\nu_z$ denote the surface measure on $\Sigma_{z}$ which is induced from
the Euclidean metric on $\mathbb{R}^n$, and $\nabla \xi$ denote the $n \times d$ matrix whose entries are $(\nabla\xi)_{i\gamma}
=\frac{\partial\xi_\gamma}{\partial y_i}$,  $1 \le i \le n$, $1 \le \gamma \le d$.

Given a smooth function $V : \mathbb{R}^n \rightarrow \mathbb{R}$, we consider
the probability measure on the submanifold $\Sigma_z$ defined as 
\begin{align}
  d \mu_z = \frac{1}{Q(z)} e^{-\beta V} \Big[\mbox{det}\big(\nabla \xi^T \nabla \xi\big)\Big]^{-\frac{1}{2}} d\nu_z\,,
\label{mu-z}
\end{align}
where $Q(z)$ is the normalization constant. 
The probability measure $\mu_z$ arises in many
situations and plays an important role in the free energy calculation along a
reaction coordinate~\cite{blue-moon,projection_diffusion,effective_dynamics,effective_dyn_2017,tony-free-energy-compuation,zhang2017}.
The free energy for fixed $z \in
\mbox{Im}\,\xi \subseteq \mathbb{R}^d$ is
defined as 
\begin{align}
  \begin{split}
    F(z) =& -\beta^{-1} \ln Q(z) \\
    =& -\beta^{-1} \ln 
  \int_{\Sigma_z} e^{-\beta V} \Big[\mbox{det}\big(\nabla\xi^T \nabla \xi\big)\Big]^{-\frac{1}{2}} d\nu_z \\
    =& -\beta^{-1} \ln 
  \int_{\mathbb{R}^n} e^{-\beta V(y)} \delta\big(\xi(y) - z\big)\,dy \,,
  \end{split}
  \label{free-energy-coordinate}
\end{align}
where the last equality follows from the co-area formula~\cite{evans1991measure,krantz2008geometric}. 
Let $\sigma : \mathbb{R}^n \rightarrow \mathbb{R}^{n \times n}$ be an $n \times n$
matrix valued function such that the function $a(\cdot) :=
(\sigma\sigma^T)(\cdot)$ is uniformly elliptic on
$\mathbb{R}^n$. Let $\Psi =\nabla\xi^T a \nabla \xi$ be the invertible $d\times d$ matrix whose entries are
\begin{align}
  \Psi_{\gamma\gamma'} 
  =(\nabla \xi_\gamma)^T a\nabla \xi_{\gamma'} \,,\quad 1 \le \gamma, \gamma' \le
  d\,,
  \label{psi-ij}
\end{align}
where $\nabla\xi_\gamma$ is the usual gradient of the function $\xi_\gamma$. 
Let $P=\mbox{id}-a\nabla\xi\Psi^{-1}\nabla\xi^T$ be the projection matrix, with entries
\begin{align}
  P_{ij} =& \delta_{ij} -
(\Psi^{-1})_{\gamma\gamma'}
	a_{il}\partial_l\xi_\gamma\,\partial_j\xi_{\gamma'}\,,  \quad 1 \le i,j \le n\,.
\label{p-ij}
\end{align}
Notice that in the above $\delta_{ij}$ is the Kronecker delta function and 
Einstein's summation convention is used here and in the following.
From (\ref{p-ij}), we can directly verify that 
\begin{align}
  \begin{split}
    &P^2=P\,,\quad P^T\nabla \xi_\gamma = 0\,,\quad 1 \le \gamma \le d\,, \\
    &(aP^T)_{ij}=(Pa)_{ij} = a_{ij} -
    (\Psi^{-1})_{\gamma\gamma'} (a\nabla \xi_\gamma)_i (a \nabla \xi_{\gamma'})_j\,, \quad  1\le i,j \le n\, ,
  \end{split}
\label{p-i-j}
  \end{align}
  i.e., $P$ is the orthogonal projection w.r.t. the scalar product
  $\langle u, v\rangle_{a^{-1}} = u^Ta^{-1} v$, for $u, v \in \mathbb{R}^n$.

  It is shown in~\cite{zhang2017} that, starting from $y(0) \in \Sigma_z$, the process 
\begin{align}
  \begin{split}
    dy_i(s) 
= & -(Pa)_{ij} 
\frac{\partial V}{\partial y_j}\,ds + \frac{1}{\beta} \frac{\partial
    (Pa)_{ij}}{\partial y_j}\,ds + \sqrt{2\beta^{-1}}\, (P\sigma)_{ij}\,
    dw_{j}(s)\,,\quad  1\le i \le n\,,
  \end{split}
\label{dynamics-submanifold}
\end{align}
where $w(s)$ is an $n$-dimensional Brownian motion, will remain on the
submanifold $\Sigma_z$ and has a unique invariant measure $\mu_z$ which is
defined in (\ref{mu-z}). In particular, 
denoting by $\mathcal{L}^{\perp}$ the infinitesimal generator of the process (\ref{dynamics-submanifold}), i.e., 
\begin{align}
\mathcal{L}^{\perp} = 
  -(Pa)_{ij} \frac{\partial V}{\partial y_j}\frac{\partial}{\partial y_i} +
  \frac{1}{\beta} \frac{\partial (Pa)_{ij}}{\partial y_j}
  \frac{\partial}{\partial y_i} + \frac{1}{\beta} (Pa)_{ij}
\frac{\partial^2}{\partial y_i\partial y_j}\,,
  \label{generator-l-perp}
\end{align}
it is easy to verify that $\mathcal{L}^{\perp} \xi_\gamma \equiv
0$, for $1 \le \gamma \le d$. 
\subsection{Fluctuation theorem}
\label{subsec-fluct-thm-coordinate}
In order to state the fluctuation theorem, we further introduce a ``controlled'' process
as well as its time-reversed counterpart based on the process (\ref{dynamics-submanifold}). 
Specifically, we let $f = (f_1, f_2,\cdots, f_d)^T : \mathbb{R}^n \times
[0, T] \rightarrow \mathbb{R}^d$ be a bounded smooth function and consider the
process
\begin{align}
  \begin{split}
    dy_i(s) 
= & - (Pa)_{ij} 
\frac{\partial V}{\partial y_j}\,ds + \frac{1}{\beta} \frac{\partial
    (Pa)_{ij}}{\partial y_j}\,ds + (\Psi^{-1})_{\gamma\gamma'}
    (a\nabla\xi_\gamma)_i\,f_{\gamma'}\,ds + \sqrt{2\beta^{-1}}\,
    (P\sigma)_{ij}\, dw_{j}(s)\,,
  \end{split}
\label{dynamics-f}
\end{align}
for $1 \le i \le n$ on the time interval $[0,T]$.
The infinitesimal generator of the process (\ref{dynamics-f}) is given by
\begin{align}
  \mathcal{L} = \mathcal{L}^\perp + (\Psi^{-1})_{\gamma\gamma'}
  (a\nabla\xi_\gamma)_if_{\gamma'} \frac{\partial}{\partial y_i} \,,
  \label{l-coordinate}
\end{align}
where the operator $\mathcal{L}^\perp$ is defined in (\ref{generator-l-perp}), 
and a simple application of Ito's formula implies that 
\begin{align}
  d\xi(y(s)) = f(y(s), s)\,ds \,.
  \label{xi-dt-coordinate}
\end{align}
Similarly, the time-reversed process of the dynamics (\ref{dynamics-f}) on the time
interval $[0, T]$ is defined as 
\begin{align}
  \begin{split}
    dy_i^{R}(s) 
= & - (Pa)_{ij} 
\frac{\partial V}{\partial y_j}\,ds + \frac{1}{\beta} \frac{\partial
    (Pa)_{ij}}{\partial y_j}\,ds - (\Psi^{-1})_{\gamma\gamma'}
    (a\nabla\xi_\gamma)_i\,f^-_{\gamma'}\,ds + \sqrt{2\beta^{-1}}\,
    (P\sigma)_{ij}\, dw_{j}(s)\,,
  \end{split}
\label{dynamics-f-reversed}
\end{align}
where $1 \le i \le n$, $f^-_{\gamma'}(\cdot, s) = f_{\gamma'}(\cdot, T-s)$, and the infinitesimal
generator is
\begin{align}
  \mathcal{L}^R = \mathcal{L}^\perp - (\Psi^{-1})_{\gamma\gamma'}
  (a\nabla\xi_\gamma)_if^-_{\gamma'} \frac{\partial}{\partial y_i}\,.
  \label{l-reversed-coordinate}
\end{align}

Using a similar argument as in the proof of Theorem~\ref{thm-fluct-relation}, we
obtain the following fluctuation theorem which concerns the relation
between the dynamics
(\ref{dynamics-f}) and the time-reversed one (\ref{dynamics-f-reversed}).
\begin{theorem}
  Let $0 \le t' < t \le T$ and $y,y' \in \mathbb{R}^n$. For any continuous
  function $\eta \in C\big(\mathbb{R}^n \times [0,T]\big)$ with compact
  support, we have 
\begin{align}
  \begin{split}
  &e^{-\beta V(y')}\,
    \mathbf{E}^R_{y',t'}\bigg[\exp\bigg(\int_{t'}^t \eta(y^R(s),
      T-s) ds\bigg) \delta\big(y^R(t)-y\big)\,\bigg]\\
    =&e^{-\beta V(y)}\,\mathbf{E}_{y,T-t}\bigg[e^{-\beta
    \mathcal{W}} \exp\bigg(\int_{T-t}^{T-t'} \eta(y(s), s) ds\bigg) 
\delta\big(y(T-t')-y'\big)\bigg]\,,
  \end{split}
  \label{fluct-relation-coordinate}
\end{align}
where 
\begin{align}
  \mathcal{W} = \int_{T-t}^{T-t'} \Big[
    (\Psi^{-1})_{\gamma\gamma'} (a\nabla \xi_\gamma)_i f_{\gamma'} \frac{\partial
    V}{\partial y_i} - \frac{1}{\beta} \frac{\partial}{\partial y_i}
    \Big((\Psi^{-1})_{\gamma\gamma'} (a\nabla \xi_\gamma)_i
    f_{\gamma'}\Big)\Big] ds\,,
  \label{w-coordinate}
\end{align}
  $y^R(\cdot)$, $y(\cdot)$ satisfy the dynamics (\ref{dynamics-f-reversed}) and (\ref{dynamics-f}), respectively.
$\mathbf{E}^R_{y',t'}$ is the conditional expectation with respect to the path
ensemble of the dynamics (\ref{dynamics-f-reversed}) starting from $y^R(t') = y'$ at time $t'$. 
And $\mathbf{E}_{y,T-t}$ is the conditional expectation with respect to the dynamics
  (\ref{dynamics-f}) starting from $y(T-t) = y$ at time $T-t$.
\label{thm-fluct-relation-coordinate}
\end{theorem}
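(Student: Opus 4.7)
The plan is to imitate the PDE-based argument used in Theorem~\ref{thm-fluct-relation}, treating the extra drift $c_i := (\Psi^{-1})_{\gamma\gamma'}(a\nabla\xi_\gamma)_i f_{\gamma'}$ appearing in (\ref{dynamics-f}) as an ``escorting'' vector field added to the constrained dynamics (\ref{dynamics-submanifold}). With this identification the work (\ref{w-coordinate}) reads simply
\begin{align*}
\mathcal{W} = \int_{T-t}^{T-t'}\Big[c\cdot\nabla V - \frac{1}{\beta}\,\nabla\cdot c\Big]\bigl(y(s),s\bigr)\,ds,
\end{align*}
which is exactly of the escorted form (\ref{work-w-escorted}). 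Thus the proof will proceed in close parallel with the proof of Theorem~\ref{thm-fluct-relation}, with $\mathcal{L}^\perp$ in (\ref{generator-l-perp}) playing the role of $\mathcal{L}_1$ and $c$ playing the role of the escorting field $u$.

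First I would fix $(y',t')$ and define
\begin{align*}
u(y,t;y',t') = \mathbf{E}^R_{y',t'}\Big[\exp\Big(\int_{t'}^t\eta\bigl(y^R(s),T-s\bigr)\,ds\Big)\,\delta\bigl(y^R(t)-y\bigr)\Big],
\end{align*}
which by Kolmogorov's forward equation for (\ref{dynamics-f-reversed}) satisfies
\begin{align*}
\partial_t u = \bigl(\mathcal{L}^R_{y,T-t}\bigr)^* u + \eta(y,T-t)\,u,\qquad u(\cdot,t';y',t')=\delta(\cdot-y'),
\end{align*}
with $(\mathcal{L}^R)^*$ the formal $L^2(\mathbb{R}^n,dy)$ adjoint of (\ref{l-reversed-coordinate}). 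Next I would introduce
\begin{align*}
g(y,t;y',t') = \mathbf{E}_{y,T-t}\Big[e^{-\beta\mathcal{W}}\exp\Big(\int_{T-t}^{T-t'}\eta(y(s),s)\,ds\Big)\delta\bigl(y(T-t')-y'\bigr)\Big]
\end{align*}
and, by applying Ito's formula to the product $e^{-\beta\mathcal{W}_{(T-t,s)}}\,g(y(s),T-s)$ exactly as in Lemma~\ref{lemma-g}, derive that $g$ solves
\begin{align*}
\partial_t g = \mathcal{L}\,g + \Bigl[\beta\,c\cdot\nabla V - \nabla\cdot c + \eta(y,T-t)\Bigr]g,\qquad g(\cdot,t';y',t')=\delta(\cdot-y'),
\end{align*}
with $\mathcal{L}$ as in (\ref{l-coordinate}).

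The crucial step is to set $\omega(y,t)=e^{-\beta V(y)}g(y,t;y',t')$ and show that $\omega$ satisfies the same PDE as $e^{-\beta V(y')}u$. Mimicking the algebra of (\ref{e-v-l1-l2}), I would expand $e^{-\beta V}\mathcal{L}^\perp g$ and $e^{-\beta V}(c\cdot\nabla g)$ in terms of $\omega$ and its derivatives; the $\nabla V$-terms produced by the conjugation should combine with the bracket $\beta c\cdot\nabla V-\nabla\cdot c$ to reproduce, under the sign flip $c\mapsto-c$, the control contribution present in $(\mathcal{L}^R)^*$. The essential analytic ingredient is the invariance of $\mu_z$ under $\mathcal{L}^\perp$ on every level set $\Sigma_z$, which via the co-area formula is equivalent to the distributional identity
\begin{align*}
\bigl(\mathcal{L}^\perp\bigr)^*\Bigl[e^{-\beta V(y)}\,\delta\bigl(\xi(y)-z\bigr)\Bigr] = 0,\qquad \forall\,z\in\mathbb{R}^d,
\end{align*}
which is the reaction-coordinate counterpart of (\ref{div-j-zero}) and is precisely what allows $e^{-\beta V}\mathcal{L}^\perp g$ to be rewritten as $(\mathcal{L}^\perp)^*\omega$ modulo divergence-type terms that cancel.

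Once this is verified, matching the initial conditions at $t=t'$ is immediate since $\omega(y,t')=e^{-\beta V(y)}\delta(y-y')=e^{-\beta V(y')}\delta(y-y')$, and uniqueness of the forward Kolmogorov PDE yields $e^{-\beta V(y')}u=\omega$, which is (\ref{fluct-relation-coordinate}) understood in the distributional sense of Remark~\ref{rmk-delta}. The main obstacle is the algebraic bookkeeping in the third step: the projection matrix $P$ from (\ref{p-i-j}), the special form of $c$ (which lies in the range of $a\nabla\xi$), and the conjugation by $e^{-\beta V}$ must combine exactly so that the $\mathcal{L}^\perp$-part gets converted to $(\mathcal{L}^\perp)^*$ without leaving spurious terms. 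Apart from this core computation, every other step is a direct transcription of the corresponding step in the proof of Theorem~\ref{thm-fluct-relation}.
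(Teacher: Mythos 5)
Your proposal follows essentially the same route as the paper's proof in Appendix~\ref{app-4}: define $u$ through the forward Kolmogorov equation of the reversed dynamics (\ref{dynamics-f-reversed}), define $g$ through the Feynman--Kac formula for (\ref{dynamics-f}), conjugate by $e^{-\beta V}$, and match PDEs and initial data. Two small slips are worth noting: the Feynman--Kac weight $e^{-\beta\mathcal{W}}$ produces the zeroth-order term $-\beta\,c\cdot\nabla V+\nabla\cdot c$ in the equation for $g$, not $+\beta\,c\cdot\nabla V-\nabla\cdot c$ as you wrote (compare (\ref{bar-l-coordinate}), where moreover $f$ must be replaced by $f^-$ in the backward-time parametrization); and the ingredient that closes the key computation is not merely the invariance of $\mu_z$ on each level set but the stronger pointwise operator identity $e^{-\beta V}\mathcal{L}^{\perp}\big(e^{\beta V}\omega\big)=\big(\mathcal{L}^{\perp}\big)^{*}\omega$ (reversibility of $\mathcal{L}^{\perp}$ with respect to $e^{-\beta V}\,dy$, reflecting the symmetry of $Pa=aP^T$), which the paper verifies by the direct computation (\ref{e-v-l-coordinate}) and which leaves no residual divergence-type terms to cancel.
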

The proof of Theorem~\ref{thm-fluct-relation-coordinate} can be found in Appendix~\ref{app-4}.
Similar to Theorem~\ref{thm-fluct-relation}, the
identity~(\ref{fluct-relation-coordinate}) should be
understood in the sense of distributions. We refer to Remark~\ref{rmk-delta} for further discussions. 
\subsection{Jarzynski-like equality}
\label{subsec-jarzynski-like-coordinate}
In this subsection, we assume that there is a function $\widetilde{f} = (\widetilde{f}_1,
\widetilde{f}_2, \cdots, \widetilde{f}_d)^T : \mathbb{R}^d \times [0, T]
\rightarrow \mathbb{R}^d$, such that 
\begin{align}
f(y,s) = \widetilde{f}(\xi(y), s), \quad \forall (y, s) \in \mathbb{R}^n
  \times [0, T]\,. 
  \label{f-f-tilde}
\end{align} 
Fix $t\in [0, T]$ and suppose that both the ODE 
\begin{align}
  \dot{\zeta}(s\,;z) = \widetilde{f}(\zeta(s\,;z), s), \quad s \in [0, t]\,,
  \label{zt-ode}
\end{align}
starting from $\zeta(0\,;z)=z$, and the ODE
\begin{align}
  \dot{\zeta}^R(s\,;z) = -\widetilde{f}(\zeta^R(s\,;z), T-s), \quad s \in [T-t, T]\,,
  \label{zt-ode-r}
\end{align}
starting from $\zeta^R(T-t\,;z) = z$, have a unique solution for any $z \in \mathbb{R}^d$. 
Under this assumption, it is not difficult to conclude that
\begin{align*}
  \zeta^R(s\,;\zeta(t\,;z)) = \zeta(T-s\,;z)\,,\quad \zeta(T-s\,;
  \zeta^R(T\,;z)) = \zeta^R(s\,;z)\,,\quad s \in [T-t,T]\,,
\end{align*}
which in turn implies that the map $\zeta^R(T\,;\cdot) : \mathbb{R}^d
\rightarrow \mathbb{R}^d$ is invertible and its
inverse is given by $\zeta(t\,;\cdot)$.

Consider the process $y(s)$ in (\ref{dynamics-f}) on the time interval $[0,t]$, and 
process $y^R(s)$ in (\ref{dynamics-f-reversed}) on the time interval $[T-t, T]$, respectively.
Assume that $\xi(y(0))=z$ and $\xi(y^R(T-t)) = z'$, where $z, z'\in \mathbb{R}^d$. Similar to (\ref{xi-dt-coordinate}), we can obtain
\begin{align*}
  d\xi(y(s)) = \widetilde{f}\big(\xi(y(s)), s\big)\, ds,\qquad d\xi(y^R(s)) =
  -\widetilde{f}\big(\xi(y^R(s)), T-s\big)\, ds\,,
\end{align*}
which imply that 
\begin{align}
  \xi(y(s)) = \zeta(s\,;z)\,,\quad \xi(y^R(T-s)) = \zeta^R(T-s\,;z')\,, \qquad \forall\,s \in [0,t]\,.
  \label{xi-yt-coordinate}
\end{align}

Applying Theorem~\ref{thm-fluct-relation-coordinate}, we can obtain the following
Jarzynski-like equality for the free energy difference in the reaction coordinate case.
\begin{theorem}[Jarzynski-like equality]
  Let $y(s)$ be the dynamics in (\ref{dynamics-f}) with the function $f$ in
  (\ref{f-f-tilde}) and $z(s)$ solve the ODE
  (\ref{zt-ode}). For any smooth and bounded test function $\varphi : \mathbb{R}^n \rightarrow
  \mathbb{R}$ and $t \in [0, T]$, we have 
\begin{align}
  \mathbf{E}_{z(0),0} \Big[\varphi(y(t))\,e^{-\beta W(t)}\Big] = e^{-\beta \big(F(z(t)) - F(z(0))\big)} \int_{\Sigma_{z(t)}} \varphi\, d\mu_{z(t)} 
  \,,
  \label{generalized-jarzynski-coordinate-varphi}
\end{align}
   where $F(\cdot)$ is the free energy in (\ref{free-energy-coordinate}) and
  $W(t)$ is defined as
  \begin{align}
    W(t) = \int_{0}^{t} \Big[
    (\Psi^{-1})_{\gamma\gamma'} (a\nabla \xi_\gamma)_i  \frac{\partial
    V}{\partial y_i} - \frac{1}{\beta} \frac{\partial}{\partial y_i} \Big((\Psi^{-1})_{\gamma\gamma'} (a\nabla \xi_\gamma)_i
  \Big)\Big]\, \dot{z}_{\gamma'}(s) \, ds\,.
  \label{w-coordinate-jarzynski}
\end{align}
  $\mathbf{E}_{z(0),0}$ denotes the conditional expectation with respect to the dynamics
$y(s)$, starting from the initial distribution $y(0) \sim \mu_{z(0)}$ on
  $\Sigma_{z(0)}$.
  In particular, taking $\varphi\equiv 1$, we have 
\begin{align}
  \mathbf{E}_{z(0),0}\Big[e^{-\beta W(t)}\Big] = e^{-\beta \big(F(z(t)) - F(z(0))\big)} \,.
  \label{generalized-jarzynski-coordinate}
\end{align}
\label{thm-jarzynski-coordinate}
\end{theorem}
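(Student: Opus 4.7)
The plan is to derive Theorem~\ref{thm-jarzynski-coordinate} from the fluctuation theorem Theorem~\ref{thm-fluct-relation-coordinate}, in direct analogy with how Theorem~\ref{thm-1} was derived from Theorem~\ref{thm-fluct-relation} in Subsection~\ref{subsec-fluct-thm}. The first step is to identify the discrepancy between the quantity $\mathcal{W}$ in (\ref{w-coordinate}) and the work $W(t)$ in (\ref{w-coordinate-jarzynski}). Because $f(y,s)=\widetilde{f}(\xi(y),s)$, the chain rule together with $\Psi_{\gamma\gamma''}=(a\nabla\xi_\gamma)_i\partial_i\xi_{\gamma''}$ yields
\begin{align*}
(\Psi^{-1})_{\gamma\gamma'}(a\nabla\xi_\gamma)_i\,\partial_i f_{\gamma'} = (\mbox{div}_z\widetilde{f})(\xi(y),s)\,.
\end{align*}
Expanding the derivative inside (\ref{w-coordinate}) and using $\dot z_{\gamma'}(s)=\widetilde{f}_{\gamma'}(z(s),s)=f_{\gamma'}(y(s),s)$ (by (\ref{xi-yt-coordinate})) along a forward trajectory, one obtains
\begin{align*}
\mathcal{W} = W(t) - \frac{1}{\beta}\int_0^t (\mbox{div}_z\widetilde{f})(z(s),s)\,ds\,.
\end{align*}
This motivates choosing $\eta(y,s) = -(\mbox{div}_z\widetilde{f})(\xi(y),s)$ in Theorem~\ref{thm-fluct-relation-coordinate}, so that $e^{-\beta\mathcal{W}}\exp\bigl(\int_0^t\eta(y(s),s)\,ds\bigr) = e^{-\beta W(t)}$ on every forward path.

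With this choice of $\eta$, I set $t'=T-t$ in Theorem~\ref{thm-fluct-relation-coordinate} and apply its distributional version (analogous to Remark~\ref{rmk-delta}) against a test function $\varphi(y')$ depending only on the endpoint of the forward trajectory. The right-hand side collapses, via the identity above, to $\int_{\mathbb{R}^n} e^{-\beta V(y)}\,\mathbf{E}_{y,0}[\varphi(y(t))\,e^{-\beta W(t)}]\,dy$. On the left-hand side, the key observation is that under the reversed dynamics one has $\xi(y^R(s))=\zeta^R(s;\xi(y'))$ deterministically, so the exponential factor involving $\eta$ pulls out of the expectation and depends on $y'$ only through $\xi(y')$.

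Next I apply the co-area formula to decompose both Lebesgue integrals into iterated integrals over level sets, using the definition (\ref{mu-z}) of $\mu_z$ and the free energy (\ref{free-energy-coordinate}). The right-hand side becomes $\int_{\mathbb{R}^d}Q(z)\,\mathbf{E}_{z,0}[\varphi(y(t))\,e^{-\beta W(t)}]\,dz$. On the left-hand side, the inner $d\nu_{z'}$ integral yields $Q(z')\int_{\Sigma_{z'}}\varphi\,d\mu_{z'}$ times the deterministic exponential in $z'$. A change of variable $z'=\zeta(t;z)$ combined with Liouville's formula
\begin{align*}
|\det D_z\zeta(t;z)| = \exp\Bigl(\int_0^t(\mbox{div}_z\widetilde{f})(\zeta(u;z),u)\,du\Bigr)
\end{align*}
makes the Jacobian cancel the exponential correction exactly, producing
\begin{align*}
\int_{\mathbb{R}^d} Q(z(t))\,\Bigl[\int_{\Sigma_{z(t)}}\varphi\,d\mu_{z(t)}\Bigr]\,dz
  = \int_{\mathbb{R}^d} Q(z)\,\mathbf{E}_{z,0}\big[\varphi(y(t))\,e^{-\beta W(t)}\big]\,dz\,,
\end{align*}
where $z(t)=\zeta(t;z)$.

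The final step extracts the pointwise statement from this integrated identity. Choosing $\varphi(y)=\chi_\epsilon(\xi(y))\psi(y)$ with $\chi_\epsilon$ an approximate Dirac mass at a prescribed target $z^\ast\in\mathbb{R}^d$ and passing to the limit $\epsilon\to 0$ localizes both sides to the single level set $\Sigma_{z^\ast}$; the Jacobian factors arising from the localization coincide on both sides (they originate from the same map $z\mapsto\zeta(t;z)$) and cancel, yielding (\ref{generalized-jarzynski-coordinate-varphi}) with $z(0)$ determined by $\zeta(t;z(0))=z^\ast$ and $\psi$ arbitrary. The main technical obstacle is the bookkeeping needed to see that the reaction-coordinate change of variables, the Liouville volume factor, and the divergence correction $\mathcal{W}-W(t)$ combine into a single clean cancellation; once this is set up, the remainder is essentially the same PDE/distributional manipulation already carried out in the proof of the analogous step in Subsection~\ref{subsec-fluct-thm}.
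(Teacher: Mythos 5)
Your proposal follows essentially the same route as the paper's proof: the same choice $\eta=-(\mbox{div}_z\widetilde f)\circ\xi$ in the fluctuation theorem, the same observation that $\xi$ evolves deterministically so the divergence correction factors out, the same co-area decomposition, and the same Liouville/Jacobi determinant identity (the paper's Lemma~\ref{lemma-det}, stated there for the reversed flow) to cancel the Jacobian against the exponential correction. The only cosmetic difference is the mollifier used in the final localization (you localize the terminal level set via $\chi_\epsilon(\xi(\cdot))$ in the test function, whereas the paper localizes the initial level set with a Gaussian weight $e^{-\beta|\xi(y)-z(0)|^2/\tau}$ and invokes a proposition from~\cite{zhang2017} for the $\tau\to 0$ limit); both yield the same pointwise identity.
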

\begin{proof}
  Let $\mbox{div}_z$ denote the divergence operator with respect to $z \in
  \mathbb{R}^d$. Notice that from the definitions of $\Psi$ in (\ref{psi-ij})
  and the function $f$ in (\ref{f-f-tilde}) we can compute 
  \begin{align*}
 (\Psi^{-1})_{\gamma\gamma'} (a\nabla \xi_\gamma)_i
    \frac{\partial f_{\gamma'}}{\partial y_i} = 
 (\Psi^{-1})_{\gamma\gamma'} (a\nabla \xi_\gamma)_i
    \frac{\partial \widetilde{f}_{\gamma'}}{\partial z_j} \frac{\partial
    \xi_j}{\partial y_i} = (\mbox{div}_z \widetilde{f}\,)(\xi(y), s)\,.
  \end{align*}
  Choosing $\eta(y,s) =-(\mbox{div}_z\widetilde{f}\,)(\xi(y),s)$ in
  the equality
  (\ref{fluct-relation-coordinate}) of Theorem~\ref{thm-fluct-relation-coordinate},
  we obtain
  \begin{align}
  \begin{split}
  &e^{-\beta V(y')}\,
    \mathbf{E}^R_{y',T-t}\bigg[\exp\bigg(-\int_{T-t}^T
    (\mbox{div}_z\widetilde{f}\,)\big(\xi(y^R(s)),T-s\big)ds\bigg)\delta\big(y^R(T)-y\big)\,\bigg]\\
    =&e^{-\beta V(y)}\,\mathbf{E}_{y,0}\Big[e^{-\beta W(t)}
\delta\big(y(t)-y'\big)\Big]\,.
  \end{split}
    \label{thm-jarzynski-coordinate-eqn1}
  \end{align}
  Let $\tau>0$ and multiply both sides of
  (\ref{thm-jarzynski-coordinate-eqn1}) by $\varphi(y')e^{-\beta\frac{|\xi(y)-z(0)|^2}{\tau}}$. 
Integrating with respect to $y,y'$, yields
  \begin{align}
  \begin{split}
    &\int_{\mathbb{R}^n} e^{-\beta \big(V(y)
    +\frac{|\zeta^R(T\,;\,\xi(y))-z(0)|^2}{\tau}\big)}\,
    \exp\bigg(-\int_{T-t}^T(\mbox{div}_z\widetilde{f}\,)\big(\zeta^R(s\,;\xi(y)),T-s\big)ds\bigg)
    \,\varphi(y)\, dy\\
    =&\int_{\mathbb{R}^n}\,e^{-\beta \big(V(y) + \frac{|\xi(y)-z(0)|^2}{\tau}\big)}\,\mathbf{E}_{y,0}\Big[e^{-\beta W(t)} \varphi(y(t))
    \Big]\, dy\,.
  \end{split}
    \label{thm-jarzynski-coordinate-eqn2-simplify}
  \end{align}
  Notice that, on the left hand side above, we have used the fact that 
$\xi(y^R(s))$ under the conditional expectation is deterministic 
and is given by (\ref{xi-yt-coordinate}). 

   We can rewrite the left hand side of (\ref{thm-jarzynski-coordinate-eqn2-simplify}) by applying the co-area formula
  \begin{align}
    \begin{split}
      &\int_{\mathbb{R}^n} e^{-\beta \big(V(y)+ \frac{|\zeta^R(T\,;\,\xi(y))-z(0)|^2}{\tau}\big)}\,
    \exp\bigg(-\int_{T-t}^T(\mbox{div}_z\widetilde{f}\,)\big(\zeta^R(s\,;\xi(y)),T-s\big)ds\bigg) \,\varphi(y)\, dy\\
=&\int_{\mathbb{R}^d} 
      e^{-\beta\frac{|z'-z(0)|^2}{\tau}} 
      \bigg[
\int_{\{y\,|\,\zeta^R(T\,;\,\xi(y))=z'\}} e^{-\beta V(y)}\,
 \,\varphi(y)\, 
    \exp\bigg(-\int_{T-t}^T(\mbox{div}_z\widetilde{f}\,)\big(\zeta^R(s\,;\xi(y)),T-s\big)ds\bigg)\\
      &\hspace{3cm} \times \Big[\det\Big(\big(\nabla
      \zeta^R(T\,;\xi(y))\big)^T\nabla\zeta^R(T\,;\xi(y))\Big)\Big]^{-\frac{1}{2}}
      \nu^R_{z'}(dy)\bigg]\,dz'\,,
    \end{split}
    \label{thm-jarzynski-coordinate-eqn2-lhs-coarea}
  \end{align}
  where $\nu^R_{z'}$ is the volume measure on the level set
  $\big\{y\in\mathbb{R}^n\,|\,\zeta^R(T\,;\xi(y))=z'\big\}$, 
  $\nabla\zeta^R(s\,;\xi(y))$ denotes the $n\times d$ matrix with
  components $\big(\nabla\zeta^R(s\,;\xi(y))\big)_{i\gamma}=\frac{\partial
  \zeta^R_\gamma(s\,;\,\xi(y))}{\partial y_i}$,
  for $s \in [T-t,T]$, $1 \le \gamma \le d$ and $1\le i \le n$. 
  
To simplify the above expressions, let
  $\nabla_z\zeta^R(s\,;z)$ denote the 
  $d\times d$ matrix with components $(\nabla_z\zeta^R(s\,;z))_{ij} =
  \frac{\partial\zeta^R_i(s\,;\,z)}{\partial z_j}$ for $1 \le i, j \le d$,
  i.e., the differentiations with
  respect to the initial value at time $T-t$. Furthermore, since $\zeta^R(T\,;\cdot)$ is
  invertible, we can deduce that $\zeta^R(s\,;\cdot)$ is invertible for all
  $s\in [T-t,T]$, which then implies that the matrix $\nabla_z \zeta^R(s\,;z)$ has full rank for
  $s\in [T-t,T]$. Applying chain rule, we have $\nabla \zeta^R(s\,;\xi(y)) =
  \nabla\xi\nabla_z\zeta^R(s\,;\xi(y))$ 
  and therefore 
  \begin{align*}
\Big[
  \det\Big(\big(\nabla \zeta^R(T\,;\xi(y))\big)^T\nabla\zeta^R(T\,;\xi(y))\Big)\Big]^{-\frac{1}{2}} =
\Big[\det \Big(\nabla_z\zeta^R(T\,;\xi(y))\Big)\Big]^{-1}
    \Big[\det\big(\nabla\xi^T\nabla\xi)(y)\Big]^{-\frac{1}{2}}\,.
  \end{align*}
  Combining the above identity, the equation (\ref{thm-jarzynski-coordinate-eqn2-lhs-coarea}), and applying Lemma~\ref{lemma-det} below,
  we know that equation (\ref{thm-jarzynski-coordinate-eqn2-simplify}) can be
  simplified as 
  \begin{align}
    \begin{split}
      &\frac{1}{Z_\tau}\int_{\mathbb{R}^n}\,e^{-\beta
      \big(V(y) + \frac{|\xi(y)-z(0)|^2}{\tau}\big)}\,\mathbf{E}_{y,0}\Big[e^{-\beta W(t)} \varphi(y(t))
      \Big]\,  dy\\
      =&\frac{\Big(\frac{\pi\tau}{\beta}\Big)^{\frac{d}{2}}}{Z_\tau}\Big(\frac{\beta}{\pi\tau}\Big)^{\frac{d}{2}}\int_{\mathbb{R}^d} 
      e^{-\beta\frac{|z'-z(0)|^2}{\tau}} 
      \bigg[
\int_{\{y\,|\,\zeta^R(T\,;\,\xi(y))=z'\}} e^{-\beta V(y)}\,
 \,\varphi(y)\Big[\det\big(\nabla \xi^T\nabla\xi)\Big]^{-\frac{1}{2}}
      \nu^R_{z'}(dy)\bigg]\,dz'\,,
    \end{split}
  \end{align}
  where $Z_\tau=\int_{\mathbb{R}^n} e^{-\beta
  \big(V(y)+\frac{|\xi(y)-z(0)|^2}{\tau}\big)} dy$ is the normalization
  constant.
  Letting $\tau\rightarrow 0$ and applying~\cite[Proposition $3$]{zhang2017}, we obtain
  \begin{align}
    \begin{split}
      &\int_{\Sigma_{z(0)}}\,\mathbf{E}_{y,0}\Big[e^{-\beta W(t)} \varphi(y(t)) \Big]\, \mu_{z(0)}(dy)\\
      =&  \frac{1}{Q(z(0))} \int_{\big\{y\,\big|\,\zeta^R(T\,;\,\xi(y))=z(0)\big\}}  
      e^{-\beta V(y)}\, \,\varphi(y)\Big[\det\big(\nabla \xi^T\nabla\xi)\Big]^{-\frac{1}{2}}
      \nu^R_{z(0)}(dy)\,,
    \end{split}
    \label{tau-limit-equality}
  \end{align}
  where $Q(\cdot)$ is the normalization constant in (\ref{mu-z}).
  Since the inverse of the map $\zeta^R(T\,;\cdot)$ is $\zeta(t\,;\cdot)$, we
  know $$\big\{y\in\mathbb{R}^n\,\big|\, \zeta^R(T\,;\xi(y))=z(0)\big\}=
  \big\{y\in\mathbb{R}^n\,\big|\,\xi(y)=\zeta(t\,;z(0)) =
  z(t)\big\}=\Sigma_{z(t)}\,,$$
  and therefore (\ref{tau-limit-equality}) becomes 
  \begin{align}
    \begin{split}
      \int_{\Sigma_{z(0)}}\,\mathbf{E}_{y,0}\Big[e^{-\beta W(t)} \varphi(y(t)) \Big]\, \mu_{z(0)}(dy)
      = \frac{Q(z(t))}{Q(z(0))} 
      \int_{\Sigma_{z(t)}} \varphi(y) \mu_{z(t)}(dy)\,,
    \end{split}
  \end{align}
  which is equivalent to the identity
  (\ref{generalized-jarzynski-coordinate-varphi}).
\end{proof}
 We have used the following result in the above proof.
\begin{lemma}
  Let $\zeta^R(s\,;z)$ be the solution of the ODE (\ref{zt-ode-r})
  for $s\in [T-t,T]$, starting from $z\in\mathbb{R}^d$ at time $s=T-t$. 
  $\nabla_z\zeta^R(s\,;z)$ denotes the $d\times d$ matrix where
  $(\nabla_z\zeta^R(s\,;z))_{ij} = \frac{\partial\zeta^R_i(s\,;\,z)}{\partial z_j}$
  for $1 \le i,j \le d$ and $T-t \le s \le T$.
  Suppose that $\nabla_z\zeta^R(s\,;z)$ is invertible for $T-t \le s \le T$, then we have 
  \begin{align}
    \det\Big(\nabla_z\zeta^R(s\,;z)\Big) = e^{-\int_{T-t}^s
    (\mbox{\normalfont{div}}_z\widetilde{f}\,)(\zeta^R(s'\,;\,z),T-s')\,ds'}\,,\quad s \in [T-t,T]\,.
    \label{lemma-det-formula}
  \end{align}
  \label{lemma-det}
\end{lemma}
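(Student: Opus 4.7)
The plan is to recognize the identity as an instance of the classical Abel--Jacobi--Liouville formula for the evolution of the determinant of the fundamental matrix of a linear ODE. First I would set $M(s) = \nabla_z\zeta^R(s\,;z)$ and differentiate the defining ODE (\ref{zt-ode-r}) with respect to the initial datum $z$. Since $\widetilde{f}$ is smooth and the invertibility hypothesis rules out finite-time blow-up of $M(s)$ on $[T-t,T]$, this differentiation is legitimate and yields the linear matrix ODE
\begin{equation*}
\dot{M}(s) \;=\; A(s)\,M(s), \qquad A(s) \;:=\; -\bigl(\nabla_z\widetilde{f}\bigr)\bigl(\zeta^R(s\,;z),\,T-s\bigr),
\end{equation*}
with initial condition $M(T-t) = I_{d\times d}$, since $\zeta^R(T-t\,;z)=z$.

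Next I would apply Jacobi's formula for the derivative of a determinant, $\frac{d}{ds}\det M(s) = \det(M(s))\,\mathrm{tr}\bigl(M(s)^{-1}\dot{M}(s)\bigr)$, which is valid precisely because $M(s)$ is assumed invertible on $[T-t,T]$. Substituting the linear ODE above gives
\begin{equation*}
\frac{d}{ds}\det M(s) \;=\; \det(M(s))\,\mathrm{tr}(A(s)) \;=\; -\det(M(s))\,(\mathrm{div}_z\widetilde{f}\,)\bigl(\zeta^R(s\,;z),\,T-s\bigr).
\end{equation*}
This is a scalar linear first-order ODE in $\det M(s)$ with initial value $\det M(T-t)=1$.

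Finally, integrating the resulting ODE from $T-t$ to $s$ produces the exponential formula (\ref{lemma-det-formula}) directly. I do not anticipate any serious obstacle: the invertibility hypothesis is precisely what is needed both to apply Jacobi's formula (so that $\log\det M$ is well defined) and to guarantee that the solution of the scalar ODE remains nonzero, so the exponential representation holds on all of $[T-t,T]$. The only mildly delicate point is to justify smoothness of $\zeta^R(s\,;z)$ in $z$ and the interchange of $\frac{\partial}{\partial z}$ with $\frac{d}{ds}$, which follows from standard theorems on smooth dependence of ODE flows on initial data, given that $\widetilde{f}$ is smooth.
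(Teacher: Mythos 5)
Your proposal is correct and follows essentially the same route as the paper: differentiate the ODE (\ref{zt-ode-r}) with respect to the initial datum to obtain a linear matrix ODE for $\nabla_z\zeta^R(s\,;z)$ with identity initial condition, apply Jacobi's formula to reduce to a scalar linear ODE for the determinant via $\mathrm{tr}(\nabla_z\widetilde{f}) = \mathrm{div}_z\widetilde{f}$, and integrate. The only (immaterial) difference is that you write the variational equation with $\nabla_z\widetilde{f}$ acting on the left, whereas the paper writes it on the right; the trace, and hence the conclusion, is the same either way.
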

\begin{proof}
  Differentiating both sides of the ODE (\ref{zt-ode-r}) with respect to $z$, we obtain
  the matrix equation
  \begin{align}
    \frac{d\big(\nabla_z\zeta^R(s\,;z)\big)}{ds} = - \nabla_z\zeta^R(s\,;z)\,\nabla_z
    \widetilde{f}(\zeta^R(s\,;z),T-s)\,,\quad s \in [T-t,
    T]\,,
  \end{align}
  with the initial condition $\nabla_z\zeta^R(T-t\,;z) = \mbox{id}$. Applying Jacobi's formula, we
  know that the determinant of $\nabla_z\zeta^R(s\,;z)$ satisfies 
  \begin{align*}
    & \frac{d\big[\det\big(\nabla_z \zeta^R(s\,;z)\big)\big]}{ds}\\
    =&
    \det\big(\nabla_z\zeta^R(s\,;z)\big)\,\mbox{tr}\bigg(\Big(\nabla_z\zeta^R(s\,;z)\Big)^{-1}
    \frac{d\big(\nabla_z\zeta^R(s\,;z)\big)}{ds}\bigg) \\
    =&
    -\det\big(\nabla_z\zeta^R(s\,;z)\big)\,\mbox{tr}\Big(
     \nabla_z \widetilde{f}(\zeta^R(s\,;z),T-s) \Big) \\
     =&
     -\det\big(\nabla_z\zeta^R(s\,;z)\big)\,\big(\mbox{div}_z\widetilde{f}\,\big)(\zeta^R(s\,;z),T-s)\,.
  \end{align*}
  The expression (\ref{lemma-det-formula}) is obtained by integrating the above equation. 
\end{proof}
\begin{remark}
  \begin{enumerate}
    \item
      In the special case when the reaction coordinate $\xi \in \mathbb{R}$ is scalar, matrix
$a=\sigma=\mbox{id}$, we have $\Psi = |\nabla \xi|^2$ and it can be checked that the work (\ref{w-coordinate-jarzynski}) becomes 
\begin{align}
  \begin{split}
  W(t) =& \int_{0}^{t} \bigg[
      \frac{\nabla \xi}{|\nabla \xi|^2} \cdot \nabla V 
      - \frac{1}{\beta} \mbox{div}\Big(\frac{\nabla
    \xi}{|\nabla\xi|^2}\Big)\bigg] \dot{z}(s)\,ds \\
    =&\int_{0}^{t} 
    \frac{\nabla \xi}{|\nabla \xi|^2} \cdot \Big[\nabla \Big(V  + \frac{1}{\beta} \ln
    |\nabla \xi|\Big) + \frac{1}{\beta} H\Big]\,\dot{z}(s)\,ds \,,
  \end{split}
  \label{w-coordinate-jarzynski-special}
\end{align}
where $H=-\mbox{div}\Big(\frac{\nabla \xi}{|\nabla
      \xi|}\Big)\frac{\nabla\xi}{|\nabla \xi|}$ is the mean curvature vector (field) of the surface $\Sigma_{z}$~\cite{LELIEVRE2007}.

  Notice that the free energy (\ref{free-energy-coordinate}) is different from the one considered in~\cite{LELIEVRE2007}.
      In fact, from the second expression in
      (\ref{w-coordinate-jarzynski-special}), we see that
      Theorem~\ref{thm-jarzynski-coordinate} is identical to the Feynman-Kac
  fluctuation equality Theorem of~\cite{LELIEVRE2007} for the potential $V +
  \frac{1}{2\beta} \ln (\mbox{\textnormal{det}}\,\Psi)$.
\item
  As in the alchemical transition case, one can also study the escorted
      dynamics and Crooks's relations in the reaction coordinate case. For
      simplicity, we will omit the discussions on the escorted dynamics and only
      briefly summarize the Crooks's relations. In fact, by modifying the proof of 
      Theorem~\ref{thm-jarzynski-coordinate}, we can show that 
      \begin{align}
	\frac{\mathbf{E}(e^{-\beta W} \mathcal{G})}{\mathbf{E}^R(\mathcal{G}^R)} = e^{-\beta \Delta F(T)}\,,
      \end{align}
      for any bounded smooth function $\mathcal{G}$ on the path space,
      where $W=W(T)$ is the work in (\ref{w-coordinate-jarzynski}), $\mathcal{G}^R(y(\cdot)) = \mathcal{G}(y(T-\cdot))$ for any path
      $y(\cdot)$, $\mathbf{E}$ and $\mathbf{E}^R$ are the expectation with
      respect to the process $y(\cdot)$ in (\ref{dynamics-f}) starting from
      $y(0)\sim \mu_{z(0)}$ on $\Sigma_{z(0)}$, and the expectation with respect to the process
      $y^R(\cdot)$ in (\ref{dynamics-f-reversed}) starting from $y^R(0) \sim
      \mu_{z(T)}$ on $\Sigma_{z(T)}$, respectively. In particular, this implies
  \begin{align}
    \frac{\mathbf{E} \big(e^{-\beta
    W}\phi(W)\big)}{\mathbf{E}^{R}(\phi(-W^R))}
    = e^{-\beta \Delta F(T)}\,,\quad \forall~ \phi \in C_b(\mathbb{R})\,,
    \label{crooks-eps0-2-coordinate}
  \end{align}
      where $W^R$ is the work for the time-reversed process $y^R(\cdot)$ in (\ref{dynamics-f-reversed}).
      We refer to Remark~\ref{rmk-crook-jarzynski-eps0} for comparisons. 
\item
  Similarly as in the alchemical transition case, by considering the Jarzynski-like
      equality~(\ref{generalized-jarzynski-coordinate}) for the dynamics 
\begin{align}
  \begin{split}
    dy_i(s) 
    = & - \frac{1}{\tau} (Pa)_{ij} 
    \frac{\partial V}{\partial y_j}\,ds + \frac{1}{\beta\tau} \frac{\partial
    (Pa)_{ij}}{\partial y_j}\,ds + (\Psi^{-1})_{\gamma\gamma'}
    (a\nabla\xi_\gamma)_i\,f_{\gamma'}\,ds \\
    &+ \sqrt{\frac{2\beta^{-1}}{\tau}}\, (P\sigma)_{ij}\, dw_{j}(s)\,,
  \end{split}
\label{dynamics-f-tau}
\end{align}
as $\tau\rightarrow 0$, we can recover the thermodynamic integration identity
      in the reaction coordinate case. See Appendix~\ref{app-1} and \ref{app-2} for details.
  \end{enumerate}
  \label{rmk-2}
\end{remark}
\subsection{Information-theoretic formulation and numerical considerations}
\label{subsec-info-the-ce-coordinate}
In this subsection, we study the information-theoretic formulation of the
Jarzynski-like equality (\ref{generalized-jarzynski-coordinate}) in the
reaction coordinate setting. Numerical issues related to computing free
energy differences will be discussed as well. Since the
analysis is similar to Subsection~\ref{subsec-is} and
Subsection~\ref{subsec-ce}, the discussion in this subsection will be brief
and mainly focus on the changes.

First of all, let $\mathbf{P}$, $\mathbf{E}$ denote the probability measure and the expectation of the path ensemble corresponding to
the dynamics (\ref{dynamics-f}) starting from $y(0)\sim \mu_{z(0)}$, with the
function $f$ given in~(\ref{f-f-tilde}). 
We can rewrite the equality (\ref{generalized-jarzynski-coordinate})
as 
\begin{align}
  \Delta F = - \beta^{-1} \ln \mathbf{E}\,\big( e^{-\beta W}\big)\,,
  \label{df-jarzynski-coordinate}
\end{align}
where $\Delta F = F(z(T)) - F(z(0))$ is the free energy difference and
$W=W(T)$ is defined in (\ref{w-coordinate-jarzynski}).
Let $\overline{\mathbf{P}}$ be another probability measure on the path space which is
equivalent to $\mathbf{P}$ and $\overline{\mathbf{E}}$ denote the corresponding expectation.  
Applying a change of measure in (\ref{df-jarzynski-coordinate}), we have  
\begin{align}
  \Delta F = - \beta^{-1} \ln \overline{\mathbf{E}}\,\Big( e^{-\beta
  W}\frac{d\mathbf{P}}{d\overline{\mathbf{P}}}\Big)\,.
  \label{df-jarzynski-coordinate-change-measure}
\end{align}
Following the same argument in Subsection~\ref{subsec-is}, we can deduce exactly the
same inequality (\ref{df-w-ineq}), as well as the expression for the optimal measure
$\mathbf{P}^*$, which is characterized by (\ref{opt-p}), such that the Monte Carlo estimator based on
(\ref{free-energy-optimal-estimator}) will achieve zero variance.
The derivations (\ref{phi-cost}), (\ref{variation-form}), (\ref{entropy-exp})
in Subsection~\ref{subsec-is} carry over to the current setting as well. 

On the other hand,  since the trajectories of
the dynamics (\ref{dynamics-f}) satisfy $\xi(y(t)) = z(t)$
for $t \in [0,T]$, it is important to notice that the probability measure $\mathbf{P}$ concentrates on the set of paths 
\begin{align}
  \Big\{y(\cdot)\,\Big|\, y(\cdot) \in C([0,T], \mathbb{R}^n), ~ y(t) \in
  \Sigma_{z(t)}, ~0 \le t \le T\Big\}\,.
  \label{path-subset-coordinate}
\end{align}
Accordingly, the probability measure $\overline{\mathbf{P}}$ used to perform
the change of measure in (\ref{df-jarzynski-coordinate-change-measure}) should
also concentrate on the set (\ref{path-subset-coordinate}) in order to assure that
it is equivalent to $\mathbf{P}$.

The optimal measure $\mathbf{P}^*$ can be characterized more transparently by
considering the HJB equation. Specifically, define 
\begin{align}
  g(y,t) =
\mathbf{E}\Big(e^{-\beta W_{(t,T)}}~\Big|~y(t) = y\Big)\,,\quad  \forall\,y \in \Sigma_{z(t)}\,,
\end{align}
where $y(\cdot)$ satisfies (\ref{dynamics-f}) and $W_{(t,T)}$ is similarly defined as in (\ref{w-coordinate-jarzynski})
except that the integration is from $t$ to $T$. It follows from the Feynman-Kac
formula that $g$ satisfies 
\begin{align}
  \begin{split}
  &\partial_t g + \mathcal{L} g -\beta \Big[(\Psi^{-1})_{\gamma\gamma'} (a\nabla \xi_\gamma)_i  \frac{\partial
    V}{\partial y_i} - \frac{1}{\beta} \frac{\partial}{\partial y_i} \Big((\Psi^{-1})_{\gamma\gamma'} (a\nabla \xi_\gamma)_i
  \Big)\Big]\, f_{\gamma'} g = 0 \,, \\
  &g(\cdot, T) = 1\,.
  \end{split}
\end{align}
where $\mathcal{L}$ is the infinitesimal generator defined in (\ref{l-coordinate}) for the process $y(\cdot)$.
And a simple calculation shows that $U=-\beta^{-1}\ln g$ satisfies the HJB
equation
\begin{align}
  \begin{split}
  & \partial_t U + \min_{c \in \mathbb{R}^n} \Big\{\mathcal{L}U + (P\sigma
  c)\cdot \nabla U + \frac{|c|^2}{4} \\
  &\hspace{2.2cm}+ \Big[(\Psi^{-1})_{\gamma\gamma'} (a\nabla \xi_\gamma)_i  \frac{\partial
    V}{\partial y_i} - \frac{1}{\beta} \frac{\partial}{\partial y_i} \Big((\Psi^{-1})_{\gamma\gamma'} (a\nabla \xi_\gamma)_i
  \Big)\Big]\, f_{\gamma'}\Big\} = 0\,, \\
  & U(\cdot, T) = 0\,,
  \end{split}
\end{align}
from which we conclude that the optimally controlled dynamics satisfies
\begin{align}
  \begin{split}
    dy_i(s) 
= & - (Pa)_{ij} 
\frac{\partial V}{\partial y_j}\,ds + \frac{1}{\beta} \frac{\partial
    (Pa)_{ij}}{\partial y_j}\,ds + (\Psi^{-1})_{\gamma\gamma'}
    (a\nabla\xi_\gamma)_i\,f_{\gamma'}\,ds \\
    & + \big[P\sigma u^*_s(y(s))\big]_i\,ds + \sqrt{2\beta^{-1}}\, (P\sigma)_{ij}\, dw_{j}(s)\,, \quad 1 \le i \le
    n\,,
  \end{split}
  \label{dynamics-f-optimal-controlled}
\end{align}
where the optimal feedback control $u^*_s(y) = -2(P\sigma)^T\nabla U$, starting
from the distribution $\mu_0^*$ which is determined by
$\frac{d\mu_0^*}{d\mu_{z(0)}} \propto g(\cdot, 0)$.

\textbf{Cross-entropy method.}
In the following, we briefly discuss the cross-entropy method following Subsection~\ref{subsec-ce}.
Consider a family of parameterized probability measures
$\{\mathbf{P}_{\bm{\omega}}\,|\,\bm{\omega} \in \mathbb{R}^k\}$, where, for given
$\bm{\omega} = (\omega_1, \omega_2, \cdots, \omega_k)^T \in \mathbb{R}^k$,
$\mathbf{P}_{\bm{\omega}}$ is the probability measure of paths
corresponding to the dynamics 
\begin{align}
  \begin{split}
    dy_i(s) 
= & - (Pa)_{ij} 
\frac{\partial V}{\partial y_j}\,ds + \frac{1}{\beta} \frac{\partial
    (Pa)_{ij}}{\partial y_j}\,ds + (\Psi^{-1})_{\gamma\gamma'}
    (a\nabla\xi_\gamma)_i\,f_{\gamma'}\,ds \\
    & + (P\sigma)_{ij} \Big(\sum_{l=1}^k \omega_l \phi^{(l)}_j\Big) ds +
    \sqrt{2\beta^{-1}}\, (P\sigma)_{ij}\, dw_{j}(s)\,, \quad 1 \le i \le
    n\,,
  \end{split}
  \label{dynamics-f-omega}
\end{align}
where $\phi^{(l)} = (\phi^{(l)}_1, \phi^{(l)}_2, \cdots, \phi^{(l)}_n)^T :
\mathbb{R}^n \times [0, T]\rightarrow \mathbb{R}^n$ are $k$ ansatz 
functions, $1 \le l \le k$.
As a special choice, we consider $\phi^{(l)}=-\sigma^T\nabla V^{(l)}$ where $V^{(l)} :
\mathbb{R}^n \rightarrow \mathbb{R}$, $1 \le l \le k$,  are smooth and
linearly independent potential functions, by which (\ref{dynamics-f-omega}) becomes 
\begin{align}
  \begin{split}
    dy_i(s) 
= & - (Pa)_{ij} 
    \frac{\partial \big(V + \sum_{l=1}^k \omega_l V^{(l)}\big)}{\partial y_j}\,ds + \frac{1}{\beta} \frac{\partial
    (Pa)_{ij}}{\partial y_j}\,ds \\
    & + (\Psi^{-1})_{\gamma\gamma'} (a\nabla\xi_\gamma)_i\,f_{\gamma'}\,ds + \sqrt{2\beta^{-1}}\, (P\sigma)_{ij}\,
    dw_{j}(s)\,, \quad 1 \le i \le n\,,
  \end{split}
  \label{dynamics-f-omega-vl}
\end{align}
i.e., paths are sampled with the modified potential function $V +
\sum\limits_{l=1}^k \omega_l V^{(l)}$.

Applying Ito's formula as in (\ref{xi-dt-coordinate}), we can verify that trajectories of
the dynamics (\ref{dynamics-f-omega}), starting from $y(0) \in \Sigma_{z(0)}$, satisfy $\xi(y(t)) = z(t)$ for $t \in [0,
T]$ as well. Therefore, the probability measures $\mathbf{P}_{\bm{\omega}}$
indeed concentrate on the set (\ref{path-subset-coordinate}). Applying Girsanov's theorem, we obtain 
\begin{align}
  \frac{d\mathbf{P}_{\boldsymbol{\omega}}}{d\mathbf{P}} = 
  \exp\bigg[\sqrt{\frac{\beta}{2}} \int_0^T \Big(\sum_{l=1}^k
  \omega_l\phi^{(l)}\Big)\cdot dw(s)  - \frac{\beta}{4} \int_0^T 
  \Big|\sum_{l=1}^k \omega_l\phi^{(l)}\Big|^2\, ds\bigg]\,,
  \label{girsanov-ce-coordinate}
\end{align}
where $w(s)$ is the Brownian motion in the original dynamics
(\ref{dynamics-f}) (i.e., under the probability measure $\mathbf{P}$). 
Following the same argument as in Subsection~\ref{subsec-ce}, we know that the
minimizer of the optimization problem (\ref{mini-omega-problem}) is given by
the unique solution of the linear equation $A\bm{\omega}^*=R$, where 
\begin{align}
  \begin{split}
    A_{ll'} = \mathbf{E} \bigg(e^{-\beta W} \int_0^T \phi^{(l)}\cdot \phi^{(l')}\,
  ds\bigg)\,, 
    \quad R_{l} = \sqrt{2\beta^{-1}}\mathbf{E} \bigg[e^{-\beta W} \int_0^T \phi^{(l)} \cdot dw(s) \bigg]\,,
\end{split}
\label{a-r-coeff-coordinate}
\end{align}
for $1 \le l, l' \le k$.

\textbf{Variance reduction by increasing mixing.}
In practice, however, due to the complicate expressions of work $W$ in (\ref{w-coordinate-jarzynski}) or
(\ref{w-coordinate-jarzynski-special}), it becomes difficult to have an
intuitive idea to guide the choices
of ansatz functions, which play a crucial role in the cross-entropy method
above. In the following, we briefly discuss another idea that can be explored
in order to reduce the variance in the free energy calculation based on
Jarzynski-like identity. 

Different from the importance sampling method which improves the efficiency of
Monte Carlo method by 
increasing the sampling frequency of paths with small work, the idea 
here, which is inspired by the analysis in Appendix~\ref{app-1} and
Appendix~\ref{app-2}, is to compute free energy differences based on trajectories of the dynamics
(\ref{dynamics-f-tau}) with a small $\tau$ (similar idea has also been
investigated in~\cite{efficient-free-energy-calculation-2000,fast-growth-method-jarzynski2001}).  
The observation is that the standard Monte Carlo estimator based on
Jarzynski-like identity typically sample trajectories with large work (therefore low
efficiency) because the nonequilibrium dynamics do not have enough time to
equilibrate under nonequilibrium force. Therefore, by decreasing
$\tau$ in (\ref{dynamics-f-tau}), the mixing of the ``equilibrium part'' of the nonequilibrium system becomes
faster at each fixed nonequilibrium force. 
Numerically, the work $W$ of the sampled trajectories is likely to be both smaller and more concentrated. 
From the analysis in Appendix~\ref{app-1} and Appendix~\ref{app-2}, we know
that the free energy calculation method based on Jarzynski-like identity
(\ref{generalized-jarzynski-coordinate}) reduces to the thermodynamic integration method when $\tau\rightarrow 0$.
In practice, $\tau$ should be chosen not very small since otherwise the system
will become more stiff and a smaller time step-size has to be used in numerical integration.
Readers are referred to Subsection~\ref{subsec-ex2} for numerical study of free energy
calculation using different $\tau$. 
\section{Numerical examples}
\label{sec-examples}
We consider two simple examples 
and study the efficiency of Monte Carlo methods for 
free energy computation. 
\subsection{Example $1$: 1D example in alchemical transition case}
\label{subsec-ex1}
In this example, we consider one-dimensional potentials 
\begin{align}
  V(x, \lambda) = (1-\lambda)\frac{(x + 1)^2}{2} + \lambda \Big(\frac{(x^2-1)^2}{4}-0.4x\Big)\,,
  \label{pot-v-ex1}
\end{align}
where $x\in \mathbb{R}$ and $\lambda \in [0,1]$. As $\lambda$ increases from $0$ to $1$, 
$V(\cdot, \lambda)$ varies from a quadratic potential centered at $x=-1$ to a
tilted double well potential (Figure~\ref{sub-fig-pot-all}).
Recalling the free energy $F$ defined in (\ref{free-energy}), (\ref{normal-const}),
we will compute free energy differences $\Delta F(\lambda) = F(\lambda) -
F(0)$, using Monte Carlo based on Jarzynski's identity (\ref{jarzynski-repeat}).
We fix $\beta = 5.0$ and the SDE 
\begin{align}
  dx(s) = -\frac{\partial V}{\partial x}(x(s), \lambda(s))\, ds + \sqrt{2\beta^{-1}} dw(s)\,,
  \label{sde-ex1}
\end{align}
with control protocol $\lambda(s) = s$, $s \in [0,1]$, will be considered in the
Monte Carlo simulations. Clearly, for the initial distribution
$\mu_0=\mu_{\lambda(0)}$, we have $\frac{d\mu_0}{dx} \propto \exp\big(-\beta
\frac{(x+1)^2}{2}\big)$.

In fact, since the problem is one dimensional in space, we can directly compute 
the normalization constant $Z(\lambda)$
by numerically integrating (\ref{normal-const}) and therefore
obtain the free energy differences $\Delta F(\lambda)$, which are shown in Figure~\ref{subfig-df-cureve}. 
In particular, we obtain $\Delta F(1) = F(1) - F(0)= -3.44 \times 10^{-1}$
and this will be our reference solution. 
Furthermore, we can also approximate the optimal change of measure
$\mathbf{P}^*$ in (\ref{opt-p-decomp}) by computing the optimal control force $u^*$ and
the optimal initial distribution $\mu_0^*$ according to (\ref{opt-u}), (\ref{opt-mu0}), respectively. 
 For this purpose, we need to compute the function 
  $g(x,t) = \mathbf{E}_{x,t} \big(e^{-\beta W_{(t,T)}}\big)$
 in (\ref{g-fun-repeat}) which satisfies (\ref{g-pde-repeat}). 
 Notice that, in the current setting, we have $T=1$ and (\ref{g-pde-repeat}) becomes 
 \begin{align}
   \begin{split}
   &\frac{\partial g}{\partial t} - \frac{\partial V}{\partial x} \frac{\partial g}{\partial x}
   + \frac{1}{\beta} \frac{\partial^2 g}{\partial x^2} -\beta \big(V(x,1) -
   V(x,0)\big) g = 0\,,\quad  0 \le t < 1\,,\\
   & g(\cdot, 1) = 1\,.
 \end{split}
 \label{g-fun-ex1}
 \end{align}
 To compute $g$, we truncate the space of $(x,t)$ 
 to $[-5.0, 5.0] \times [0,1]$ and discretize the PDE (\ref{g-fun-ex1}) 
on a uniform grid of size $10000\times 10000$, following a similar way that was described in \cite{Hartmann2017-ptrf,zhangs-schuette-entropy-2017}. The
solution $g$ is obtained by solving the discretized system backwardly from
$t=1$ to $0$. The function $U=-\beta^{-1}\ln g$ is displayed in Figure~\ref{subfig-log-g-2d}
and the profile of $g(\cdot, 0)$ at $t=0$ is shown in Figure~\ref{subfig-g-t0}.
Based on these results, we can obtain the optimal control 
potentials (which is $V+2U$ according to (\ref{dynamics-1-u}) and (\ref{opt-u})) and the optimal initial distribution $\mu_0^*$. These results
are shown in Figure~\ref{sub-fig-opt-pot-lam}, Figure~\ref{subfig-u-t0} and
Figure~\ref{subfig-start-mu}, respectively. In particular, combining the 
expression (\ref{opt-mu0}) with Figure~\ref{subfig-g-t0} and Figure~\ref{subfig-start-mu}, it can be observed 
that, due to the strong
inhomogeneity of $g(\cdot, 0)$, the high probability density region of the
optimal initial distribution $\mu_0^*$ is
shifted along the positive $x$ axis and has little overlap with that of the
distribution $\mu_0$.

Now we turn to discuss the performance of Monte Carlo methods. First of all, we
apply the standard Monte Carlo method to estimate free energy differences. SDE
(\ref{sde-ex1}) is discretized with time step-size $\Delta s = 5 \times
10^{-4}$ and we repeat the simulation $10$ times. 
For each independent run, the estimator 
\begin{align}
  \mathcal{I}(\lambda) = \frac{1}{N} \sum_{i=1}^N e^{-\beta W_i(\lambda)}
  \label{estimator-stdmc-ex1}
\end{align}
is computed by generating $N = 5 \times 10^{5}$ trajectories of dynamics (\ref{sde-ex1}) starting
from $\mu_0$, where $W_i(\lambda)$ is the numerical approximation of 
(\ref{work-w-special}) on $[0, \lambda]$ for the $i$th trajectory.
The free energy differences are then estimated by 
\begin{align}
  \Delta F(\lambda) \approx -\beta^{-1}\ln \mathcal{I}(\lambda)\,,
  \label{df-ex1}
\end{align}
which is asymptotically unbiased when $N \rightarrow +\infty$.
The results are summarized in Figure~\ref{subfig-df-cureve},
Figure~\ref{subfig-multi-run} as well as 
in the last row of Table~\ref{tab-1}. We can observe that the estimations of
free energy differences have
very large fluctuations within the $10$ runs and the standard Monte Carlo
estimator (\ref{estimator-stdmc-ex1}) has a very large (sample) standard deviation. 

Noticing that the initial distribution $\mu_0$ in fact is very different from
the optimal initial distribution $\mu_0^*$, 
we have also used the probability measure $\bar{\mu}_0$, which is given by 
$\frac{d\bar{\mu}_0}{dx} \propto \exp\big(-\beta
\frac{(x-0.5)^2}{2}\big)$, 
as the initial distribution in importance sampling Monte Carlo methods. 
From the profiles of their probability density functions in
Figure~\ref{subfig-start-mu}, we expect that the importance sampling Monte
Carlo estimators using $\bar{\mu}_0$ will have better performance than 
estimators using $\mu_0$. 
Besides the change of measure in the initial distribution, the controlled dynamics
\begin{align}
  dx(s) = -\frac{\partial V}{\partial x}(x(s), \lambda(s))\, ds +
  \sum_{l=1}^k\omega_l\phi^{(l)}(x(s), s)\, ds + \sqrt{2\beta^{-1}}\, dw(s)
  \label{sde-ex1-control}
\end{align}
is used to generate trajectories instead of dynamics (\ref{sde-ex1}), which leads to a further change of measure on
path space. In (\ref{sde-ex1-control}), $\phi^{(l)}$ are ansatz functions which
we choose to be either piecewise linear functions or Gaussian
functions~\cite{Hartmann2016-Nonlinearity}.  
In the case of piecewise linear ansatz function, we divide the domain $[-1.3,
1.3]$ uniformly into $30$ Voronoi cells $\mathcal{C}_l$ and the ansatz
functions are defined as $\phi^{(l)}(x,t) = (1-t)\mathbf{1}_{\mathcal{C}_l}(x)$, $1 \le l \le
30$, where $\mathbf{1}_{\mathcal{C}_l}$ denotes the characteristic function of
cell $\mathcal{C}_l$. In the case of Gaussian ansatz function, we choose
two functions $\phi^{(l)}(x,t) = \frac{\partial V^{(l)}}{\partial x}(x,t)$, where $l=1,2$ and 
\begin{align}
  V^{(1)}(x,t) = (1-t) \exp\Big(-\frac{x^2}{2}\Big)\,,\quad V^{(2)}(x,t) =
  (1-t)\exp\Big(-\frac{(x-1.2)^2}{4.5}\Big)\,.
  \label{gauss-ansatz-ex1}
\end{align}
In both cases, the ansatz functions are chosen based on the idea discussed in
Subsection~\ref{subsec-ce} and the dependence on time $t$ is included since we know
that the optimal control force, which is proportional to $\frac{\partial
g}{\partial x}$, vanishes at time $t=1$, due to the Dirichlet boundary condition in (\ref{g-fun-ex1}).

After these preparations, we apply the cross-entropy method discussed in
Subsection~\ref{subsec-ce} to optimize the coefficients $\omega_l$ in
(\ref{sde-ex1-control}) by simulating $10^5$ trajectories. 
The control forces at time $t=0$, as well as the control potentials in
Gaussian ansatz case are shown Figure~\ref{subfig-u-t0} and Figure~\ref{subfig-gauss-pot}, respectively. 
Apparently, although the control forces are different from the optimal one, all of them  
can help drive the system along the positive $x$ axis. 
Similarly as in the standard Monte Carlo case, we estimate the free energy differences 
using importance sampling Monte Carlo method for $10$ times where $N= 5\times
10^5$ trajectories of the controlled dynamics (\ref{sde-ex1-control}) are
simulated for each run. 
Instead of (\ref{estimator-stdmc-ex1}), estimator 
\begin{align}
  \mathcal{I}(\lambda) = \frac{1}{N} \sum_{i=1}^N e^{-\beta W_i(\lambda)}\, r_i
\label{estimator-ipmc-ex1}
\end{align}
is computed, where $r_i$ is the likelihood ratio given by Girsanov's theorem 
(see (\ref{girsanov-ce})).  The results are shown in
Figure~\ref{subfig-df-cureve}, Figure~\ref{subfig-multi-run}, as well as in Table~\ref{tab-1}.
Comparing to the standard deviation of the standard Monte Carlo estimator
(\ref{estimator-stdmc-ex1}),
we observe that the standard deviations of the importance sampling Monte Carlo
estimators $\mathcal{I}(\lambda)$ in (\ref{estimator-ipmc-ex1}) are significantly reduced 
when we applied a change of measure both in the initial distribution and in the
dynamics, i.e., when the controlled dynamics (\ref{sde-ex1-control}) with
initial distribution $\bar{\mu}_0$ is used. And both types of ansatz functions
exhibit comparable performances. 
To better understand the efficiency of Monte Carlo methods, 
the probability density functions and the mean values of work within the $10$ runs of simulations 
are shown in Figure~\ref{subfig-work-1}, Figure~\ref{subfig-work-2} and
Table~\ref{tab-1} for each Monte Carlo estimators.
 Clearly, by applying importance sampling both in the initial distribution and
 in the dynamics,
trajectories with low work value are more efficiently sampled, leading to a
much better efficiency of the Monte Carlo estimators. 

\begin{figure}[htpb]
  \subfigure[$V(x,\lambda)$]{\includegraphics[width=0.48\textwidth]{./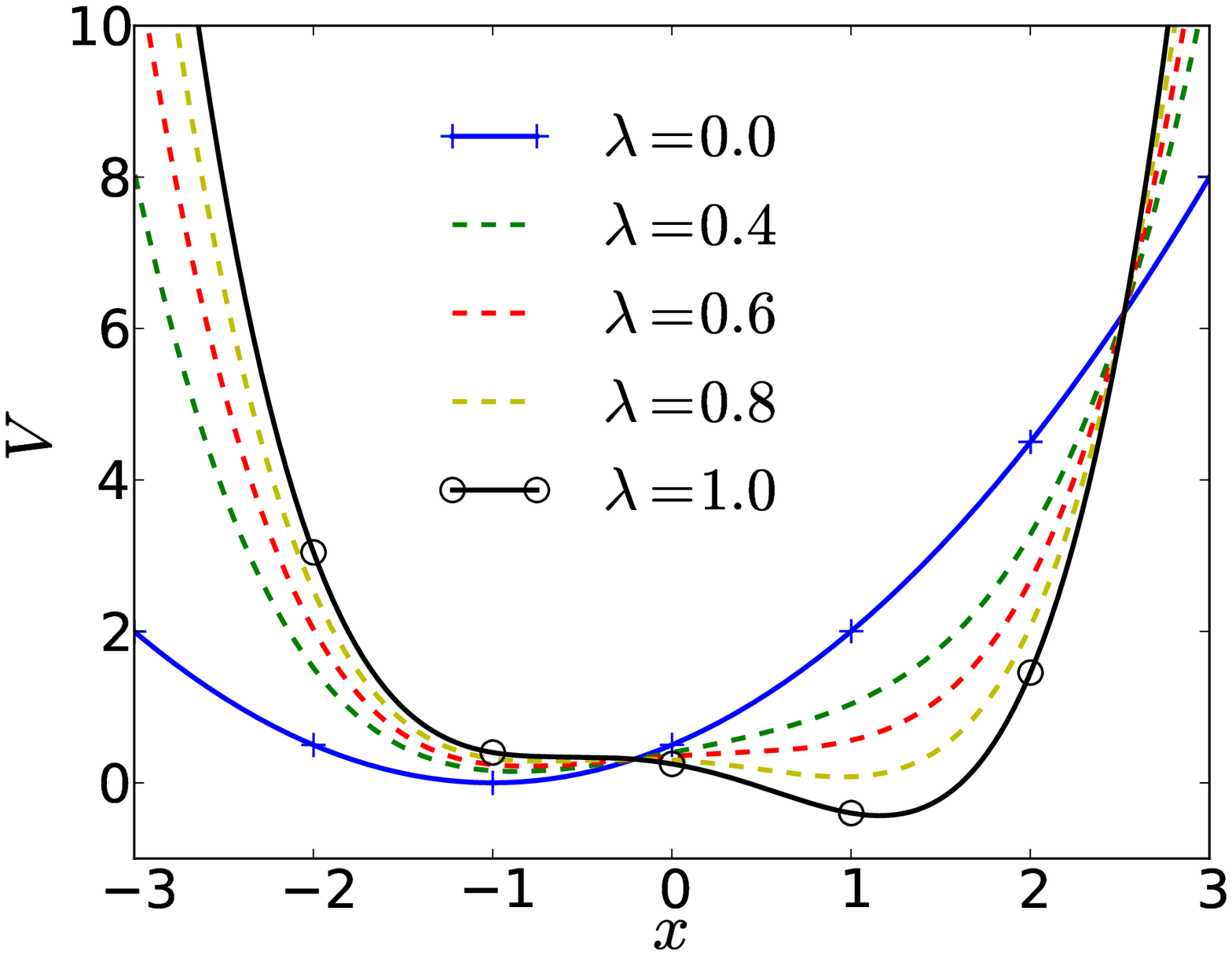}
  \label{sub-fig-pot-all}}
  \subfigure[$U=-\beta^{-1}\ln g$]{\includegraphics[width=.48\textwidth]{./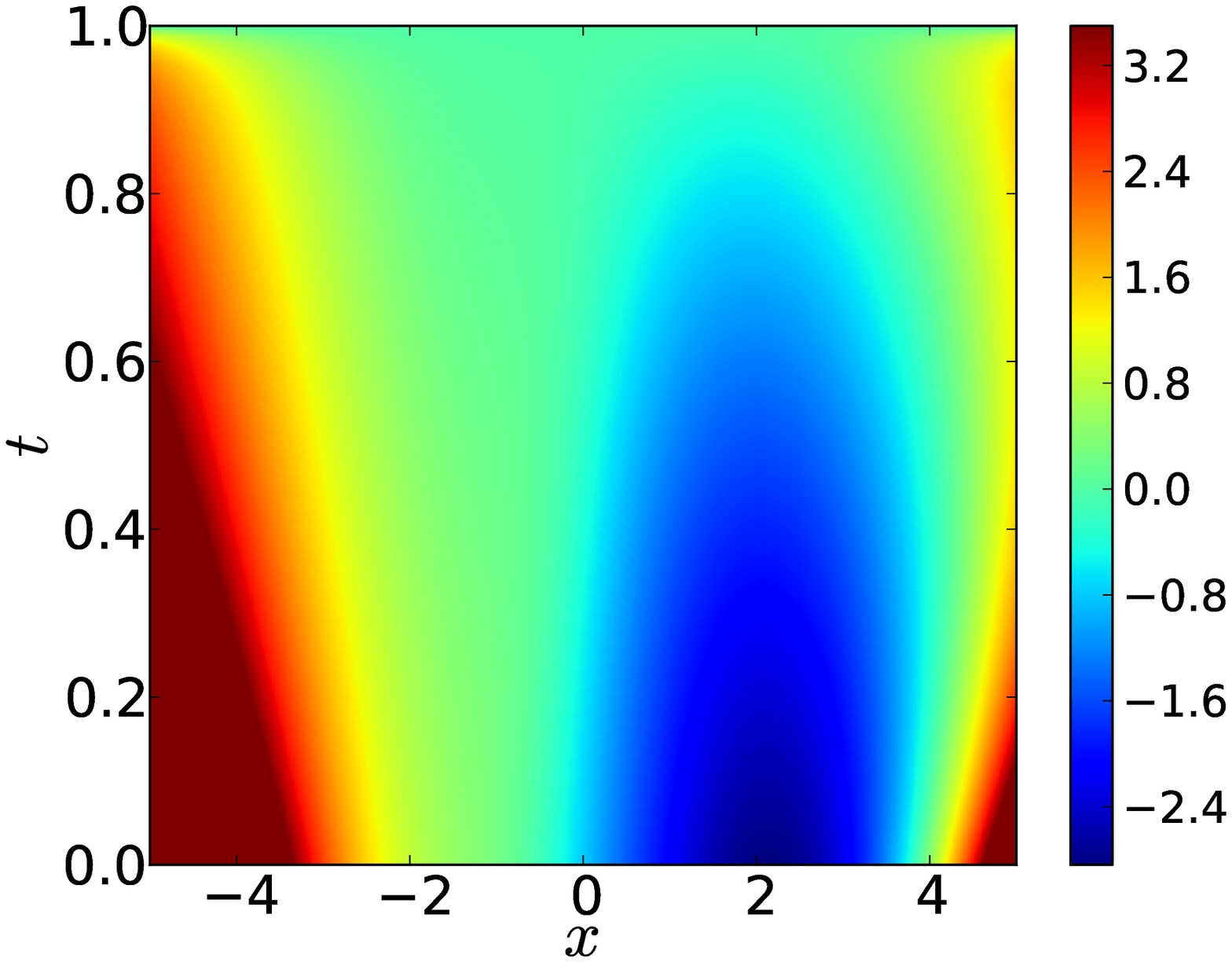}\label{subfig-log-g-2d}}
  \caption{Example $1$. (a) Potential $V(x,\lambda)$ in (\ref{pot-v-ex1}). (b) 
  Function $U=-\beta^{-1} \ln g$, where $\beta = 5.0$ and $g$ solves PDE
(\ref{g-fun-ex1}).}
\end{figure}

\begin{figure}[htpb]
  \subfigure[Optimally biased potentials]{\includegraphics[width=0.48\textwidth]{./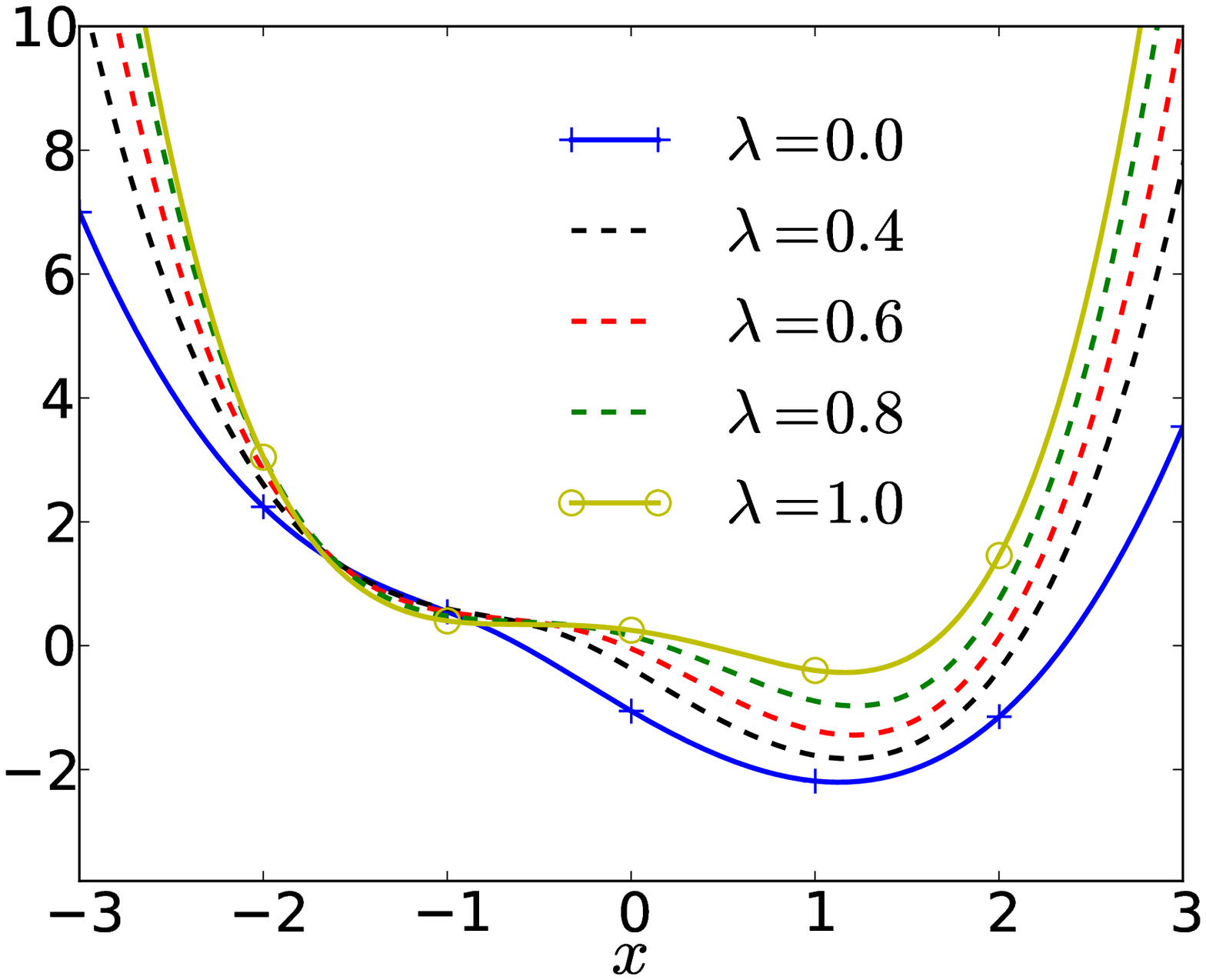}\label{sub-fig-opt-pot-lam}}
  \subfigure[Biased potentials using Gaussian ansatz]{\includegraphics[width=0.48\textwidth]{./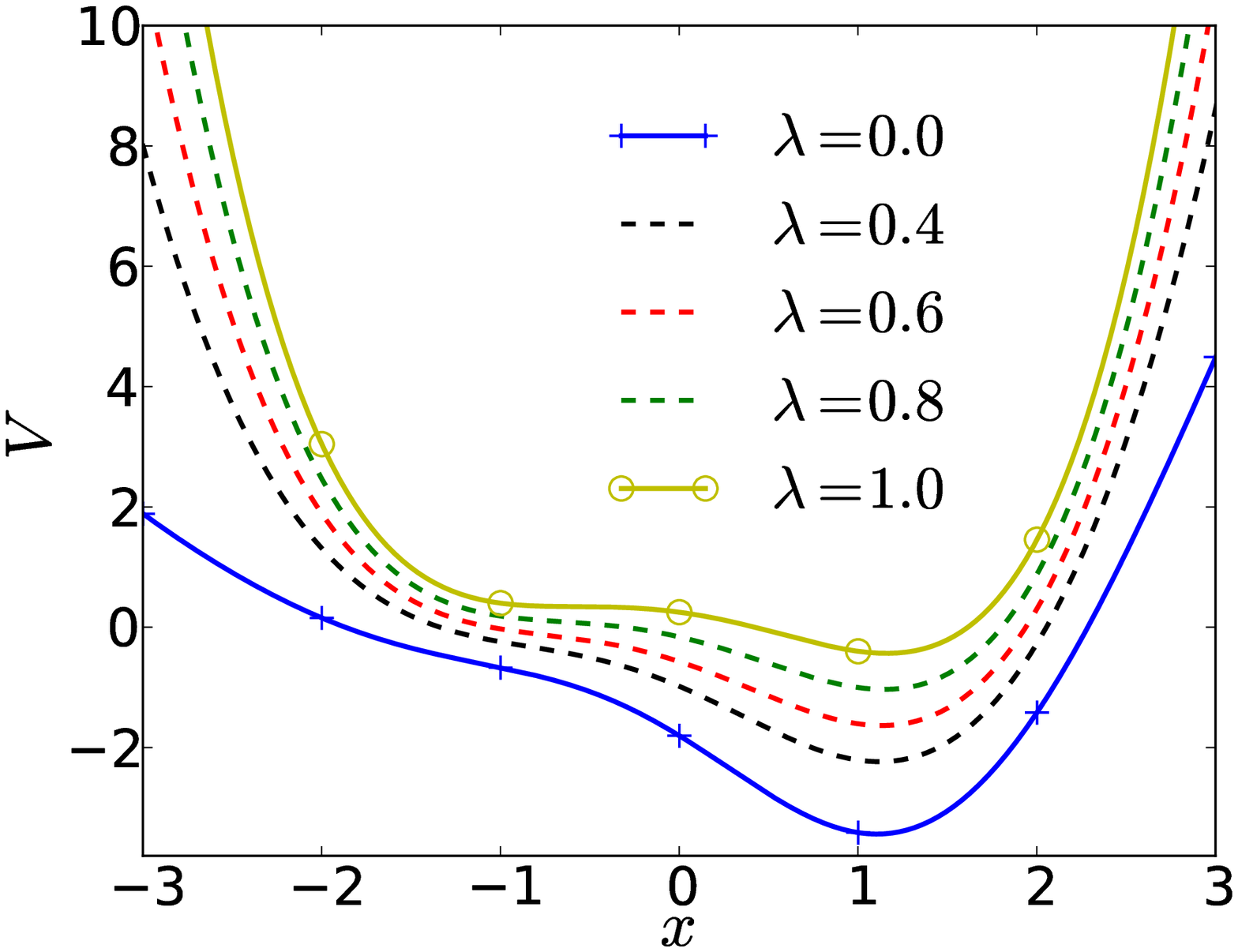}\label{subfig-gauss-pot}}
  \caption{Example $1$ with the control protocol $\lambda(s) = s$, for $s\in [0, 1]$. (a) Optimally biased potential ($V + 2U)$. (b) Biased potentials
  computed from cross-entropy method with Gaussian ansatz functions (\ref{gauss-ansatz-ex1}). }
\end{figure}
\begin{figure}[htpb]
  \subfigure[$g(x,0)$]{\includegraphics[width=0.48\textwidth]{./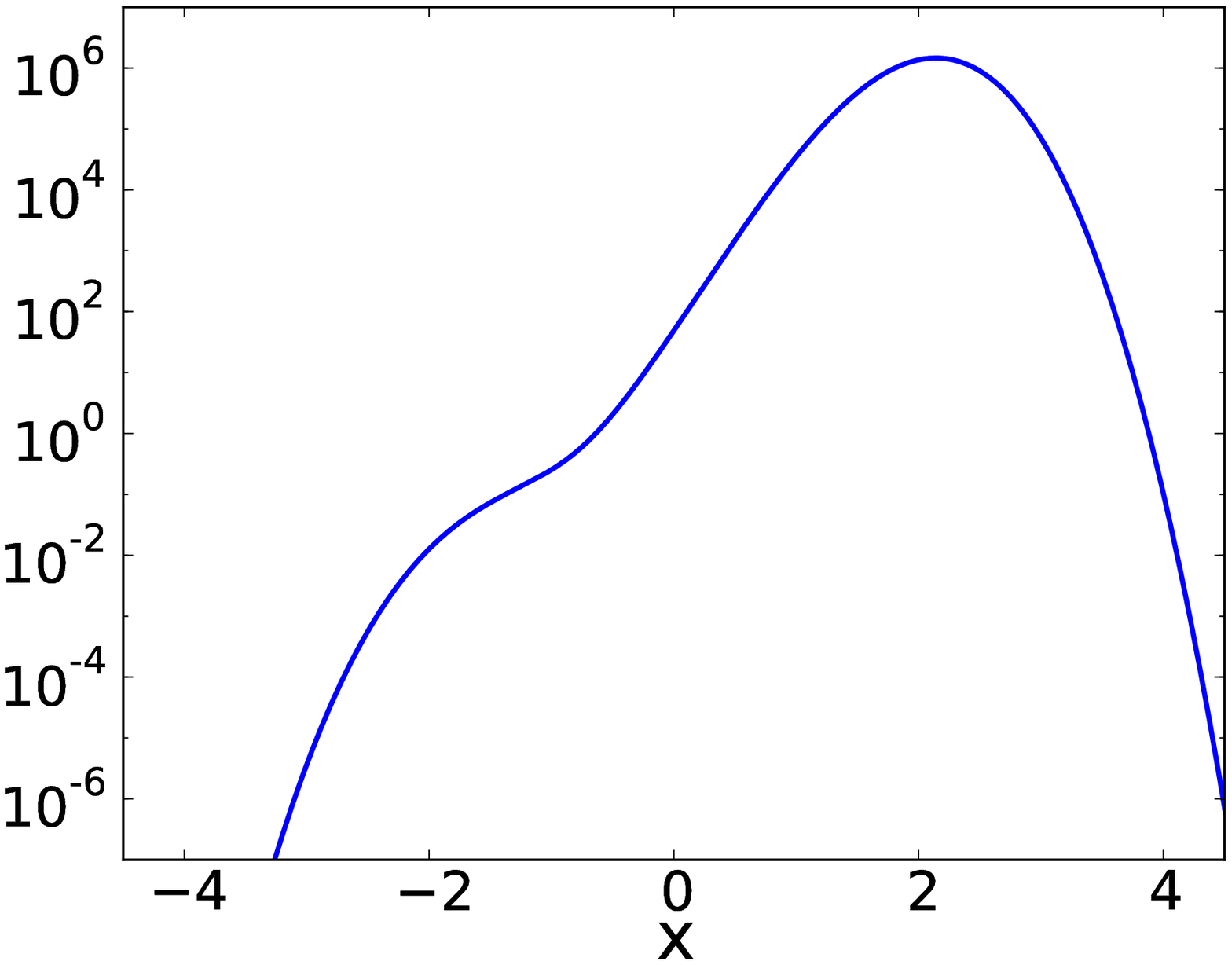}\label{subfig-g-t0}}
\subfigure[Control forces at $t=0$]{
  \includegraphics[width=0.48\textwidth]{./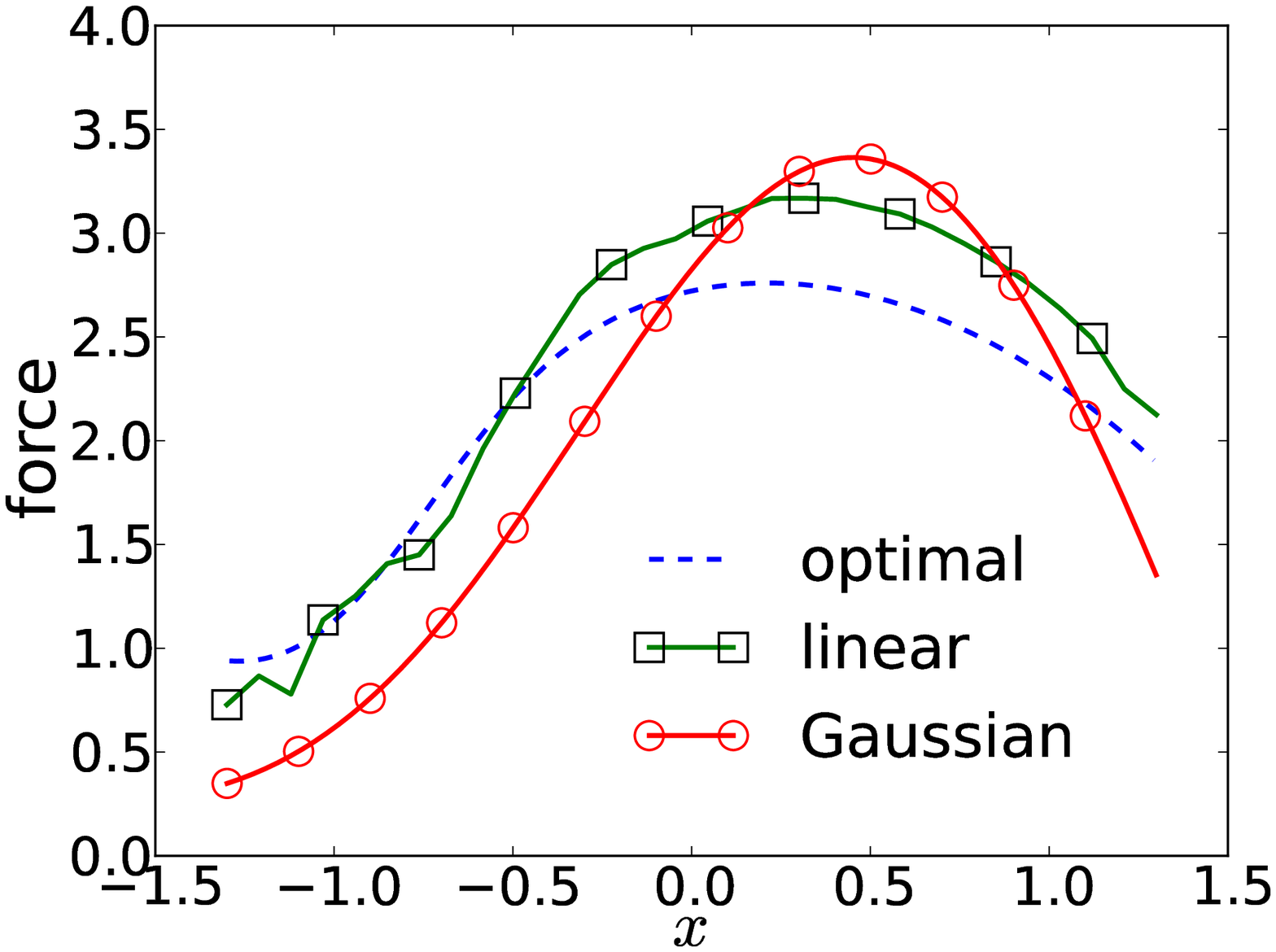}
  \label{subfig-u-t0}}
  \caption{Example $1$ with the control protocol $\lambda(s) = s$, for $s\in [0, 1]$. (a) Profile of the function $g(x, 0) =
  \mathbf{E}_{x,0}(e^{-\beta W})$ where $\beta = 5.0$ and $g$ solves PDE (\ref{g-fun-ex1}). (b)
Profiles of control forces at time $t=0$. Curves with Labels ``optimal'',
``linear'' and ``Gaussian'' correspond to the optimal control $u^*$, 
the control forces obtained from the cross-entropy method using piecewise linear and
Gaussian ansatz functions.}
\end{figure}
\begin{figure}[htpb]
  \centering 
  \includegraphics[width=7cm]{./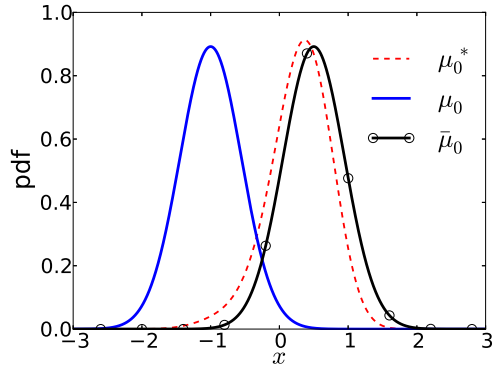}
  \caption{Example $1$ with the control protocol $\lambda(s) = s$, for $s\in [0, 1]$. Probability density functions of different initial distributions
  used in Monte Carlo methods for $\beta = 5.0$. The corresponding densities
  are $\frac{d\mu_0}{dx} \propto 
  \exp\big(-\beta \frac{(x+1)^2}{2}\big)$, $\frac{d\bar{\mu}_0}{dx} \propto
  \exp\big(-\beta \frac{(x-0.5)^2}{2}\big)$, and $\frac{d\mu_0^*}{dx}\propto 
\exp\big(-\beta \frac{(x+1)^2}{2}\big) g(x,0)$, which is given by (\ref{opt-mu0}). \label{subfig-start-mu}}
\end{figure}
\begin{figure}[htpb]
  \subfigure[]{\includegraphics[width=7cm]{./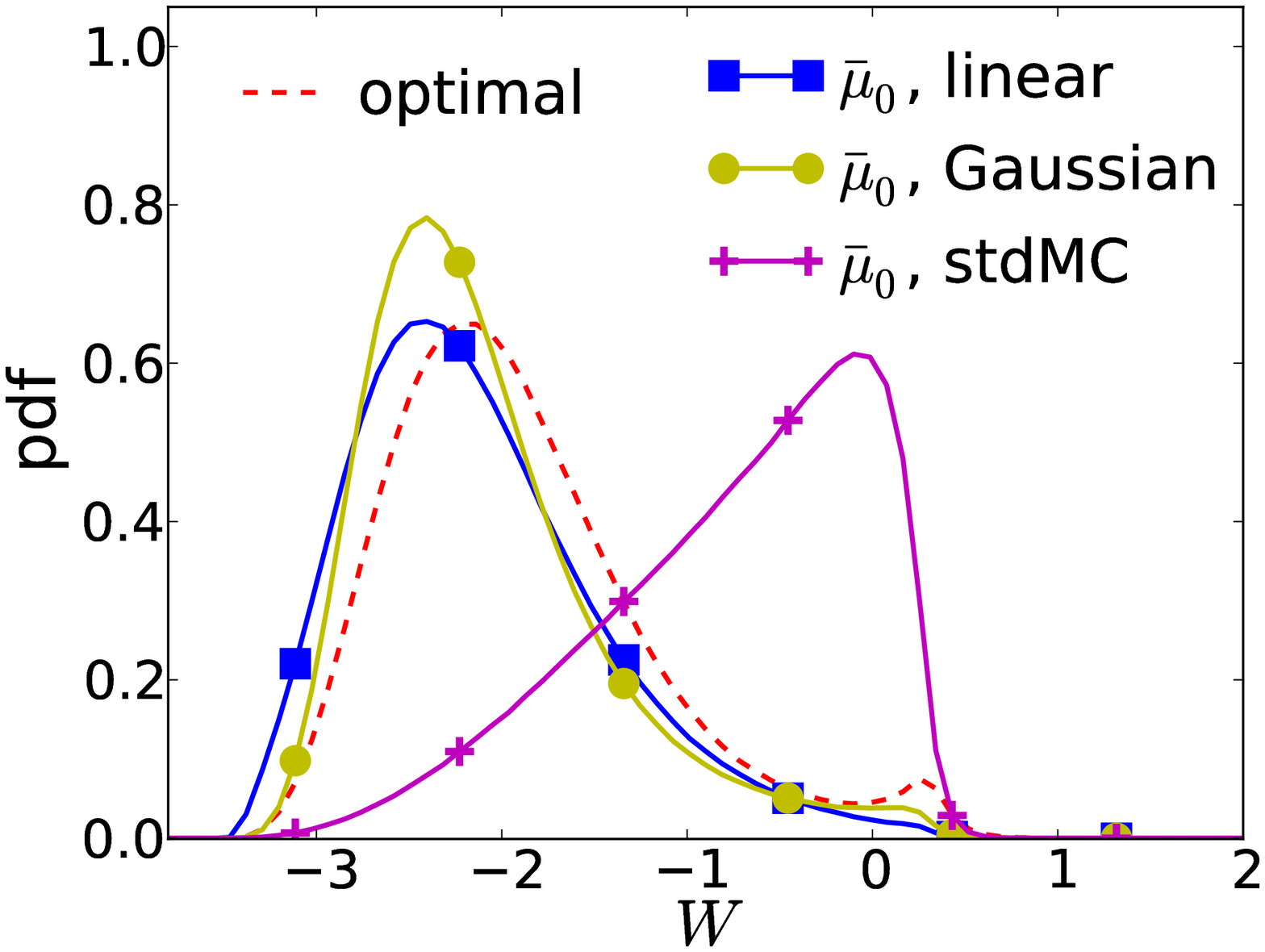}\label{subfig-work-1}}
  \subfigure[]{\includegraphics[width=7cm]{./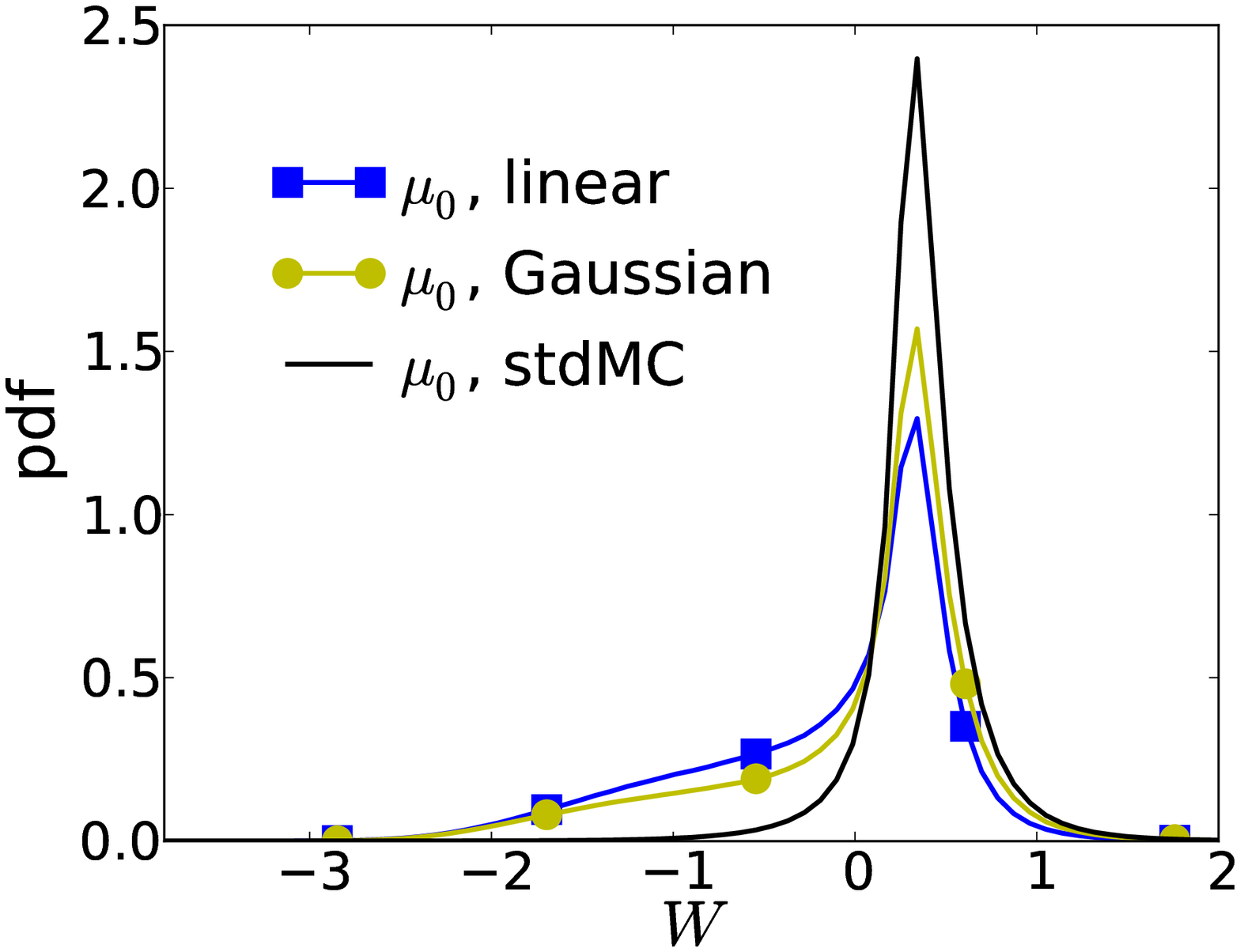}\label{subfig-work-2}}
  \caption{Example $1$ with the control protocol $\lambda(s) = s$, for $s\in
    [0, 1]$. Probability density functions of work along trajectories
    estimated from $10$ independent runs of Monte Carlo simulations where $5\times 10^5$ trajectories are simulated for
each run. (a) ``optimal'' corresponds to the importance sampling estimator with control
$u^*$ starting from the distribution $\mu^*_0$. The other three curves correspond to 
Monte Carlo estimators with initial distribution $\bar{\mu}_0$, 
using either the controlled dynamics (\ref{sde-ex1-control}) with piecewise linear ansatz
functions (Label ``$\bar{\mu}_0$, linear''), Gaussian ansatz
functions (Label ``$\bar{\mu}_0$, Gaussian''), or the uncontrolled dynamics
(\ref{sde-ex1}) (Label ``$\bar{\mu}_0$, stdMC''). (b)
Results correspond to Monte Carlo estimators with initial distribution $\mu_0$, 
using either the controlled dynamics (\ref{sde-ex1-control}) with piecewise linear ansatz
functions (Label ``$\mu_0$, linear''), Gaussian ansatz
functions (Label ``$\mu_0$, Gaussian''), or the uncontrolled dynamics (\ref{sde-ex1}) (Label ``$\mu_0$, stdMC''). 
\label{fig-work}
}
\end{figure}
\begin{figure}[htpb]
  \subfigure[$\Delta F(\lambda)$]{\includegraphics[width=7cm]{./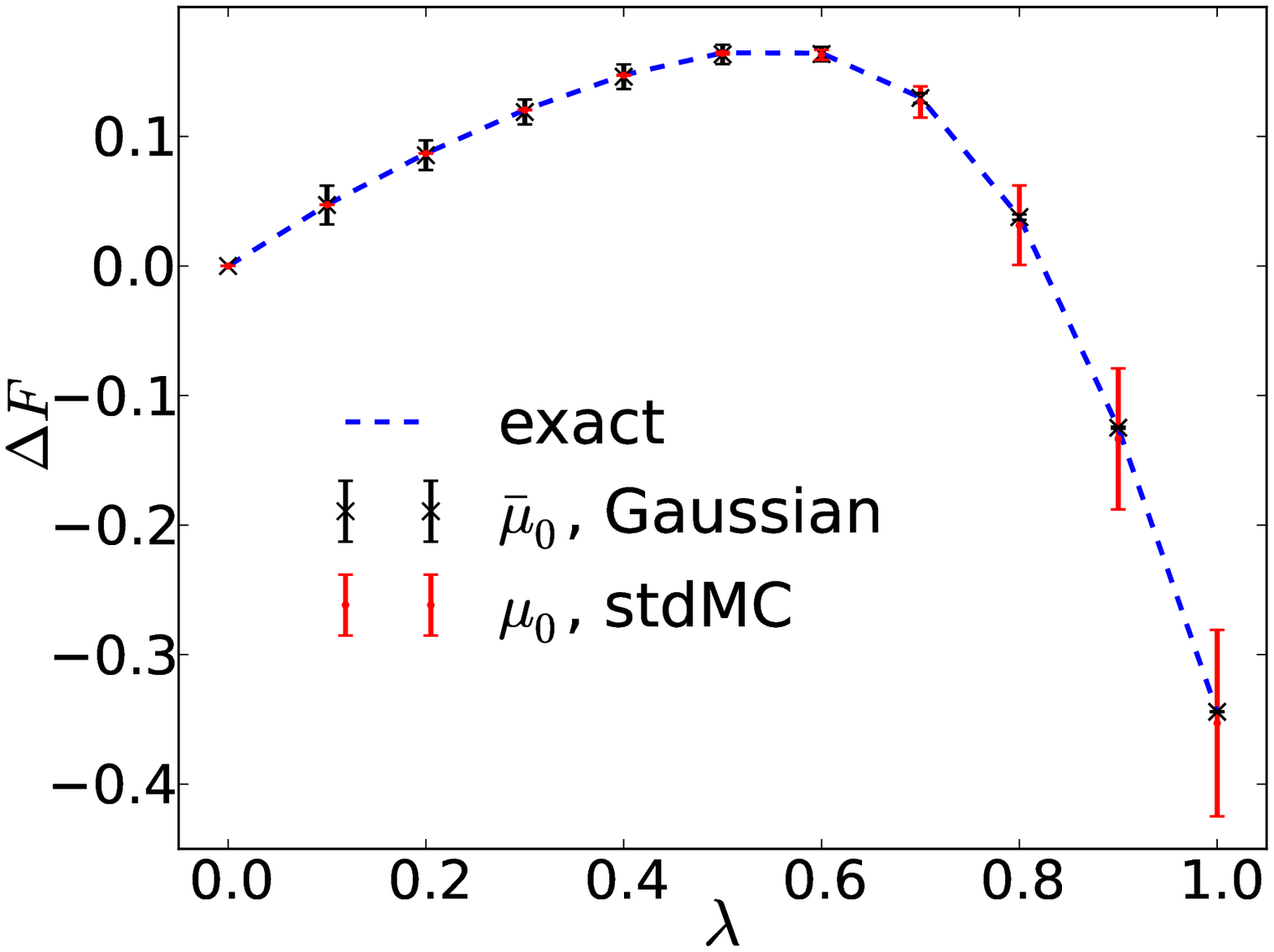}\label{subfig-df-cureve}}
  \subfigure[$\Delta F(1)$]{\includegraphics[width=7cm]{./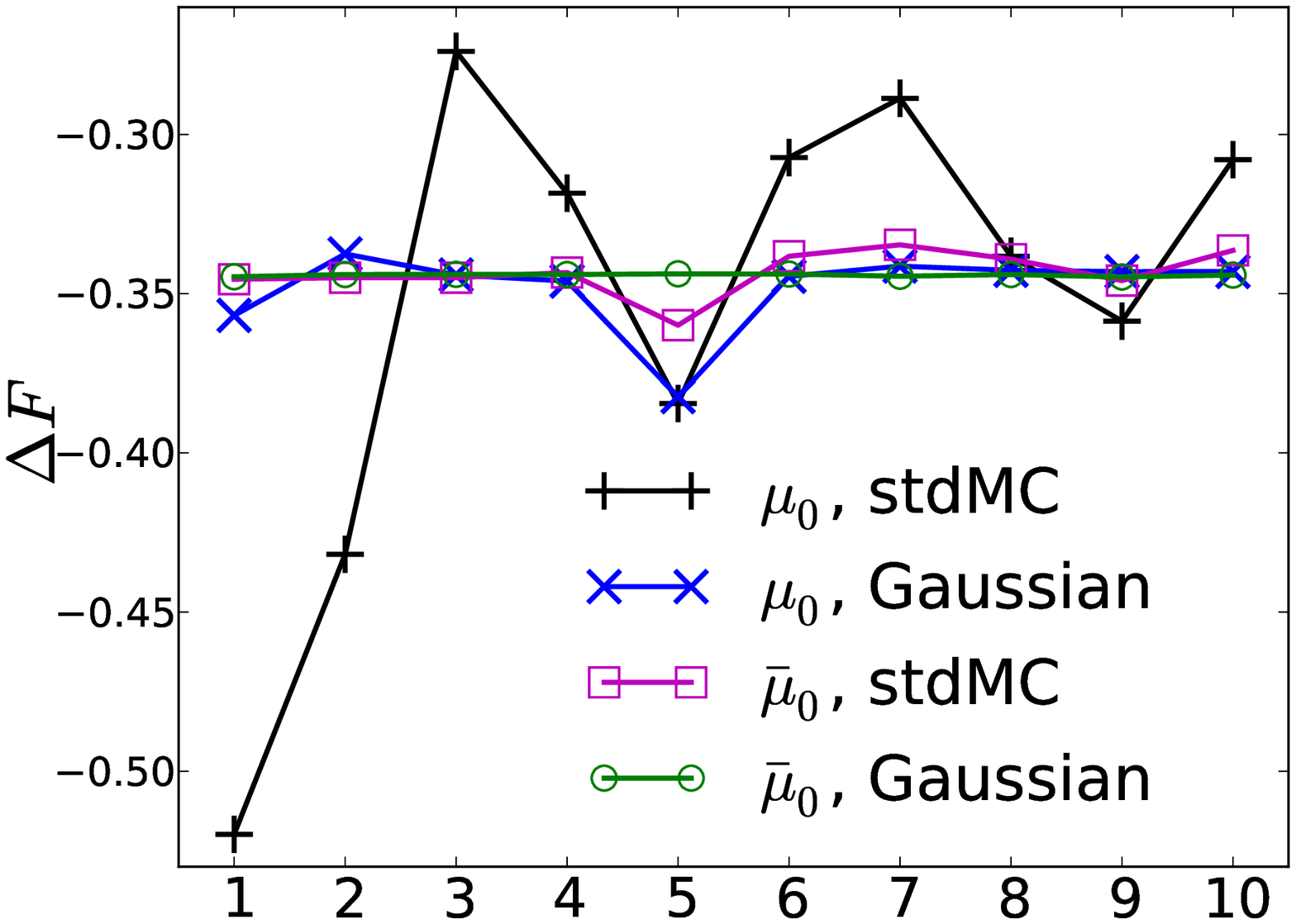}\label{subfig-multi-run}}
  \caption{Example $1$ with the control protocol $\lambda(s) = s$, for $s\in [0, 1]$. Labels of different curves have the same meaning as those appeared
in Figure~\ref{fig-work}.  (a) Profiles of free energy differences $\Delta F(\lambda)$ for
  $\lambda \in [0,1]$. Standard deviations of the free energy difference
  estimations for $10$ independent runs are shown in vertical error bar for different
  $\lambda$. ``exact'' corresponds to results obtained by directly
integrating the normalization constant $Z(\lambda)$ from (\ref{normal-const}).
(b) Mean values of free energy differences at $\lambda = 1$ for $10$
independent runs using different (importance sampling) Monte Carlo methods. For each run,
$5 \times 10^5$ trajectories of either SDE (\ref{sde-ex1}) or the controlled SDE
(\ref{sde-ex1-control}) are generated with time step-size $\Delta t = 5\times
10^{-4}$. Results corresponding to piecewise linear ansatz functions are
not shown here since they are very similar to those corresponding Gaussian ansatz
functions.}
\end{figure}
\begin{table}[htpb]
  \centering
  \begin{tabular}{cc|ccccc}
    \hline
    \hline
    initial & control & mean $\mathcal{I}$ & SD $\mathcal{I}$ & mean
 $\Delta F$ & SD $\Delta F$& mean $W$ \\
    \hline
    $\mu^*_0$ &  optimal & $5.58$ & $8.4\times 10^{-2}$ & $-3.44 \times
     10^{-1}$ & $2.4\times 10^{-4}$ & $-1.85$  \\
    \hline
    \multirow{3}{*}{$\bar{\mu}_0$}&  linear & $5.59$ &$6.0\times 10^{0}$
							      &$-3.44\times
						     10^{-1}$ &$3.4
						\times 10^{-4}$ & $-2.08$ \\
					   & Gaussian & $5.59$ &$7.1\times
		     10^{0}$ &$-3.44\times 10^{-1}$ &$3.5\times 10^{-4}$ & $-2.05$\\
      & stdMC & $5.51$ &$9.8\times 10^{1}$ &$-3.41\times 10^{-1}$ &$5.4\times 10^{-3}$ & $-0.71$\\
    \hline
    \multirow{3}{*}{$\mu_0$}& linear & $5.74$ &$2.2\times 10^{2}$
						 &$-3.49\times 10^{-1}$
						 &$1.0\times 10^{-2}$ & $-0.08$ \\
			    & Gaussian & $5.71$ &$2.6\times 10^{2}$ &$-3.48\times
	    10^{-1}$ &$1.3\times 10^{-2}$& $0.06$ \\
		     & stdMC & $6.28$ &$1.7\times 10^{3}$ &$-3.53\times 10^{-1}$&$7.2\times 10^{-2}$ & $0.40$\\
    \hline
    \hline
  \end{tabular}
  \caption{Example $1$ with the control protocol $\lambda(s) = s$, for $s\in [0, 1]$. Estimations of free energy difference for $\lambda = 1$
  using different (importance sampling) Monte Carlo methods. 
  Direct calculation of (\ref{normal-const}) and (\ref{free-energy}) gives
  the reference value $\Delta F = -3.44\times 10^{-1}$.
Column ``initial'' specifies the initial distribution that are used to generate
trajectories in Monte Carlo simulations.
Column ``control'' specifies the different dynamics (different control forces) and
the meaning of each name is the same as those appeared in Figure~\ref{fig-work}.
Columns ``mean $\mathcal{I}$'', ``SD $\mathcal{I}$'' show the mean and the sample standard
deviation of estimators (\ref{estimator-stdmc-ex1}) or (\ref{estimator-ipmc-ex1}).
Columns ``mean $\Delta F$'', ``SD $\Delta F$'' show 
the mean and the sample standard deviation of 
$10$ independent runs of the free energy difference estimations $\Delta F(1)$ using
(\ref{df-ex1}). The mean values of work $W$ for different Monte Carlo methods
are shown in Column ``mean $W$''. \label{tab-1} }
\end{table}
\subsection{Example $2$: reaction coordinate case}
\label{subsec-ex2}
In the second example, we study free energy calculation in the reaction
coordinate case considered in Section~\ref{sec-coordinate}. A similar example has
been considered in~\cite{effective_dynamics}, where the main focus was the approximation quality of effective dynamics. 
The system consists of three two-dimensional particles $A, B, C$ whose positions are at $x_A, x_B, x_C$, with
potential 
\begin{align}
  V(x_A, x_B, x_C) = \frac{1}{2\epsilon}\bigg\{r_{BC}-\Big[1 + \kappa
  \Big(\sin(\theta_{ABC})-\frac{1}{2}\Big)\Big] l_{eq}\bigg\}^2 +
  \frac{1}{2\epsilon}\big(r_{AB}-l_{eq}\big)^2 + V_3(\theta_{ABC})\,,
  \label{pot-v-ex2}
\end{align}
where $r_{AB}$, $r_{BC}$ are the distances between particles $A$ and $B$,
$B$ and $C$, respectively.  $\theta_{ABC}$ is the angle spanned by the bonds
$AB$ and $BC$, and $V_3$ is the potential of angle given by 
\begin{align}
  V_3(\theta) = \frac{k_\theta}{2}\Big((\theta-\theta_0)^2 - (\delta
  \theta)^2\Big)^2- k_{\theta, 1} (\theta-\theta_0) \,,
\end{align}
with $k_{\theta}>0$. 
Furthermore, in order to remove rigid body motion invariance, we
fix the position of particle $B$ ($x_B=0$) and particle $A$ is only allowed to move along horizontal axis.
For parameters, we take $\theta_0=\frac{\pi}{3}$, $\delta \theta = \frac{\pi}{6}$, 
$\epsilon = 0.1$, $k_\theta = 20$, $k_{\theta, 1} = 0.3$, and $l_{eq} = 5.0$. 

The system essentially has three degree of freedom, i.e., the position of
$x_C=(y_1, y_2)$ and the position of $x_A = (y_3,0)$ on the $x$-axis. The free energy
is defined according to (\ref{free-energy-coordinate}), where we take 
\begin{align}
  \xi(y_1,y_2,y_3) = \theta_{ABC}=\arctan\frac{y_2}{y_1} 
  \label{xi-ex2}
\end{align}
as the reaction coordinate function and $\beta = 5.0$. In order to calculate free energy differences, 
we consider the dynamics $y(s)=(y_1(s), y_2(s), y_3(s))$ in (\ref{dynamics-f-tau}) during the time interval $[0, 1]$ with
$a=\sigma=\mbox{id}$, and $f\equiv \frac{\pi}{3}$, starting from $\theta(y(0)) =
\frac{\pi}{6}$ at time $s=0$. 
In this case, the projection matrix in (\ref{p-ij}) can be directly computed as 
\begin{align}
  P = 
  \begin{pmatrix}
    \frac{y_1^2}{y_1^2+y_2^2} & \frac{y_1y_2}{y_1^2+y_2^2} & 0 \\
    \frac{y_1y_2}{y_1^2+y_2^2} & \frac{y_2^2}{y_1^2+y_2^2} & 0 \\
    0 & 0 & 1
  \end{pmatrix}
\end{align}
and we have $\Psi=|\nabla \xi|^2 = \frac{1}{y_1^2+y_2^2}$ in (\ref{psi-ij}). 
The angle $\theta_{ABC}$ of the system $y(s)$ evolves
uniformly during time $s \in [0,1]$ from $\frac{\pi}{6}$ to $\frac{\pi}{2}$ and the
free energy at $\theta_{ABC} = \frac{\pi}{6}$ is taken as reference. 
The free energy differences are calculated based on the Jarzynski-like identity (\ref{generalized-jarzynski-coordinate}),  
where the work $W$ is given in (\ref{w-coordinate-jarzynski-special})
and becomes as simple as 
\begin{align}
  W(t) = \int_0^t \Big(-y_2 \frac{\partial V}{\partial y_1} + y_1
  \frac{\partial V}{\partial y_2}\Big)(y(s))\, ds \,.
  \label{w-ex2}
\end{align}
In the numerical experiment below, we take $\kappa=0.3,\,0.6$ in the
potential $V$ in (\ref{pot-v-ex2}) and the
performance of the Monte Carlo estimator is tested using different values $\tau
= 1.0, \,0.6,\,0.3$ in dynamics (\ref{dynamics-f-tau}). 
In each case, we estimate the free energy differences based on $10$
independent runs of Monte Carlo sampling of 
\begin{align}
  \Delta F(\theta(t)) \approx -\beta^{-1} \ln \mathcal{I}(\theta(t)) = -\beta^{-1}
  \ln \Big(\frac{1}{N} \sum_{i=1}^N e^{-\beta W_i(t)}\Big) \,,
  \label{estimator-ex2}
\end{align}
 where $\theta(t) = \frac{\pi}{6} + \frac{\pi}{3} t$. In each run, $N=5 \times 10^5$ trajectories of dynamics (\ref{dynamics-f-tau})
are simulated using time step-size $\Delta t = 10^{-4}$, where $W_i$ denotes the
work (\ref{w-ex2}) of the $i$th trajectory.

The numerical results are shown in Figure~\ref{fig-ex2-1}
, Figure~\ref{fig-ex2-work-pdf} (results for $\kappa=0.3$ are similar and
therefore are not displayed) and Table~\ref{tab-ex2}.
From both Figure~\ref{fig-ex2-1} and Table~\ref{tab-ex2}, we can observe that the
free energy calculation using $\tau=1.0$ lead to large fluctuations and inaccurate estimations. On the
other hand, by decreasing $\tau$ to $0.3$, the variance of $10$ independent runs of free energy calculation decreases 
significantly and the results become stable. Based on the $10$ runs of
Monte Carlo simulations of the nonequilibrium dynamics, 
we can also estimate the probability density functions
of the work (\ref{w-ex2}) and the results are shown in Figure~\ref{fig-ex2-work-pdf}.
It can be seen that, as $\tau$ decreases, the probability density functions
shift along the negative horizontal axis and become more concentrated. 
This indicates that the work of the sampled paths becomes smaller on average and the variance decreases. 
All these results confirm that variance of the Monte Carlo estimator can
be reduced by decreasing the value of $\tau$ (see discussions at the end of
Subsection~\ref{subsec-info-the-ce-coordinate}). 
\begin{figure}[htpb]
  \centering 
  \subfigure[$\Delta F(\theta)$]{\includegraphics[width=7cm]{./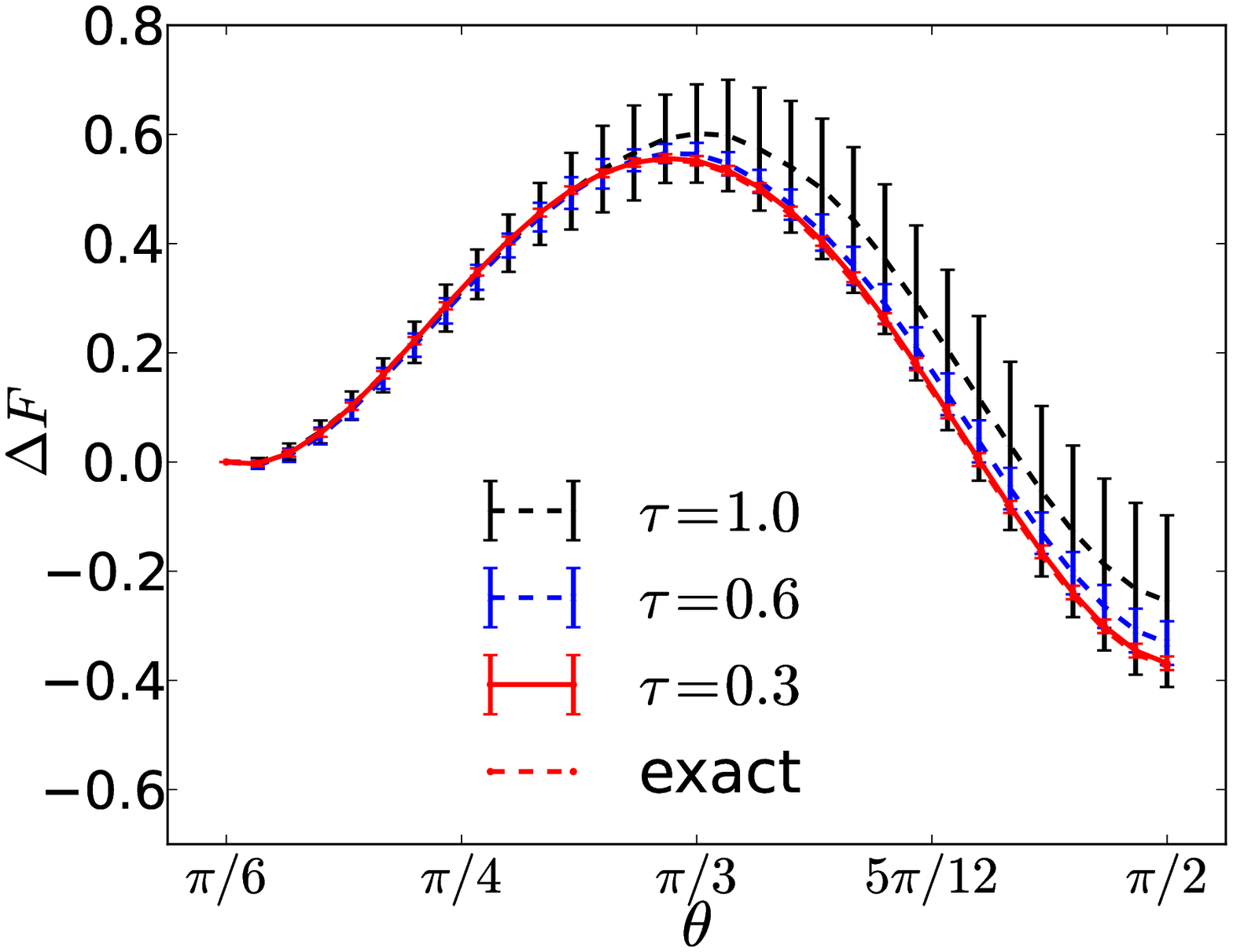}\label{fig-ex2-df-curve}}
  \subfigure[$\Delta
  F(\frac{\pi}{2})$]{\includegraphics[width=7cm]{./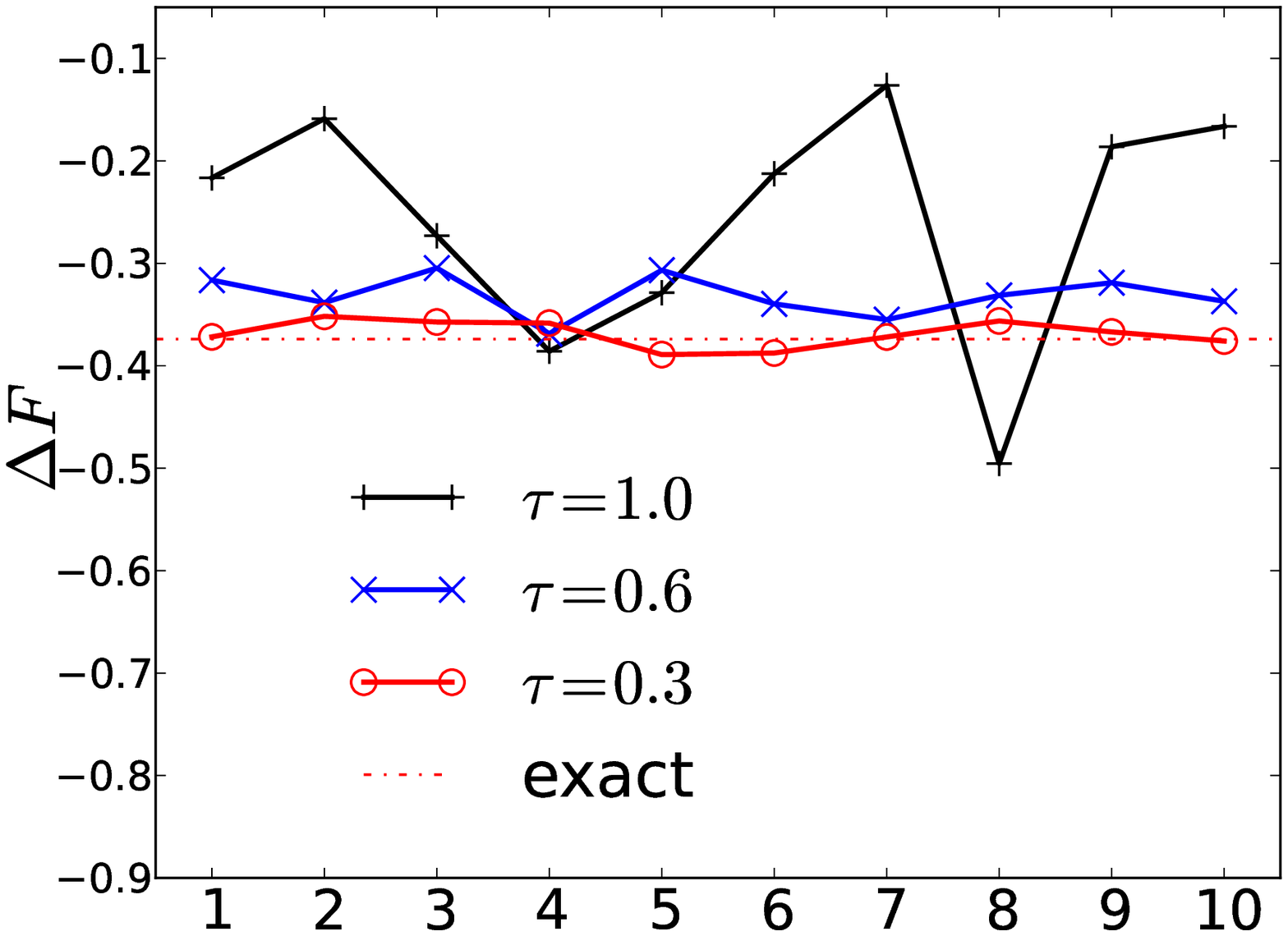}\label{fig-ex2-df-10runs}}
  \caption{Example $2$ for $\kappa=0.6$. (a) Profiles of free energy differences $\Delta
  F(\theta)$ for $\theta=\theta_{ABC} \in [\frac{\pi}{6},\frac{\pi}{2}]$
  computed using different $\tau$ in (\ref{dynamics-f-tau}). Standard deviations of the free energy difference
  estimations for $10$ independent runs are shown in vertical error bar for different
  $\theta$. ``exact'' corresponds to the reference results obtained by directly
integrating the normalization constants $Q(\cdot)$ appeared in (\ref{mu-z}).
Curves with Label ``$\tau=0.3$'' and Label ``exact'' almost coincide.
  (b) Mean values of free energy differences at $\theta = \frac{\pi}{2}$ for $10$ runs
  of Monte Carlo simulations using different values of $\tau$ in (\ref{dynamics-f-tau}). 
  The horizontal line with Label ``exact'' corresponds to the reference value $\Delta
  F(\frac{\pi}{2}) = -3.74\times 10^{-1}$.
  For each run, $5 \times 10^5$ trajectories of SDE (\ref{dynamics-f-tau}) are generated with
  time step size $\Delta t = 10^{-4}$.\label{fig-ex2-1}}
\end{figure}
\begin{figure}[htpb]
  \centering 
  \includegraphics[width=7cm]{./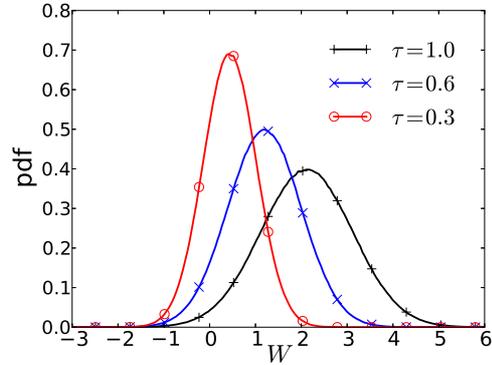}
  \caption{
    Example $2$ for $\kappa=0.6$.
  Probability density functions of the work $W$ (\ref{w-ex2}) along
  trajectories of (\ref{dynamics-f-tau})
  for different values $\tau=1.0, \,
  0.6,\, 0.3$. For each $\tau$, the probability density function is estimated from $10$ runs of
  Monte Carlo simulations where $5\times 10^5$ trajectories are simulated in
each run. \label{fig-ex2-work-pdf}
}
\end{figure}
\begin{table}[htpb]
  \centering
  \begin{tabular}{cc|ccccc}
    \hline
    \hline
    $\kappa$ & $\tau$ & mean $\mathcal{I}$ & SD $\mathcal{I}$ & mean $\Delta F$ & SD $\Delta F$& mean $W$ \\
    \hline
\multirow{3}{*}{$0.3$}
    & $1.0$ & $5.46$ & $9.4\times 10^{1}$ & $-3.39 \times 10^{-1}$ & $1.4\times 10^{-2}$ & $0.29$  \\
    &  $0.6$ & $5.67$ &$4.0\times 10^{1}$ &$-3.47\times 10^{-1}$ &$1.1 \times 10^{-2}$ & $0.05$ \\
     & $0.3$ & $5.52$ &$1.5\times 10^{1}$ &$-3.42\times 10^{-1}$ &$2.9\times 10^{-3}$ & $-0.13$\\
    \hline
\multirow{3}{*}{$0.6$}
    & $1.0$ & $4.27$ & $2.0\times 10^{3}$ & $-2.55 \times 10^{-1}$ & $1.6\times 10^{-1}$ & $2.14$  \\
    &  $0.6$ & $5.28$ &$5.1\times 10^{2}$ &$-3.32\times 10^{-1}$ &$4.0 \times 10^{-2}$ & $1.22$ \\
     & $0.3$ & $6.33$ &$2.3\times 10^{2}$ &$-3.69\times 10^{-1}$ &$1.3\times 10^{-2}$ & $0.46$\\
    \hline
    \hline
  \end{tabular}
  \caption{Example $2$. Estimations of free energy difference for $\theta= \frac{\pi}{2}$
  using Monte Carlo methods for different values $\kappa$ and $\tau$. 
  Direct calculation of (\ref{normal-const}) and (\ref{free-energy}) gives
  the reference value $\Delta F(\frac{\pi}{2}) = -3.42\times 10^{-1}$ and $\Delta F(\frac{\pi}{2}) = -3.74\times 10^{-1}$ for $\kappa=0.3$ and $0.6$,
  respectively.
Columns ``mean $\mathcal{I}$'', ``SD $\mathcal{I}$'' show the mean and the sample standard
  deviation of the estimator $\mathcal{I}$ in (\ref{estimator-ex2}). 
Columns ``mean $\Delta F$'', ``SD $\Delta F$'' show 
the mean and the sample standard deviation of 
  $10$ runs of free energy difference estimations $\Delta F(\frac{\pi}{2})$ using
(\ref{estimator-ex2}). The mean values of the work $W$ for Monte
  Carlo simulations using different $\kappa$ and $\tau$ are shown in the Column ``mean $W$''. \label{tab-ex2} }
\end{table}

\section{Conclusions}
\label{sec-conclusion}
In this work, we have studied nonequilibrium theorems for diffusion
processes. Jarzynski's equalities and fluctuation theorems are proved for 
quite general types of diffusion processes in both the alchemical transition case and
the reaction coordinate case. The information-theoretic formulation of
the Jarzynski's equality, as well as variance reduction approaches are discussed in both
cases. Our mathematical tools to derive these nonequilibrium relations are
from the theory of stochastic differential equation, in particular the Feynman-Kac formula and
the Girsanov's theorem. An advantage of the approach is that, it enables us to elucidate
the connections between Jarzynski's equality and the thermodynamic integration
identity, which were often treated as two distinct free energy calculation methods. 

Two variance reduction approaches for Monte Carlo methods have been studied 
in order to compute free energy differences using Jarzynski's equality. 
As demonstrated by simple examples, 
these approaches can largely improve the efficiency of Monte Carlo estimators
in both the alchemical transition case and the reaction coordinate case. 
One of the key findings is that variance
reduction by a change of measure requires to change both the initial distribution and 
the equation of the dynamics. 
We expect that our simple numerical studies can provide some insights into the source of sampling variances.

While the current work focuses on diffusion processes, the
mathematical tools may be applicable to other types of stochastic
processes, such as Markov chains, particle systems or networks, whose
evolution depends on external parameters. In future work, we will also
investigate free energy calculation for high-dimensional
applications using the variance reduction approaches proposed in this work,
together with the recent techniques of solving high-dimensional
PDEs~\cite{Darbon2016,deep-relaxation-osher2017,deep-pde-weinan2017}. 
\section*{Acknowledgement}
The authors acknowledge financial support by the Einstein Center of Mathematics (ECMath) through project CH21. 

\appendix 
\section{Connections with thermodynamic integration and adiabatic switching : Alchemical transition case}
\label{app-1}
In this appendix, we study two (essentially equivalent) asymptotic regimes of nonequilibrium processes
using formal arguments. 
In particular, we will derive the thermodynamic integration identity from Jarzynski's
identity, therefore bridging these two different free energy calculation
methods. Let us point out that such a connection is indeed known in physics
community~\cite{Crooks1998}, although we are not aware of its mathematical
derivation in the literature. For simplicity, we only consider the alchemical transition case studied in Section~\ref{sec-alchemical} and assume
the protocol $\lambda(\cdot)$ is deterministic with $\epsilon = 0$. 

\textbf{From Jarzynski's equality to thermodynamic integration}
Thermodynamic integration is a well known method and has been widely
used to compute free energy differences~\cite{frenkel2001understanding}. From the definition of the normalization constant $Z(\cdot)$ in
(\ref{normal-const}), we can derive the thermodynamic integration identity
by the simple argument 
\begin{align}
  \Delta F(T) = & F(\lambda(T)) - F(\lambda(0))\notag \\
  =& -\beta^{-1} \ln
  \frac{Z(\lambda(T))}{Z(\lambda(0))} \notag \\
  =& 
  -\beta^{-1} \int_0^T \frac{d}{ds}\Big( \ln
  \frac{Z(\lambda(s))}{Z(\lambda(0))}\Big)\, ds \notag \\ = &
  \int_0^T \Big( 
  \frac{\int_{\mathbb{R}^n} e^{-\beta V(x,\lambda(s))} \nabla_\lambda V(x,\lambda(s))\,
  dx)}{Z(\lambda(s))}\Big)\,\cdot f(\lambda(s), s)\, ds \notag \\
  =& \int_0^T \big(\mathbf{E}_{\mu_{\lambda(s)}} (\nabla_{\lambda} V)\big) \cdot
  f(\lambda(s), s)\, ds\,.
  \label{ti-identity}
\end{align}
In the following, using a formal argument, we show that the identity
(\ref{ti-identity}) corresponds to the Jarzynski's equality (\ref{jarzynski-1}) in certain asymptotic limit. For this purpose, 
we consider the dynamics 
\begin{align}
  \begin{split}
    d x(s) & =  \frac{1}{\tau} b(x(s), \lambda(s))\,ds +
    \sqrt{\frac{2\beta^{-1}}{\tau}} \sigma(x(s),
    \lambda(s)) \,dw^{(1)}(s)\,,
\end{split}
  \label{dynamics-time-rescaled}
\end{align}
on $s \in [0, T]$, where $0 < \tau \ll 1$ and $\lambda(s)$ satisfies the ODE
\begin{align}
  \dot{\lambda}(s) = f(\lambda(s), s)\,.
  \label{lambda-ode}
\end{align}
Clearly, dynamics (\ref{dynamics-time-rescaled}) is related to
(\ref{dynamics-1}) by rescaling time with the parameter $0 < \tau \ll 1$, and its
infinitesimal generator is $\frac{1}{\tau} \mathcal{L}_1$, where
$\mathcal{L}_1$ is defined in (\ref{l-1}) with $\lambda(\cdot)$ being time dependent. The main
observation is that, 
repeating the argument from Subsection~\ref{sub-sec-jarzynski}, the Jarzynski's equality (\ref{jarzynski-1}) holds for (\ref{dynamics-time-rescaled}) and (\ref{lambda-ode}) for any
$\tau>0$. As a consequence, 
\begin{align}
 e^{-\beta \Delta F(T)}
  = \mathbf{E}_{\mu(\lambda(0))} \Big(g(\cdot, \lambda(0), 0)\Big)  \,,
  \label{jarzynski-g-rescaled}
\end{align}
where the function $g$ now satisfies 
\begin{align}
  \begin{split}
    &\partial_t g + \frac{1}{\tau}\mathcal{L}_{1} g + f\cdot \nabla_\lambda g - \beta
    \big(f\cdot \nabla_\lambda V \big) g = 0\,, \quad 0 \le t < T\,,\\
    &g(\cdot,\cdot, T) = 1\,.
\end{split}
\label{g-pde-time-rescaled-eps0}
\end{align}
To show that (\ref{jarzynski-g-rescaled}) reduces to the thermodynamic
integration identity (\ref{ti-identity}) as $\tau \rightarrow 0$, it is enough
to study the asymptotic limit of (\ref{g-pde-time-rescaled-eps0}). 
To this end, we consider the formal asymptotic expansion 
\begin{align*}
g = g_0 + \tau g_1+\tau^2g_2 + \cdots
\end{align*}
as $\tau\rightarrow 0$, where $g_0, g_1, \cdots$ are functions independent of $\tau$. 
Substituting this expansion into (\ref{g-pde-time-rescaled-eps0}) and comparing
terms of different powers of $\tau$, we can conclude that $g_0=g_0(\lambda, t)$ is independent of $x$ and satisfies 
\begin{align}
  \begin{split}
  &\partial_t g_0 + \mathcal{L}_1 g_1 + f\cdot \nabla_\lambda g_0 - \beta
  (f\cdot \nabla_\lambda V)g_0 = 0\,, \quad 0 \le t < T\\
  & g_0(\cdot, T) = 1\,.
\end{split}
\label{g0-pde-before-closure}
\end{align}
Taking the expectation with respect to $\mu_{\lambda}$ on both sides of
(\ref{g0-pde-before-closure}) and noticing that $\mathbf{E}_{\mu_\lambda}(\mathcal{L}_1 g_1) = 0$, we obtain 
\begin{align}
  \begin{split}
  &\partial_t g_0 + f\cdot \nabla_\lambda g_0 - \beta
    \big(f \cdot \mathbf{E}_{\mu_{\lambda}}(\nabla_\lambda V)\big) g_0 = 0\,, \quad 0 \le t < T\\
  & g_0(\cdot, T) = 1\,.
\end{split}
\label{g0-pde-after-closure}
\end{align}
It is easy to verify that the solution of (\ref{g0-pde-after-closure}) is given by 
\begin{align}
g_0(\lambda, t) = e^{-\beta \int_t^T 
 \big(\mathbf{E}_{\mu_{\lambda(s)}} (\nabla_{\lambda} V)\big) \cdot
 f(\lambda(s), s)\, ds}\,,
\end{align}
where $\lambda(s)$ satisfies (\ref{lambda-ode}) with initial value $\lambda(t) = \lambda$.
Taking the limit $\tau\rightarrow 0$ in (\ref{jarzynski-g-rescaled}) then yields
\begin{align}
  e^{-\beta \Delta F(T)} = \lim_{\tau \rightarrow 0} 
  \mathbf{E}_{\mu(\lambda(0))} \Big(g(\cdot, \lambda(0), 0)\Big)  
  = g_0(\lambda(0), 0) =   e^{-\beta \int_0^T 
 \big(\mathbf{E}_{\mu_{\lambda(s)}} (\nabla_{\lambda} V)\big) \cdot
 f(\lambda(s), s)\, ds}\,,
\end{align}
which is equivalent to the thermodynamic integration identity
(\ref{ti-identity}).

\textbf{Adiabatic switching}
Now we turn to another (equivalent) asymptotic regime where the protocol
$\lambda(\cdot)$ is switched infinitely slowly. 
Specifically, given $\lambda_0, \lambda_1 \in \mathbb{R}^m$, 
the protocol $\lambda(\cdot)$ satisfying $\lambda(0) = \lambda_0$ and $\lambda(T) =
\lambda_1$ as $T\rightarrow +\infty$ is called adiabatic switching. For the
nonequilibrium process $x(\cdot)$ in (\ref{dynamics-1}) under adiabatic switching, it is well known
that we have 
\begin{align}
  F(\lambda_1) - F(\lambda_0) = \lim_{T\rightarrow +\infty}
  \mathbf{E}_{\lambda_0,
  0} \big(W(T)\big) = \lim_{T\rightarrow +\infty} \mathbf{E}_{\lambda_0, 0} \Big(\int_0^T
  \nabla_\lambda V(x(s), \lambda(s)) \cdot f(\lambda(s), s)\, ds\Big) \,,
  \label{df-work-adiabatic}
\end{align}
i.e., the free energy difference equals to the average work performed during the
switching. 
In the following we provide a formal mathematical argument to derive the
above identity. For this purpose, we define 
\begin{align}
  u(x,\lambda, t) = \mathbf{E}\Big(\int_t^T
  \nabla_\lambda V(x(s), \lambda(s)) \cdot f(\lambda(s), s)\, ds~\Big|~x(t) = x,
  \lambda(t) = \lambda\Big) \,,
\end{align}
which, by the Feynman-Kac formula, satisfies 
\begin{align}
  \begin{split}
  &\partial_t u + \mathcal{L}_1 u + f\cdot \nabla_\lambda u + f\cdot \nabla_\lambda V = 0\,, \\
  & u(\cdot, \cdot, T) = 0\,.
\end{split}
\label{adiabatic-work-pde}
\end{align}
Notice that,  as $T\rightarrow +\infty$, the switching becomes infinitely slow
and $\dot{\lambda}(t) = f$ goes to zero. Instead, we rescale the time by
$\bar{t} = \frac{t}{T} \in [0, 1]$ and define $\bar{\lambda}(\bar{t}\,) =
\lambda(\frac{\bar{t}}{\tau})$, where $\tau = \frac{1}{T} \rightarrow 0$.
$\bar{\lambda}(\cdot)$ satisfies $\bar{\lambda}(0) = \lambda_0, \bar{\lambda}(1) =
\lambda_1$ and 
\begin{align}
  \frac{d\bar{\lambda}}{d\bar{t}} = \bar{f}(\bar{\lambda}(\bar{t}\,),
  \bar{t}\,)\,,
  \label{ode-scaled}
\end{align}
where $\bar{f}(\cdot, \bar{t}\,) = \frac{1}{\tau}f(\cdot, \frac{\bar{t}}{\tau})$
is a function of $\mathcal{O}(1)$. Under this time scaling, PDE (\ref{adiabatic-work-pde})
becomes 
\begin{align}
  \begin{split}
    &\partial_{\bar{t}} u + \frac{1}{\tau} \mathcal{L}_1 u + \bar{f}\cdot \nabla_\lambda u + 
    \bar{f}\cdot \nabla_\lambda V = 0\,, \quad 0 \le \bar{t} < 1\,,\\
    & u \equiv 0\,, \quad \bar{t} = 1\,.
\end{split}
\label{adiabatic-work-pde-tbar}
\end{align}
Consider the expansion $u = u_0 + \tau u_1 + \tau^2 u_2 + \cdots$, then the same
argument as above yields that the function $u_0$ is independent of $x$ and
satisfies 
\begin{align}
  \begin{split}
    &\partial_{\bar{t}} u_0 + \bar{f}\cdot \nabla_\lambda u_0 + 
    \bar{f}\cdot \mathbf{E}_{\lambda} \big(\nabla_\lambda V\big) = 0\,, \quad 0 \le \bar{t} < 1\,,\\
    & u_0 \equiv 0\,, \quad \bar{t} = 1\,.
\end{split}
\label{adiabatic-work-pde-tbar-u0}
\end{align}
The solution of (\ref{adiabatic-work-pde-tbar-u0}) can be directly computed:
\begin{align}
  u_0(\lambda, \bar{t}\,) = \int_{\bar{t}}^1 \mathbf{E}_{\bar{\lambda}(s)}
  \big(\nabla_\lambda V\big)\cdot \bar{f}(\bar{\lambda}(s), s)\,ds \,,
\end{align}
where $\bar{\lambda}(\cdot)$ satisfies (\ref{ode-scaled}) on $[\bar{t},
1]$ with $\bar{\lambda}(\bar{t}\,) = \lambda$. 
In particular, taking $\bar{t} = 0$ and applying the thermodynamic integration
identity (\ref{ti-identity}), gives
\begin{align}
  u_0(\lambda_0, 0) = \int_{0}^1 \mathbf{E}_{\bar{\lambda}(s)}
  \big(\nabla_\lambda V\big)\cdot \bar{f}(\bar{\lambda}(s), s)\,ds 
  = F(\lambda_1) - F(\lambda_0)\,.
\end{align}
Therefore, 
\begin{align*}
  &\lim_{T\rightarrow +\infty} \mathbf{E}_{\lambda_0, 0} \Big(\int_0^T
  \nabla_\lambda V(x, \lambda(s)) \cdot f(\lambda(s), s)\, ds\Big)  \\
  =& 
  \lim_{\tau\rightarrow 0} \mathbf{E}_{\mu_{\lambda_0}} \big(u(\cdot,
  \lambda_0, 0)\big)\\
  =& u_0(\lambda_0, 0) = F(\lambda_1) - F(\lambda_0)\,,
\end{align*}
which concludes the proof of (\ref{df-work-adiabatic}).
\section{Thermodynamic integration identity in the reaction coordinate case} 
\label{app-2}
In the reaction coordinate case considered in Section~\ref{sec-coordinate},
connections between the thermodynamic integration identity and the Jarzynski's equality as
well as the adiabatic switching regime can be studied using the same
asymptotic argument as in Appendix~\ref{app-1}.
In this section, we omit the derivation and only provide the thermodynamic integration identity. 
We emphasize that both the identity and its proof can be found in the
literature, e.g.,~\cite{tony-free-energy-compuation,effective_dynamics}. The
result is included for readers' convenience.

Recall the definition of the probability measure $\mu_z$ in (\ref{mu-z}), where the normalization constant 
is given by 
  \begin{align}
    Q(z) = \int_{\mathbb{R}^n} e^{-\beta V(y)} \delta\big(\xi(y) - z\big)\,dy
    \,,\quad z \in \mathbb{R}^d\,,
    \label{q-z-repeat}
  \end{align}
and the free energy is defined in~(\ref{free-energy-coordinate}).
Let $z(s) \in \mathbb{R}^d$ satisfy the ODE~(\ref{zt-ode}) on $[0,T]$.
Similar to the derivations in (\ref{ti-identity}), and using Lemma~\ref{lemma-q-derivative-coordinate} below, we can compute
\begin{align}
  \begin{split}
  & F(z(T)) - F(z(0)) \\
  =& -\beta^{-1} \ln \frac{Q(z(T))}{Q(z(0))}\\
  =& -\beta^{-1} \int_0^T \frac{d}{ds}\Big(
  \ln \frac{Q(z(s))}{Q(z(0))}\Big)\, ds\\
    =& -\beta^{-1} \int_0^T \Big(\frac{1}{Q}\frac{\partial Q}{\partial
    z_{\gamma}}\Big)\big(z(s)\big)\,\dot{z}_{\gamma}(s)\, ds\\
    =& \int_0^T  
    \mathbf{E}_{\mu_{z(s)}}\Big[(a\nabla \xi_{\gamma'})_i (\Psi^{-1})_{\gamma'\gamma} 
    \frac{\partial V}{\partial y_i}
    - \frac{1}{\beta} 
    \frac{\partial}{\partial y_i}\Big(
    (a\nabla\xi_{\gamma'})_i (\Psi^{-1})_{\gamma\gamma'}\Big)\Big]\,
    \dot{z}_\gamma(s)\,ds\,,
  \end{split}
\end{align}
where Einstein's summation convention has been used.
\begin{lemma}
  Let the function $Q$ be defined in (\ref{q-z-repeat}). For $1 \le \gamma \le
  d$, we have 
  \begin{align*}
\frac{\partial Q}{\partial z_{\gamma}}(z)
    = -\beta Q(z) \int_{\Sigma_z} \Big[(a\nabla \xi_{\gamma'})_i (\Psi^{-1})_{\gamma\gamma'} 
    \frac{\partial V}{\partial y_i}
    - \frac{1}{\beta} 
    \frac{\partial}{\partial y_i}\Big(
    (a\nabla\xi_{\gamma'})_i (\Psi^{-1})_{\gamma\gamma'}\Big)\Big]\,\mu_z(dy)\,.
  \end{align*}
  \label{lemma-q-derivative-coordinate}
\end{lemma}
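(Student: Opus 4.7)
The plan is to represent $\frac{\partial Q}{\partial z_\gamma}$ by introducing a vector field on $\mathbb{R}^n$ which is dual to $d\xi_\gamma$ with respect to the metric induced by $a^{-1}$, and then to integrate by parts against the Boltzmann weight $e^{-\beta V}$. Concretely, I would define
\begin{align*}
v_i(y) \;=\; (a\nabla\xi_{\gamma'})_i\,(\Psi^{-1})_{\gamma\gamma'}, \qquad 1\le i\le n,
\end{align*}
with the summation convention on $\gamma'$. From the definition (\ref{psi-ij}) of $\Psi$ one immediately verifies that $v\cdot\nabla\xi_{\gamma''}=\delta_{\gamma\gamma''}$, so that for every smooth $\phi:\mathbb{R}^d\to\mathbb{R}$ we have the pointwise identity $v\cdot\nabla[\phi(\xi(y))] = \frac{\partial\phi}{\partial z_\gamma}(\xi(y))$.

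Next, I would test $\frac{\partial Q}{\partial z_\gamma}$ against an arbitrary $\phi\in C_c^\infty(\mathbb{R}^d)$. Using the representation (\ref{q-z-repeat}), Fubini, and integration by parts in $z$,
\begin{align*}
\int_{\mathbb{R}^d}\phi(z)\frac{\partial Q}{\partial z_\gamma}(z)\,dz
= -\int_{\mathbb{R}^d}\frac{\partial\phi}{\partial z_\gamma}(z)\,Q(z)\,dz
= -\int_{\mathbb{R}^n}\frac{\partial\phi}{\partial z_\gamma}(\xi(y))\,e^{-\beta V(y)}\,dy.
\end{align*}
Using the dual vector field, the last integrand becomes $-v\cdot\nabla[\phi(\xi(y))]\,e^{-\beta V(y)}$, and integration by parts in $y$ (with boundary terms vanishing under the standard decay assumptions on $e^{-\beta V}$) transforms it into
\begin{align*}
\int_{\mathbb{R}^n}\phi(\xi(y))\,\partial_i\bigl(v_i\,e^{-\beta V(y)}\bigr)\,dy
= \int_{\mathbb{R}^n}\phi(\xi(y))\,\Bigl(\partial_i v_i - \beta v_i\partial_i V\Bigr)\,e^{-\beta V(y)}\,dy.
\end{align*}

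Finally I would invoke the co-area formula to disintegrate the $y$-integral along level sets of $\xi$, giving
\begin{align*}
\int_{\mathbb{R}^n}\phi(\xi(y))\,F(y)\,e^{-\beta V(y)}\,dy
= \int_{\mathbb{R}^d}\phi(z)\int_{\Sigma_z} F(y)\,e^{-\beta V(y)}\,\bigl[\det(\nabla\xi^T\nabla\xi)\bigr]^{-1/2}d\nu_z(y)\,dz,
\end{align*}
applied with $F = \partial_i v_i - \beta v_i\partial_i V$. Recognising the inner integral (after dividing by $Q(z)$) as $\int_{\Sigma_z} F\,d\mu_z$ via (\ref{mu-z}), and using the arbitrariness of $\phi$, I obtain
\begin{align*}
\frac{\partial Q}{\partial z_\gamma}(z)
= -\beta Q(z)\int_{\Sigma_z}\Bigl[v_i\partial_i V - \tfrac{1}{\beta}\partial_i v_i\Bigr]\mu_z(dy),
\end{align*}
which is the claimed identity once the expression for $v_i$ is substituted.

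The only delicate point is justifying the $y$-integration by parts, i.e.\ checking that no boundary contributions arise at infinity; this is standard under the usual growth hypotheses on $V$, $\sigma$ and $\xi$ already assumed throughout Section~\ref{sec-coordinate}. Everything else is bookkeeping: the key algebraic step is the construction of the vector field $v$, which is dictated by the requirement $v\cdot\nabla\xi_{\gamma''}=\delta_{\gamma\gamma''}$ in the $a^{-1}$-inner product, and this determines uniquely the coefficients $(\Psi^{-1})_{\gamma\gamma'}$ appearing in the final formula.
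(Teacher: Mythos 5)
Your proposal is correct and follows essentially the same route as the paper's proof: test against $\phi\in C_c^\infty(\mathbb{R}^d)$, integrate by parts in $z$, rewrite $\partial_{z_\gamma}\phi(\xi(y))$ using the $a$-dual vector field $v_i=(a\nabla\xi_{\gamma'})_i(\Psi^{-1})_{\gamma\gamma'}$ (which is exactly the paper's chain-rule identity combined with the definition of $\Psi$), integrate by parts in $y$, and disintegrate via the co-area formula. The only cosmetic difference is that you package the key algebraic step as the duality relation $v\cdot\nabla\xi_{\gamma''}=\delta_{\gamma\gamma''}$, which the paper writes out componentwise.
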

\begin{proof}
  Let $\varphi : \mathbb{R}^d \rightarrow \mathbb{R}$ be a smooth test
  function with compact support. For $1 \le \gamma \le d$, integrating by
  parts and using (\ref{q-z-repeat}), we have 
  \begin{align}
    \int_{\mathbb{R}^d} \varphi(z) \frac{\partial Q}{\partial z_{\gamma}}(z) \, dz 
    =
    -\int_{\mathbb{R}^d}  \frac{\partial \varphi}{\partial z_{\gamma}}(z) Q(z)\, dz 
  =
    -\int_{\mathbb{R}^n}  \frac{\partial \varphi}{\partial
    z_{\gamma}}(\xi(y)) e^{-\beta V(y)}\, dy \,.
    \label{q-diff-lemma-eqn1}
  \end{align}
  On the other hand, from the relation $$\frac{\partial
  \big(\varphi\circ\xi\big)}{\partial y_i}(y) = \frac{\partial \varphi}{\partial
  z_{\gamma'}}(\xi(y))\frac{\partial \xi_{\gamma'}}{\partial y_i}(y), \quad 1
  \le i \le n\,,$$ 
  and the definition of the $d \times d$ matrix $\Psi$ in (\ref{psi-ij}), we obtain
  \begin{align}
\frac{\partial \varphi}{\partial
    z_{\gamma}}(\xi(y))  = 
    \Big[\frac{\partial
    \big(\varphi\circ\xi\big)}{\partial y_i} a_{ij} 
    \frac{\partial \xi_{\gamma'}}{\partial y_j}
    (\Psi^{-1})_{\gamma\gamma'}\,\Big](y)\,.
  \end{align}
Therefore, integrating by parts, (\ref{q-diff-lemma-eqn1}) simplifies to 
\begin{align*}
  &\int_{\mathbb{R}^d} \varphi(z) \frac{\partial Q}{\partial z_{\gamma}}(z)
    \, dz \\
    =& \int_{\mathbb{R}^n} 
    \varphi(\xi(y)) 
  \frac{\partial}{\partial y_i}\Big(a_{ij} \frac{\partial
  \xi_{\gamma'}}{\partial y_j} (\Psi^{-1})_{\gamma\gamma'} e^{-\beta
  V(y)}\Big)\, dy \, \\
  = &  \int_{\mathbb{R}^d} \varphi(z) 
  \Big[
  \int_{\mathbb{R}^n}
  \frac{\partial}{\partial y_i}\Big(a_{ij} \frac{\partial
  \xi_{\gamma'}}{\partial y_j} (\Psi^{-1})_{\gamma\gamma'} e^{-\beta
  V(y)}\Big)\, \delta(\xi(y) - z)dy \Big] dz\, ,
\end{align*}
from which we can conclude after simplification.
\end{proof}

\section{An alternative proof of Theorem~\ref{thm-fluct-relation}} 
\label{app-3}
In this appendix, we provide an alternative proof of Theorem~\ref{thm-fluct-relation}. 
Different from the proof in Subsection~\ref{subsec-fluct-thm} where only the Feynman-Kac formula has been used, the proof below relies on the combination
of both the Feynman-Kac formula and Girsanov's Theorem. While the idea is inspired by the
derivations in~\cite{Chetrite2008}, the proof below is shorter.
\begin{proof}[Alternative proof of Theorem~\ref{thm-fluct-relation}]
  First of all, we recall the definition of $u$ in (\ref{u-exp-form}) as
  well as the equations 
  (\ref{l-reversed}), (\ref{pde-u-forward}), (\ref{l-r-trans}) used in the
  proof of Theorem~\ref{thm-fluct-relation} in Subsection~\ref{subsec-fluct-thm}. 
  In accordance with (\ref{l-r-trans}), we define 
\begin{align}
  \overline{\mathcal{L}} = \Big(J + a\nabla V +
  \frac{1}{\beta} \nabla\cdot a\Big) \cdot \nabla + \frac{1}{\beta} a :
  \nabla^2 + f \cdot \nabla_\lambda + \epsilon\,\alpha\alpha^T:\nabla^2_\lambda\,,
  \label{l-bar}
\end{align}
and consider the function $\omega(x, \lambda,t) = u\big( x,
\lambda, T-t\,;x',\lambda',t'\big)$. From (\ref{pde-u-forward}) and (\ref{l-r-trans}), we know
that $\omega$ satisfies 
\begin{align}
  \begin{split}
    &\frac{\partial \omega}{\partial t} +
    \overline{\mathcal{L}}_{(x,\lambda,t)}
    \omega + \Big[\mbox{div}(J + a\nabla V) + \mbox{div}_\lambda 
      \Big(f -\epsilon \nabla_\lambda\cdot (\alpha\alpha^T)\Big) + \eta
    \Big]\omega = 0\,, \quad \forall t \in [0, T-t')\,,\\
      & \omega(x,\lambda,t) = \delta(x'-x)\delta(\lambda'-\lambda)\,,\quad t = T-t'\,,
  \end{split}
  \label{eqn-omega}
\end{align}
where $(x,\lambda) \in \mathbb{R}^n \times \mathbb{R}^m$ and
$\overline{\mathcal{L}}_{(x,\lambda,t)}$ is the operator (\ref{l-bar}) evaluated at
$(x,\lambda,t)$. 
On the other hand, applying the Feynman-Kac formula to (\ref{eqn-omega}), we observe
that 
\begin{align}
  \begin{split}
   \omega(x, \lambda,t) =&
    \overline{\mathbf{E}}_{x,\lambda,t}\bigg[\exp\bigg(\int_{t}^{T-t'}
      \Big(\mbox{div}\big(J + a\nabla
      V\big) + \mbox{div}_\lambda \big(f -\epsilon \nabla_\lambda\cdot (\alpha\alpha^T)\big) + \eta\Big)\big(\bar{x}(s), \bar{\lambda}(s),s\big)
      ds\bigg)\\
    & \times
  \delta\big(\bar{x}(T-t')-x'\big)\delta\big(\bar{\lambda}(T-t')-\lambda'\big)\bigg]\,,
\end{split}
\label{omega-exp-exp}
\end{align}
where $\overline{\mathbf{E}}_{x,\lambda,t}$ denotes the conditional expectation under
the path ensemble of the dynamics 
\begin{align}
  d \bar{x}(s)  =& \Big(J+a\nabla V + \frac{1}{\beta}\nabla \cdot a\Big)\big(\bar{x}(s),
  \bar{\lambda}(s)\big)\, ds + \sqrt{2\beta^{-1}} \sigma\big(\bar{x}(s),
  \bar{\lambda}(s)\big)\,dw^{(1)}(s)
  \label{aux-y}
\end{align}
and the control protocol
\begin{align}
  d\bar{\lambda}(s) =& f(\bar{x}(s), \bar{\lambda}(s), s)\, ds +
  \sqrt{2\epsilon}\, \alpha\big(\bar{x}(s), \bar{\lambda}(s),s\big) dw^{(2)}(s)\,, \label{aux-lambda}
\end{align}
starting from $\bar{x}(t) = x$ and $\bar{\lambda}(t) = \lambda$ at
time $t$.
Note that the infinitesimal generator of the dynamics (\ref{aux-y}) and
(\ref{aux-lambda}) is given by the operator $\overline{\mathcal{L}}$ in (\ref{l-bar}).

Now we apply Girsanov's theorem to change the probability measure in (\ref{omega-exp-exp}) from the path ensemble
of the dynamics (\ref{aux-y}), (\ref{aux-lambda}) to the path ensemble of the dynamics
(\ref{dynamics-1-q-vector}), (\ref{lambda-dynamics-full}). Specifically,
starting from $(x, \lambda)$ at time $t$,
let $\mathbf{P}_{x, \lambda}$ and $\overline{\mathbf{P}}_{x,\lambda}$ denote the path
measures on the time interval $[t,T-t']$ corresponding to (\ref{dynamics-1-q-vector}),
(\ref{lambda-dynamics-full}) and (\ref{aux-y}), (\ref{aux-lambda}), respectively.
Applying Girsanov's theorem, we obtain after some straightforward calculations 
\begin{align}
  \begin{split}
    \frac{d\mathbf{P}_{x,\lambda}}{d \overline{\mathbf{P}}_{x,\lambda}}
    \big(x(\cdot), \lambda(\cdot)\big)
  = & \exp\bigg[-\beta \int_{t}^{T-t'} \nabla V\big(x(s), \lambda(s)\big) \cdot dx(s) \\
    & + \beta \int_{t}^{T-t'}
\Big(\nabla V \cdot \big(J + \frac{1}{\beta}\nabla \cdot
a\big)\Big)\big(x(s),\lambda(s)\big)\, ds \bigg]\,.
\end{split}
\label{density-ratio}
\end{align}
Therefore, changing the probability measure in (\ref{omega-exp-exp}) from
$\overline{\mathbf{P}}_{x,\lambda}$ to $\mathbf{P}_{x,\lambda}$, using (\ref{density-ratio}), (\ref{div-j-zero}), we find
\begin{align*}
  & u( x, \lambda,T-t) =  \omega( x, \lambda,t) \\
  =& \mathbf{E}_{x,\lambda,t}\bigg[\exp\bigg(\int_{t}^{T-t'}
  \Big(\mbox{div}\big(J + a\nabla
  V\big) + \mbox{div}_\lambda \big(f -\epsilon \nabla_\lambda\cdot (\alpha\alpha^T)\big)
  + \eta \Big)\big(x(s), \lambda(s),s\big) ds\bigg) \\
  & \times \delta\big(x(T-t')-x'\big) \delta\big(\lambda(T-t')-\lambda'\big)
  \frac{d\overline{\mathbf{P}}_{x,\lambda}}{d\mathbf{P}_{x,\lambda}}
  \big(x(\cdot), \lambda(\cdot)\big) \bigg] \\
  =& \mathbf{E}_{x,\lambda,t}\bigg[\exp\bigg(
  \beta \int_{t}^{T-t'} \nabla V(x(s), \lambda(s)) \cdot dx(s) +  
  \int_{t}^{T-t'} \big(a : \nabla^2 V\big)\big(x(s), \lambda(s)\big) ds\\
& 
  + \int_{t}^{T-t'} \Big(\mbox{div}_\lambda \big(f -\epsilon
  \nabla_\lambda\cdot (\alpha\alpha^T)\big) + \eta\Big) \big(x(s), \lambda(s),s\big) ds\bigg)
\delta\big(x(T-t')-x'\big)\,\delta\big(\lambda(T-t')-\lambda'\big) \bigg]\\
  =& \mathbf{E}_{x,\lambda,t}\bigg[\exp\bigg(
  \beta \int_{t}^{T-t'} \nabla V\big(x(s), \lambda(s)\big) \circ dx(s)
  + \int_{t}^{T-t'} \Big(\mbox{div}_\lambda \big(f -\epsilon \nabla_\lambda\cdot
  (\alpha\alpha^T)\big) + 
  \eta\Big) \big(x(s), \lambda(s),s\big)
  ds\bigg)\\
&\times \delta\big(x(T-t')-x'\big)\,\delta\big(\lambda(T-t')-\lambda'\big)
\bigg]\,.
\end{align*}
Note that in the last equality above, 
we have converted Ito integration to Stratonovich integration according to (\ref{dynamics-1-q-vector}). 
Substituting $t$ by $T-t$, integrating by parts, and recalling the expression (\ref{u-exp-form}), we obtain 
\begin{align*}
  \begin{split}
  &e^{-\beta V(x',\lambda')}\,
    \mathbf{E}^R_{x',\lambda',t'}\bigg[
    \exp\bigg(\int_{t'}^t \eta(x^R(s), \lambda^R(s), T-s) ds\bigg) 
    \delta\big(x^R(t)-x\big)\delta\big(\lambda^{R}(t)-\lambda\big)\bigg]\\
    =&e^{-\beta V(x,\lambda)}\,\mathbf{E}_{x,\lambda,T-t}\bigg[e^{-\beta
  \mathcal{W}}
  \exp\bigg(\int_{T-t}^{T-t'} \eta(x(s), \lambda(s), s) ds\bigg) 
    \delta\big(x(T-t')-x'\big)\delta\big(\lambda(T-t')-\lambda'\big) \bigg]\,,
  \end{split}
\end{align*}
where $\mathcal{W}$ is defined in (\ref{w-div-f}).
\end{proof}
\section{Proof of Theorem~\ref{thm-fluct-relation-coordinate}} 
\label{app-4}
In this appendix, we provide the proof of
Theorem~\ref{thm-fluct-relation-coordinate} in Subsection~\ref{subsec-fluct-thm-coordinate}. 
\begin{proof}[Proof of Theorem~\ref{thm-fluct-relation-coordinate}]
  We consider the quantities on both sides of the equality
  (\ref{fluct-relation-coordinate}).
  For the left hand side of (\ref{fluct-relation-coordinate}), 
let us fix $(y',t') \in  \mathbb{R}^n \times [0,T]$ and define the function $u$ by 
\begin{align}
u\big(y,t\,;y',t'\big) = 
\mathbf{E}_{y',t'}^R\bigg[
\exp\bigg(\int_{t'}^t \eta\big(y^R(s), T-s\big) ds\bigg) 
\delta\big(y^R(t)-y\big)\bigg]\,,
\label{u-exp-form-coordinate}
\end{align}
for $(y,t) \in  \mathbb{R}^n \times [t',T]$.
It is known that $u$ satisfies the PDE 
\begin{align}
  \begin{split}
  &\frac{\partial u}{\partial t} = \big(\mathcal{L}^R\big)^* u
  + \eta(y,T-t) \,u \,, \quad \forall~(y, t) \in  \mathbb{R}^n\times (t',T] \,,\\
  & u(y, t\,;y',t')=\delta(y-y')\,, \quad \mbox{if}~~t=t'\,,
  \end{split}
  \label{pde-u-forward-coordinate}
\end{align}
where the operator $\mathcal{L}^R$ is defined in
  (\ref{l-reversed-coordinate}) and 
$\big(\mathcal{L}^R\big)^*$ denotes its formal $L^2$ adjoint. A direct calculation shows that 
\begin{align}
  \begin{split}
    \big(\mathcal{L}^R\big)^*\phi = & \bigg[\frac{\partial}{\partial y_i}\Big((Pa)_{ij} \frac{\partial V}{\partial y_j}\Big)+ 
    \frac{\partial}{\partial y_i} \Big((\Psi^{-1})_{\gamma\gamma'} (a\nabla
    \xi_\gamma)_i f^{-}_{\gamma'}\Big) \bigg]\phi \\
    & + \bigg[
    (Pa)_{ij} \frac{\partial V}{\partial y_j} + \frac{1}{\beta}
    \frac{\partial (Pa)_{ij}}{\partial y_j} + (\Psi^{-1})_{\gamma\gamma'}
    (a\nabla \xi_\gamma)_i f^{-}_{\gamma'} \bigg] \frac{\partial \phi}{\partial
    y_i} + \frac{1}{\beta} (Pa)_{ij} \frac{\partial^2 \phi}{\partial y_i \partial y_j} \,,
  \end{split}
  \label{l-r-trans-coordinate}
\end{align}
for a smooth function $\phi$.  

For the right hand side of (\ref{fluct-relation-coordinate}), fixing 
  $(y', t') \in \mathbb{R}^n \times [0, T]$, we define the function $g$ for
  $(y,t) \in \mathbb{R}^n \times [t',T]$ as
\begin{align*}
  g(y,t) = 
  \mathbf{E}_{y,T-t}\bigg[&e^{-\beta \mathcal{W}}
\exp\bigg(\int_{T-t}^{T-t'} \eta\big(y(s), s\big) ds\bigg) \delta\big(y(T-t')-y'\big) \bigg]\,,
\end{align*}
where $\mathcal{W}$ is defined in (\ref{w-coordinate}), and the dynamics
  $y(\cdot)$ satisfies the SDE (\ref{dynamics-f}). 
Using the same argument as in Lemma~\ref{lemma-g}, we can verify that 
$g$ satisfies the PDE
\begin{align}
  \begin{split}
    &\frac{\partial g}{\partial t} = \overline{\mathcal{L}}\, g +
    \eta(\cdot, T-t) g \,,\qquad
  \forall\, (y,t) \in  \mathbb{R}^n \times (t',T] \,,\\
  & g(y,t) = \delta(y-y') \,, \qquad \mbox{if}~~t=t'\,,
\end{split}
\label{g-delta-coordinate}
\end{align}
where the operator $\overline{\mathcal{L}}$ is defined as 
\begin{align}
  \begin{split}
  \overline{\mathcal{L}}\,\phi =& 
    \bigg[-\beta (\Psi^{-1})_{\gamma\gamma'} (a\nabla \xi_\gamma)_i f^-_{\gamma'} \frac{\partial
    V}{\partial y_i} + \frac{\partial}{\partial y_i}
    \Big((\Psi^{-1})_{\gamma\gamma'} (a\nabla \xi_\gamma)_i
    f^-_{\gamma'}\Big)\bigg] \phi\\
    & + \mathcal{L}^{\perp} \phi + (\Psi^{-1})_{\gamma\gamma'}
    (a\nabla\xi_\gamma)_if^-_{\gamma'} \frac{\partial \phi}{\partial y_i} 
\end{split}
    \label{bar-l-coordinate}
\end{align}
for a smooth function $\phi$. Now consider the function
$\omega(y,t) = e^{-\beta V(y)} g(y,t)$. A direct calculation shows that 
\begin{align}
  \begin{split}
    e^{-\beta V} \mathcal{L}^{\perp} g = &
    e^{-\beta V} 
      \bigg[-(Pa)_{ij} \frac{\partial V}{\partial y_j}\frac{\partial \big(e^{\beta
      V}\omega\big)}{\partial y_i} +
  \frac{1}{\beta} \frac{\partial (Pa)_{ij}}{\partial y_j}
    \frac{\partial  \big(e^{\beta V}\omega\big)}{\partial y_i} + \frac{1}{\beta} (Pa)_{ij}
    \frac{\partial^2 \big(e^{\beta V}\omega\big)}{\partial y_i\partial
    y_j}\bigg]\,\\
    = & \bigg[\frac{\partial}{\partial y_i}\Big((Pa)_{ij} \frac{\partial
    V}{\partial y_j}\Big)\bigg]\omega  + 
    \bigg[(Pa)_{ij} \frac{\partial V}{\partial y_j} + \frac{1}{\beta}
    \frac{\partial (Pa)_{ij}}{\partial y_j}\bigg] \frac{\partial \omega}{\partial
    y_i} + \frac{1}{\beta} (Pa)_{ij} \frac{\partial^2 \omega}{\partial
    y_i\partial y_j}\,, \\
    e^{-\beta V} \frac{\partial g}{\partial y_i} = &
    e^{-\beta V} \frac{\partial \big(e^{\beta V} \omega\big)}{\partial y_i} = \beta 
    \frac{\partial V}{\partial y_i}\omega + \frac{\partial \omega}{\partial y_i} \,.
  \end{split}
  \label{e-v-l-coordinate}
\end{align}
Combining (\ref{l-reversed-coordinate}), (\ref{g-delta-coordinate}),
(\ref{bar-l-coordinate}),
(\ref{e-v-l-coordinate}), it follows that the function $\omega$ satisfies the PDE
\begin{align*}
  &\frac{\partial \omega}{\partial t} = e^{-\beta V}
  \Big[\overline{\mathcal{L}}\,g + \eta(\cdot, T-t) g\Big]=
\big(\mathcal{L}^R\big)^*
  \,\omega + \eta(y,T-t)\,\omega \,,\quad \forall\,(y, t) \in  \mathbb{R}^n \times (t',T]  \,,\\
  &\omega(y,t) = e^{-\beta V(y')} \delta(y-y')\,,\quad \mbox{if}~~ t=t'\,.
\end{align*}
Comparing this with the equation of function $u$ in (\ref{pde-u-forward-coordinate}),
we obtain $$e^{-\beta V(y')} u(y,t\,;y',t') =
\omega(y,t),$$ which is equivalent to (\ref{fluct-relation-coordinate}). 
\end{proof}
\renewcommand{\bibsection}{\subsection*{\large References}}  
\bibliographystyle{abbrv}
\bibliography{reference}
\end{document}